\documentclass[10pt]{article}
\usepackage{amsmath}
\usepackage[utf8]{inputenc}
\usepackage[T1]{fontenc}
\usepackage{accents}

\usepackage[margin=1in]{geometry}
\usepackage{bm}
\usepackage{amssymb}
\usepackage[colorlinks]{hyperref}
\usepackage{subcaption}
\usepackage{tikz}
\usepackage{float}
\usepackage[numbers,sort&compress]{natbib}
\usepackage{booktabs}
\usepackage{threeparttable}
\usepackage{caption}
\usepackage[ruled]{algorithm2e}
\usepackage{algorithmic}
\usepackage{enumitem}
\usepackage{authblk}
\usepackage[multiple]{footmisc}

\usepackage{titlesec}
\setcounter{secnumdepth}{4}

\newcommand{\ubar}[1]{\underaccent{\bar}{#1}}

\DeclareMathOperator*{\argmax}{arg\,max}
\usepackage{amsthm}

\newtheorem{theorem}{Theorem}

\newtheorem{proposition}[theorem]{Proposition}

\newtheorem{corollary}{Corollary}[theorem]

\pdfoutput=1

\title{Ensemble approximate control variate estimators: Applications to multi-fidelity importance sampling}
\author[a]{Trung Pham\footnote{Corresponding authors.}\footnote{Email addresses: \href{mailto:trungp@umich.edu}{trungp@umich.edu} (T. Pham), \href{mailto:goroda@umich.edu}{goroda@umich.edu} (A.A. Gorodetsky)}}
\author[a]{Alex A. Gorodetsky$^{*\dagger}$}
\affil[a]{\small Department of Aerospace Engineering, University of Michigan, Ann Arbor, MI, 48109, USA}
\date{\vspace{-7ex}}
\newcommand{\reals}{\mathbb{R}}

\newcommand{\prob}{P}
\newcommand{\ee}[2]{\mathbb{E}_{#1}\left[#2\right]}

\newcommand{\vvar}[2]{\mathbb{V}\text{ar}_{#1}\left[#2\right]}
\newcommand{\cov}[2]{\mathbb{C}\text{ov}_{#1}\left[#2\right]}

\providecommand{\keywords}[1] {
    \small	
    \textbf{Keywords.} #1
}

\begin{document}

\maketitle
\begin{abstract}
    The recent growth in multi-fidelity uncertainty quantification has given rise to a large set of variance reduction techniques that leverage information from model ensembles to provide variance reduction for estimates of the statistics of a high-fidelity model. In this paper we provide two contributions: (1) we utilize an ensemble estimator to account for uncertainties in the optimal weights of approximate control variate (ACV) approaches and derive lower bounds on the number of samples required to guarantee variance reduction; and (2) we extend an existing multi-fidelity importance sampling (MFIS) scheme to leverage control variates. As such we make significant progress towards both increasing the practicality of approximate control variates---for instance, by accounting for the effect of pilot samples---and using multi-fidelity approaches more effectively for estimating low-probability events. The numerical results indicate our hybrid MFIS-ACV estimator achieves up to 50\% improvement in variance reduction over the existing state-of-the-art MFIS estimator, which had already shown outstanding convergence rate compared to the Monte Carlo method, on several problems of computational mechanics.
\end{abstract}

\keywords{multi-fidelity, uncertainty quantification, approximate control variates, importance sampling, rare-event simulation}

\section{Introduction}

        This paper develops an advancement to the approximate control variate (ACV)~\cite{gorodetsky_generalized_2020} approach for variance reduction in uncertainty quantification applications where multiple models with varying qualities and computational costs are available. Specifically, we analyze the affect of using estimated control variate weights within the ACV on variance reduction, and we provide conditions under which variance reduction can still be guaranteed in these cases.  Multi-fidelity approaches for uncertainty quantification have recently seen significant adoption across wide varying domains where expensive simulations are required for accurate predictions. These domains include heat transfer problems \cite{doostan_bi-fidelity_2016}, aerospace design \cite{geraci_multifidelity_2017}, optimization under uncertainty \cite{chaudhuri_multifidelity_2018,ng_multifidelity_2014}, and ensemble of computer simulator outputs \cite{goh_prediction_2013}. A survey of multi-fidelity methods is presented in \cite{peherstorfer_survey_2018}.

    As with single fidelity UQ, multi-fidelity (MF) UQ techniques can leverage both surrogate and sampling-based algorithmic approaches. While surrogate-based techniques are plentiful \cite{peherstorfer_multifidelity_2016,peherstorfer_combining_2017,kramer_multifidelity_2019} we focus on sampling approaches that are often both more flexible to leverage and also provide the foundation for many surrogate approaches (e.g., estimating where to obtain more data). The basis of a majority of sampling approaches is the usage of Monte Carlo (MC) simulation~\cite{fishman2014monte} to estimate the output statistics. The primary advantage of MC over surrogate approaches is that it does not impose any requirement on the smoothness of the forward model, and its accuracy and convergence rate are independent of the model dimension. Nevertheless, its convergence rate is also slow, demanding a large number of model evaluations to reach the satisfactory accuracy. This computational cost can be prohibitive for many practical problems with expensive simulation models. A straightforward error analysis in \cite{glasserman2004monte} reveals that the efficiency of MC simulation can be greatly improved by variance reduction methods, which, as explained in \cite{rubinstein2016simulation}, ``can be viewed as a means of utilizing known information about the model in order to obtain more accurate estimators of its performance.'' Two such methods are heavily used for quantifying uncertainty: importance sampling (IS) and control variates (CV). We explore the adaptation and advancement of certain aspects of these two approaches to multi-fidelity uncertainty quantification problems in this paper.
    
    Adapting standard statistical approaches for variance reduction in the context of UQ problems must address special challenges. The primary challenge is that the relationships about and between low-fidelity and high-fidelity models are unknown or imprecisely encoded. For instance, CV techniques require the low-fidelity models to have known means and known covariance amongst models. However, the low-fidelity information sources in uncertainty quantification typically are in the form of simulation models---their statistics are not known but simply easier to compute than the high-fidelity model. Thus, algorithms that adapt IS and CV to these problems must determine and model the relationship between such information sources.
    
            A number of multi-fidelity techniques leveraging importance sampling have also been proposed. Importance sampling approaches generate weighted samples from a biasing distributions. A prudent choice of biasing distribution can lead to a drastic reduction of computation cost~ \cite{srinivasan2013importance,bucklew2004introduction}. MFUQ techniques that use IS are based on the idea that the distributions of low-fidelity quantity of interest (QoI) are closely related to the high-fidelity QoI, even if their pointwise evaluations have errors. In \cite{peherstorfer_multifidelity_2016} the IS density is constructed from a single surrogate model built on the high-fidelity model, and the multi-fidelity importance sampling (MFIS) estimator is basically an IS estimator using that surrogate-based density. In \cite{peherstorfer_combining_2017} multiple low-fidelity surrogate-based IS densities are aggregated to derive an estimator following the mixed IS approach~\cite{art_owen_2000}. In \cite{kramer_multifidelity_2019} multiple surrogate-based IS
            estimators are fused into a weighed ensemble estimator, and the weights are determined through minimizing the variance of the fused estimator.
            Though these estimators have achieved impressive speedups, further improvement can still be possible. Particularly, they can be enhanced by also using the low-fidelity models as control variates, not just for bias distribution construction.
            
            Control variate techniques introduce a weighted adjustment term to maintain an unbiased estimator with lower variance. These techniques leverage correlations between information sources to achieve variance reduction. Examples include CV~\cite{rubinstein2016simulation}, approximate CV~\cite{gorodetsky_generalized_2020}, multi-level MC (MLMC)~\cite{giles_2015,giles_multi-level_2020}, multi-index MC (MIMC)~\cite{haji-ali_multi-index_2016}, multi-fidelity MC~\cite{geraci_multifidelity_2017}, and multi-level multi-fidelity (MLMF) MC~\cite{fleeter_multilevel_2020}.
                
                The classical CV technique requires low-fidelity information sources with known mean and known correlation to high-fidelity information sources~\cite{rubinstein2016simulation}, thus making it unsuitable for direct application to the MFUQ problem. Work to extend control variates to the case of unknown covariances was achieved in~\cite{lavenberg_perspective_1981} by leveraging an ensemble estimator that returns an average over a set of CV estimators. However, this estimator still required known means of the low-fidelity information sources.
                
                On the other hand, approximate control variates~\cite{gorodetsky_generalized_2020} were recently developed to tackle the problem of unknown means in the case where the covariance amongst models was known (or easily estimated). This estimator was shown to generalize existing control-variate inspired  techniques such as MFMC and MLMF~\cite{ng_multifidelity_2014,pasupathy_control-variate_2012,peherstorfer_multifidelity_2016, geraci_multifidelity_2017}. Indeed these existing techniques were viewed as either recursive difference or recursive nested estimators that sequentially estimate various unknown means of the low-fidelity sources. Moreover the ACV provides a method to achieve optimal variance reduction in the case of limited high-fidelity resources and increasing low-fidelity resources.
                The ACV can also be viewed as a particular generalization of MLMC and MIMC.  The difference is that these approaches embed recursive difference estimators with a fixed control variate weight ($-1$) within a bias-reduction scheme that sequentially refines the high-fidelity model. Indeed the typical MLMC-type estimators have achieved great success for variance reduction in cases where information sources are related through discretization refinement. Moreover, they do not require estimation of covariances or control variate weights because they use a fixed CV weight.
                
             Nevertheless~\cite{gorodetsky_generalized_2020} showed that the fixed weight of MLMC ($-1$) is only optimal in the case where the correlation between sequential models is one. In all other cases, there exist weights that can further improve variance reduction. However, it has not been clear that this benefit can be truly realized because of the complexity of additional variance introduced by weight estimation from pilot samples. This paper takes the first steps to answer this question by providing guidelines into choosing the number of pilot samples required to achieve reduction.

        Our primary contribution begins to bridge the remaining gap between ACV theory and practice by considering unknown means and unknown correlations (and therefore unknown optimal weights). It does so by adapting the ACV into an ensemble estimator, similar to obtaining pilot samples, and analyzing its performance. These main results are 
            \begin{enumerate}
                \item Theorems~\ref{thm:var_estimators} and \ref{thm:var_cv_sample_weight}: providing the inefficiency induced by unknown correlations as a scaling on the typical $(1-R^2)$ term for ACV-based estimators (ensemble estimator)
                \item Corollary~\ref{thm:lower_bounds} providing for the number of samples required for achieving variance reduction (ensemble estimator)
            \end{enumerate}
            Our derivation of the lower bounds depends on the Gaussianity consumption of the MC estimators and the construction of ensemble estimators which are based on the method of batch means \cite{lavenberg_perspective_1981,nelson_control_1990,nelson_batch_1989}. These bounds are known given prior knowledge on the correlation amongst models, which can be obtained from physics information when possible or lower-bounded to be conservative.
        Two secondary contributions include
            \begin{enumerate}
                \item Theorem~\ref{thm:range_ACV}: explicitly specifies the range of weights over which one still obtains variance reduction, even under errors in approximating the optimal control variate weight
                \item  A new multi-fidelity IS estimator that combines MFIS approach of using the low fidelity model for determining the biasing distribution with a CV estimator that also uses the low fidelity model to leverage its correlations
            \end{enumerate}
        Finally, empirical demonstrations are performed for rare-event estimation with an emphasis on mechanical systems. Multiple examples demonstrate the ensemble estimator theory as well as show improvement over the MFIS estimator.
    
    The rest of this paper is organized as follows. Section~\ref{s:estimators} reviews the MC, IS, and CV estimators along with our target application of rare-event (low-probability-event) simulation. Section~\ref{s:var_cv_e_sample_w} lists our main results on variance reduction via the proposed ensemble-based ACV estimator. The proofs can be found in the appendices. Section~\ref{s:mf_e} introduces how to simply use importance sampling as the underlying estimator within the control variate framework and provides a step-by-step implementation of our estimators in pseudocode. The numerical examples in Section~\ref{s:results} demonstrate our theoretical results by providing empirical performance on synthetic and PDE-based problems. It provides comparison to the MFIS estimator, and comparisons between ensemble and standalone ACV estimators. The paper is concluded in Section~\ref{s:conclusions}.

\section{Background}
\label{s:estimators}

In this section, we review our notation and provide background about the main statistical estimators which are gathered in Table~\ref{tbl:Estimators} for convenience.
\begin{table}[h]
    \centering
    \begin{threeparttable}
    \begin{tabular}{ l l l }
    \toprule
        \bf{Notes} & \textbf{Notation}\tnote{a} & \bf{Equation} \\
    \cmidrule{1-3}
        Monte Carlo & $\mathcal{Q}_n$ & \eqref{e:mc_estimator} \\
    \cmidrule{1-3}
        Importance sampling & $\mathcal{Q}^{\text{IS}}_{q,n}$ & \eqref{e:IS_estimator} \\
    \cmidrule{1-3}
        Classical control variates \\ (known mean and known weight) & $\mathcal{Q}^{\text{CV}}$ & \eqref{e:cv_estimator} \\
     \cmidrule{1-3}
        Approximate control variates \\ (unknown mean and known weight) & $\mathcal{Q}^{e_1}$ & \eqref{e:acv_estimator} \\
     \cmidrule{1-3}
        Ensemble (A)CV-type estimator \\ ( (un)known mean, unknown weight) & $\bar{\mathcal{Q}}^{e_2}(\ubar{\bm{\alpha}}_{e_2})$ &
        \eqref{e:cv_sample_mean} and \eqref{e:acv_sample_mean} \\
    \bottomrule
    \end{tabular}
    \begin{tablenotes}\footnotesize
        \item[a] $e_1 \in \left\{\text{ACV},\text{ACV-X}\right\}$ and $e_2 \in \left\{\text{CV},\text{ACV},\text{ACV-X}\right\}$, where ACV-X indicates a particular sample partitioning scheme used within the ACV.
    \end{tablenotes}
    \caption{List of estimators}
    \label{tbl:Estimators}
    \end{threeparttable}
\end{table}

Let $(\Omega, \mathcal{F}, \prob)$ be a probability space. Let $\mathbb{N}_0$ denote the positive natural numbers, $\reals$ the real number, and $\reals_{+}$ the positive real numbers. For $d \in \mathbb{N}_0$ we use $Z:\Omega \to \reals^d$ to denote a $\mathcal{F}$-measurable  random variable corresponding to input uncertainty. This variable is assumed to be continuous and have probability density function (PDF) $p(\bm{z}):\reals^d \to \reals_{+}$, with $\bm{z} \in \reals^d$ denoting a realization of $Z$. Let $\text{supp}(p)$ be the support of the PDF $p$, i.e., $\text{supp}(p)=\{\bm{z} \in \reals^d, p(\bm{z})>0\}$. Let $Y_i(Z): \reals^d \to \mathcal{R}_i \subset \reals$ for $i=0,1,\ldots,M,$ denote quantities of interest of a high-fidelity model $Y_0$ and $M$ low-fidelity models $\{Y_i\}_{i=1}^M$, respectively. Let the expected values of those models be denoted by $\mu_i=\mathbb{E}_p[Y_i(Z)]$ where $\mathbb{E}_p[\cdot]$ is the expectation operator taken with respect to $p$; for brevity, whenever the PDF $p$ is omitted, it is implicitly assumed, i.e., the expectation $\mathbb{E}[\cdot]$, the variance $\mathbb{V}\text{ar}[\cdot]$ and the covariance $\mathbb{C}\text{ov}[\cdot]$ are computed with respect to $p$. Our goal is to estimate $\mu_0.$%

\subsection{Target application: rare event estimation}
\label{ss:application}
In this section we describe the rare event estimation problem for which we seek to apply variance reduction in Section~\ref{s:results}.
Let $g: \reals^d \to \reals$ denote the limit state function. A failure event is defined by $g(\bm{z})<0$ with the corresponding failure domain
\begin{equation*}
    \mathcal{G}=\left\{ \bm{z} \in \reals^d : g(\bm{z})<0 \right\}.
\end{equation*}
The failure probability $\prob_f(g(\bm{z})<0)$ can be cast as an expectation by considering the indicator function defined on the failure domain $\mathcal{I}_{\mathcal{G}}: \reals^d \to \{0,1\}$ as
\begin{equation*}
    \mathcal{I}_{\mathcal{G}}(\bm{z}) = 
    \begin{cases}
        1, & \bm{z} \in \mathcal{G}, \\
        0, & \text{else}
    \end{cases}.
\end{equation*}
Using this indicator function, the failure probability $\prob_f$ is given by
\begin{equation*}
    \prob_f(g(\bm{z})<0) = \mathbb{E}\left[ \mathcal{I}_{\mathcal{G}}(Z) \right].
\end{equation*}

\subsection{Monte Carlo estimator}
In this section we review the basic Monte Carlo (MC) estimator.
The MC estimator is defined as a normalized sum of random variables
\begin{equation}
  \mathcal{Q}_n(Y_0) = \frac{1}{n}\sum_{i=1}^n Y_0\left(Z^{(i)}\right),
  \label{e:mc_estimator}
\end{equation}
where the random variables $Z^{(i)}$ are independent and identically distributed (i.i.d.) according to the input random variable $Z$, and $\mathcal{Q}_n$ is a new random variable derived from $Y_0$ and each $Z^{(i)}.$

This estimator is unbiased and has variance decaying proportionally to $1/n$ as
\begin{align}
  \mathbb{E}\left[\mathcal{Q}_n(Y_0)\right] &= \mathbb{E}\left[Y_0(Z)\right] = \mu_0, \\
  \vvar{}{\mathcal{Q}_n(Y_0)} &=  \frac{\vvar{}{Y_0(Z)}}{n} = \frac{\sigma_0^2}{n}.
\end{align}
The root mean squared error of the MC estimator is $\sqrt{\vvar{}{\mathcal{Q}_n(Y_0)}} = \sigma_0/\sqrt{n}$, which shows that to gain one more decimal digit of accuracy, the computational cost needs to increase 100 times \cite{mcbook}. In this paper, we seek an estimator with reduced variance by leveraging two ideas: importance sampling~\cite{rubinstein2016simulation} and control variates~\cite{mcbook}.

\subsection{Importance sampling estimator}
\label{ss:is_e}
Importance sampling (IS) seeks variance reduction by carefully choosing a sampling distribution that differs from that of $p$. The IS estimator  $\mathcal{Q}^{\text{IS}}_{q,n}(Y_0)$ is defined by a weighted sum of random variables as
\begin{equation}
  \mathcal{Q}^{\text{IS}}_{q,n}(Y_0) = \frac{1}{n}\sum_{i=1}^n Y_0\left(Z^{(i)}\right) W\left(Z^{(i)}\right),
  \label{e:IS_estimator}
\end{equation}
where the input random variables $Z^{(i)}$ are i.i.d. according to PDF $q(\bm{z})$, and 
\begin{equation}
    W(\bm{z}) = \frac{p(\bm{z})}{q(\bm{z})},
\end{equation}
is the density ratio. The new density $q$ is called the \textit{proposal} (or \textit{biasing}) \textit{density}.

According to \cite{rubinstein2016simulation,mcbook}, the IS estimator is unbiased when $Y_0(\bm{z})p(\bm{z})$ is dominated by $q(\bm{z})$, that is, $q(\bm{z})=0$ implies $Y_0(\bm{z})p(\bm{z})=0$. In other words, if $\text{supp}(Y_0p) \subseteq \text{supp}(q)$, we have
\begin{align}
  \mathbb{E}_{q}\left[\mathcal{Q}^{\text{IS}}_{q,n}(Y_0)\right] &= \mathbb{E}_q\left[Y_0(Z)W(Z)\right] = \mathbb{E}_p\left[Y_0(Z)\right] = \mu_0 \label{e:is_unbias}, \\
  \vvar{q}{\mathcal{Q}^{\text{IS}}_{q,n}(Y_0)} &=  \frac{\vvar{q}{Y_0(Z)W(Z)}}{n}.
\end{align}
A prudent choice of the IS proposal density $q(\bm{z})$ can yield an estimator with a smaller variance than that of the MC estimator. The optimal IS density is obtained by minimizing the variance $\vvar{q}{Y_0(Z)W(Z)}$ as in \cite{rubinstein2016simulation}, and is given in closed form by
\begin{align*}
     q^*(\bm{z}) = \frac{|Y_0(\bm{z})|p(\bm{z})}{\displaystyle\int_{\Omega} |Y_0(\bm{z})|p(\bm{z}) d\bm{z}}.
\end{align*}
Specializing this proposal to the case of rare-event simulation, i.e., $Y_0$ is an indicator function, we obtain
\begin{align}
     q^*(\bm{z}) = \frac{Y_0(\bm{z})p(\bm{z})}{\mu_0}.
     \label{e:IS_density}
\end{align}
Using this proposal, the variance of $\mathcal{Q}^{\text{IS}}_{q,n}(Y_0)$ becomes identically zero because each evaluation of the high-fidelity model $Y_0$ with a single sample drawn from $q^*(\bm{z})$ is exactly equal to its expected value $\mu_0$.

However, since $\mu_0$ is unknown, it is intractable to exactly compute the optimal IS density. Instead, several approaches can be used for obtaining an approximation~\cite{rubinstein2004cross,bugallo_adaptive_2017,papaioannou_sequential_2016}. In this paper, the cross-entropy (CE) method with Gaussian mixture model (GMM) in \citep{geyer_cross_2019,kurtz_cross-entropy-based_2013} is chosen to find an approximate IS density $\hat{q}(\bm{z})$. Here, $\hat{q}(\bm{z})$ is defined as a weighted sum of $k \in \mathbb{N}_0$ multivariate normal density functions 
\begin{equation}
    \hat{q}(\bm{z}) = \sum_{i=1}^k \pi_i \mathcal{N}(\bm{z};\bm{\mu}_i,\bm{\Sigma}_i), \label{e:GMM}
\end{equation}
where $\pi_i \in \mathbb{R}, \bm{\mu}_i \in \mathbb{R}^d,$ and $\bm{\Sigma}_i \in \mathbb{R}^{d \times d}$ for $i=1,2,\ldots,k,$ are the mixture weights, means, and covariance matrices, respectively. The mixture coefficient $\pi_i$ is the probability that the $i^{\text{th}}$ density $\mathcal{N}(\bm{z};\bm{\mu}_i,\bm{\Sigma}_i)$ is selected at a given time \cite{kurtz_cross-entropy-based_2013}, requiring $0 \leq \pi_i \leq 1$ and $\sum_{i=1}^k \pi_i=1$. 

There exist several approaches to estimate the three parameters for each element of the mixture~\cite{kurtz_cross-entropy-based_2013,geyer_cross_2019}. Here we use an expectation-maximization (EM) algorithm with a cross-entropy objective function. We briefly explain this approach in Appendix~\ref{app:EM} by following the construction in \cite{geyer_cross_2019} and refer to \cite{geyer_cross_2019,Chen_emdemystified,rubinstein2016simulation,rubinstein2004cross} for a more detailed treatment.

\subsection{Control variate estimator}
In this section, we first consider the classical control variates---a variance-reduction technique that relies on introducing additional information sources. The classical control variates assume that both the means and the covariances of the additional information sources are available. Next we present an extension of the classical control variates in which the control means are known but the covariances amongst the low-fidelity information sources are unknown \cite{nelson_control_1990,lavenberg_perspective_1981}. Finally, we review the approximate control variates~\cite{gorodetsky_generalized_2020}, which considers the case with with unknown control variate means and known covariances. 

\subsubsection{Classical control variate estimator}
A control variate (CV) estimator $\mathcal{Q}^{\text{CV}}(\bm{\alpha})$ utilizes a set of $M$ additional estimators $\mathcal{Q}_n(Y_1), \ldots, \mathcal{Q}_n(Y_M)$, and augments a baseline estimator $\mathcal{Q}_n(Y_0)$ via a  linear combination of these estimators
\begin{equation}
    \mathcal{Q}^{\text{CV}}(\bm{\alpha}) = \mathcal{Q}_n(Y_0) + \sum_{i=1}^M \alpha_i\left(\mathcal{Q}_n(Y_i) - \mu_i\right) = \mathcal{Q}_n(Y_0) + \bm{\alpha}^{\text{T}}\left(\bm{\mathcal{Q}} - \bm{\mu}\right),
    \label{e:cv_estimator}
\end{equation}
where $\mu_i = \ee{}{\mathcal{Q}_n(Y_i)}$ is the known mean of $\mathcal{Q}_n(Y_i)$, $\bm{\mu} = \left[\mu_1, \ldots, \mu_{M}\right]$, $\bm{\alpha} = [\alpha_1, \ldots, \alpha_M]^{\text{T}}$ is the vector of control variate weights, and $\bm{\mathcal{Q}} = [\mathcal{Q}_n(Y_1), \ldots, \mathcal{Q}_n(Y_M)]^{\text{T}}$ is the vector of additional estimators. These additional estimators are shown here to be Monte Carlo estimators $\mathcal{Q}_n$, but can actually be any random variable. In the later sections we will use importance sampling estimators instead.

This CV estimator is unbiased and has reduced variance compared with the baseline estimator $\mathcal{Q}_n(Y_0)$. Specifically, since $\mathcal{Q}_n(Y_i)$ is unbiased by specification of the estimator above, we have
\begin{equation}
  \ee{}{\mathcal{Q}^{\text{CV}}(\bm{\alpha})} = \mu_0.
\end{equation}
Furthermore, the variance of this estimator is
\begin{align}
  \vvar{}{\mathcal{Q}^{\text{CV}}(\bm{\alpha})} = \vvar{}{\mathcal{Q}_n(Y_0)} + \bm{\alpha}^{\text{T}}\cov{}{\bm{\mathcal{Q}},\bm{\mathcal{Q}}}\bm{\alpha} + 2\bm{\alpha}^{\text{T}} \cov{}{\bm{\mathcal{Q}},\mathcal{Q}_n(Y_0)}.
  \label{e:var_cv_alpha}
\end{align}
The optimal control variate weight \cite{rubinstein2016simulation}, which minimizes the above variance, is then given by
\begin{equation}
  \bm{\alpha}_{\text{CV}}^* = -\bm{C}^{-1}\bm{c},
  \label{e:opt_alpha}
\end{equation}
where $\bm{C} = \cov{}{\bm{\mathcal{Y}},\bm{\mathcal{Y}}} \in \mathbb{R}^{M \times M}$ is the covariance matrix among $Y_i$, $\bm{c} = \cov{}{\bm{\mathcal{Y}},Y_0} \in \mathbb{R}^M$ is the vector of covariances between $Y_0$ and each $Y_i$, and $\bm{\mathcal{Y}} = [Y_1, \ldots, Y_M]^{\text{T}}$. If we further define
\begin{equation}
    \bm{{\bar{c}}} = \bm{c}/\sqrt{\vvar{}{Y_0}} = \left[ \rho_1\sqrt{\vvar{}{Y_1}},\rho_2\sqrt{\vvar{}{Y_2}},\ldots,\rho_M\sqrt{\vvar{}{Y_M}} \right]^{\text{T}},
    \label{e:c_hat}
\end{equation}
where $\rho_i$ is the Pearson correlation coefficient between $Y_0$ and $Y_i$, then the variance corresponding to $\bm{\alpha}_{\text{CV}}^*$ becomes
\begin{equation}
    \vvar{}{\mathcal{Q}^{\text{CV}}(\bm{\alpha}_{\text{CV}}^*)} = \left(1 - R^2\right) \vvar{}{\mathcal{Q}_n(Y_0)},
    \label{e:var_cv}
\end{equation}
where
\begin{equation}
    R^2=\bm{{\bar{c}}}^{\text{T}}\bm{C}^{-1}\bm{{\bar{c}}}.
    \label{e:cv_R^2}
\end{equation}
Here we see that the greater the correlation amongst models, the greater the achieved variance reduction.

Furthermore, since each $\mathcal{Q}_n(Y_i)$ shares $n$ i.i.d. samples, we have
\begin{equation}
    \begin{aligned}
        \cov{}{\mathcal{Q}_n(Y_i),\mathcal{Q}_n(Y_j)} &= 
        \cov{}{\dfrac{1}{n}\displaystyle\sum_{k=1}^n Y_i(Z^{(k)}),\dfrac{1}{n}\displaystyle\sum_{k=1}^n Y_j(Z^{(k)})} = \frac{1}{n} \cov{}{Y_i,Y_j},
    \end{aligned}
\end{equation}
i.e., 
\begin{equation}
    \begin{aligned}
        \bm{C} &= n\bm{\mathsf{C}},  \\
        \bm{c} &= n\bm{\mathsf{c}},
        \label{e:C_c_cov}
    \end{aligned}
\end{equation}
where $\bm{\mathsf{C}} = \cov{}{\bm{\mathcal{Q}},\bm{\mathcal{Q}}}$ and $\bm{\mathsf{c}} = \cov{}{\bm{\mathcal{Q}},\mathcal{Q}_n(Y_0)}$. Thus, we can rewrite~\eqref{e:opt_alpha} as
\begin{equation}
  \bm{\alpha}_{\text{CV}}^* = -\bm{\mathsf{C}}^{-1}\bm{\mathsf{c}}.
  \label{e:opt_alpha_Q}
\end{equation}
to obtain an expression in terms of the variance between estimators.

\subsubsection{Control variates with estimated covariance}
It is often the case that the covariances amongst the low-fidelity information sources are not available; however, these sources can be simulated to obtain estimates of these covariances. One well-analyzed strategy for estimation in this context is to generate an ensemble of $K$ realizations of the random variables $\mathcal{Q}_n(Y_i)$ to estimate the required covariances and correlations \cite{nelson_control_1990,lavenberg_perspective_1981}. Because we are considering MC estimators as additional information sources, this operation requires a total of $nK$ samples for each $Y_i$. The estimated optimal weight is then obtained as
\begin{equation}
    \ubar{\bm{\alpha}}_{\text{CV}} = -\hat{\bm{\mathsf{C}}}^{-1}\hat{\bm{\mathsf{c}}},
    \label{e:cv_sample_alpha}
\end{equation}
where $\hat{\bm{\mathsf{C}}}$ and $\hat{\bm{\mathsf{c}}}$ are the sample versions of $\bm{\mathsf{C}}$ and $\bm{\mathsf{c}}$, respectively. Hence, $\hat{\bm{\mathsf{C}}}$ and $\hat{\bm{\mathsf{c}}}$ can be computed as
\begin{equation}
    \begin{aligned}
        \hat{\bm{\mathsf{C}}} &= \frac{1}{K-1} \sum_{j=1}^K \left( \bm{\mathcal{Q}}_j -  \bar{\bm{\mathcal{Q}}} \right) \left( \bm{\mathcal{Q}}_j -  \bar{\bm{\mathcal{Q}}} \right)^{\text{T}}, \\
        \hat{\bm{\mathsf{c}}} &= \frac{1}{K-1} \sum_{j=1}^K \left( \bm{\mathcal{Q}}_j - \bar{\bm{\mathcal{Q}}} \right)\left(\mathcal{Q}_n^{(j)}(Y_0) - \bar{\mathcal{Q}}_n(Y_0) \right),
    \end{aligned}
    \label{e:cv_sample_covariance}
\end{equation}
where $\bar{\bm{\mathcal{Q}}} = \dfrac{1}{K}\displaystyle\sum_{j=1}^K \bm{\mathcal{Q}}_j$, $\bar{\mathcal{Q}}_n(Y_0) =  \dfrac{1}{K}\displaystyle\sum_{j=1}^K \mathcal{Q}_n^{(j)}(Y_0)$, and $\bm{\mathcal{Q}}_j = \left[Q_n^{(j)}(Y_1), \ldots, Q_n^{(j)}(Y_M)\right]^{\text{T}}$ is the vector of MC estimators using $n$ i.i.d. samples of the $j^{\text{th}}$ batch (out of a total of $K$ batches).

This estimated weight is then used in an ensemble estimator~\cite{lavenberg_perspective_1981}
given by
\begin{equation}
    \begin{aligned}
        \bar{\mathcal{Q}}^{\text{CV}}(\ubar{\bm{\alpha}}_{\text{CV}}) &= \frac{1}{K} \sum_{j=1}^K \mathcal{Q}_j^{\text{CV}}(\ubar{\bm{\alpha}}_{\text{CV}}) = \frac{1}{K} \sum_{j=1}^K \left( \mathcal{Q}_n^{(j)}(Y_0) + \ubar{\bm{\alpha}}_{\text{CV}}^{\text{T}}\left(\bm{\mathcal{Q}}_j - \bm{\mu}_j\right) \right) %
        = \frac{\bm{1}_K^{\text{T}}}{K}\bm{Q}(Y_0) + \ubar{\bm{\alpha}}_{\text{CV}}^{\text{T}}\left(\bar{\bm{\mathcal{Q}}} - \bar{\bm{\mu}}\right),
    \end{aligned}
    \label{e:cv_sample_mean}
\end{equation}
where $\bar{\bm{\mu}} = \dfrac{1}{K}\displaystyle\sum_{j=1}^K \bm{\mu}_j$, $\bm{Q}(Y_0) = \left[ \mathcal{Q}_n^{(1)}(Y_0),\ldots,\mathcal{Q}_n^{(K)}(Y_0) \right]^{\text{T}}$, and $\bm{1}_K$ is a $K \times 1$ vector of ones. 
Comparing~\eqref{e:cv_estimator} and~\eqref{e:cv_sample_mean}, we see that this estimator is identical to the estimator $\mathcal{Q}^{\text{CV}}(\ubar{\bm{\alpha}}_{\text{CV}})$ that uses   the same set of $nK$ samples. Our analytical results in the following sections build from the ensemble form.

\subsubsection{Approximate control variate estimator}
\label{ss:acv}
The approximate control variate (ACV) estimator is designed to leverage the control variate framework when the analytical expectation of the additional estimators $Q_n(Y_i)$ is not known~\cite{gorodetsky_generalized_2020}. This estimator replaces the unknown $\mu_i$ with another estimator $\mu_i^{\text{ACV}}$ as
\begin{align}
    \mathcal{Q}^{\text{ACV}}(\bm{\alpha}) = \mathcal{Q}_n(Y_0) + \sum_{i=1}^M \alpha_i\left(\mathcal{Q}_n(Y_i) - \mu^{\text{ACV}}_i\right) = \mathcal{Q}_n(Y_0) + \bm{\alpha}^{\text{T}}\left(\bm{\mathcal{Q}} - \bm{\mu}^{\text{ACV}}\right),
    \label{e:acv_estimator}
\end{align}
where $\bm{\mu}^{\text{ACV}} = \left[\mu_1^\text{ACV}, \ldots, \mu_{M}^{\text{ACV}}\right]$.
If the $\mu_i^{\text{ACV}}$ are unbiased, then $\mathcal{Q}^{\text{ACV}}(\bm{\alpha})$ is unbiased. Furthermore, the ACV estimator variance is 
\begin{align}
    \vvar{}{\mathcal{Q}^{\text{ACV}}(\bm{\alpha})} = \vvar{}{\mathcal{Q}_n(Y_0)} + \bm{\alpha}^{\text{T}}\cov{}{\bm{\mathcal{Q}}-\bm{\mu}^{\text{ACV}},\bm{\mathcal{Q}}-\bm{\mu}^{\text{ACV}}}\bm{\alpha} + 2\bm{\alpha}^{\text{T}}\cov{}{\bm{\mathcal{Q}}-\bm{\mu}^{\text{ACV}},\mathcal{Q}_n(Y_0)},
    \label{e:var_acv}
\end{align}
so that the optimal weight is 
\begin{align}
    \bm{\alpha}_{\text{ACV}}^* = - \cov{}{\bm{\mathcal{Q}}-\bm{\mu}^{\text{ACV}},\bm{\mathcal{Q}}-\bm{\mu}^{\text{ACV}}}^{-1} \cov{}{\bm{\mathcal{Q}}-\bm{\mu}^{\text{ACV}},\mathcal{Q}_n(Y_0)},
    \label{e:acv_opt_alpha}
\end{align}
corresponding to the variance
\begin{align}
    \vvar{}{\mathcal{Q}^{\text{ACV}}(\bm{\alpha^*}_{\text{ACV}})} = (1-R^2_{\text{ACV}})\vvar{}{\mathcal{Q}_n(Y_0)},
\end{align}
where $R^2_{\text{ACV}} = \cov{}{\bm{\mathcal{Q}}-\bm{\mu}^{\text{ACV}},\mathcal{Q}_n(Y_0)}^{\text{T}} \dfrac{\cov{}{\bm{\mathcal{Q}}-\bm{\mu}^{\text{ACV}},\bm{\mathcal{Q}}-\bm{\mu}^{\text{ACV}}}^{-1}}{\vvar{}{\mathcal{Q}_n(Y_0)}} \cov{}{\bm{\mathcal{Q}}-\bm{\mu}^{\text{ACV}},\mathcal{Q}_n(Y_0)}$.

The ACV estimator is flexible in that it permits a variety of estimators to be used for the unknown means. While there may be many partitioning strategies~\cite{bomarito_optimization_2020}, two specific schemes selected for further analysis in this paper arise from using different partitioning of samples of $Y_i$ for the estimators $\mathcal{Q}_n$ and $\mu_i^{\text{ACV}}.$ Let $\bm{z}_0$, $\bm{z}_i^1$ and $\bm{z}_i^2$ denote the sample sets (realizations of $Z)$ used to compute $\mathcal{Q}_n(Y_0)$, $\mathcal{Q}_n(Y_i)$ and $\mu^{\text{ACV}}_i$, respectively. The two sample partitioning strategies
\cite{gorodetsky_generalized_2020} are defined as
\begin{align}
   \text{ACV-IS} 
   &\begin{cases}
       \bm{z}_i^1 = \bm{z}_0 \\
       \bm{z}_i^2 = \bm{z}_i^1 \cup \tilde{\bm{z}}_i^2 \\
       \tilde{\bm{z}}_i^2 \cap \tilde{\bm{z}}_j^2 = \emptyset \text{ for } i \neq j \\
   \end{cases}, \label{e:ACV-IS} \\
   \text{ACV-MF}
   &\begin{cases}
       \bm{z}_i^1 = \bm{z}_0 \\
       \bm{z}_i^2 = \bm{z}_i^1 \cup \bigcup_{j=1}^i \tilde{\bm{z}}_j \\
       \tilde{\bm{z}}_i \cap \tilde{\bm{z}}_j = \emptyset \text{ for } i \neq j
   \end{cases}, \label{e:ACV-MF}
\end{align}
where $\tilde{\bm{z}}_i^2$ and $\tilde{\bm{z}}_j$ are extra sets of samples to estimate $\mu_i^{\text{ACV}}$. In ACV-IS, the computation of $\mathcal{Q}_n(Y_0)$ and $\mathcal{Q}_n(Y_i)$ employs only the sample set $\bm{z}_0$ while the computation of $\mu^{\text{ACV-IS}}_i$ uses these same samples plus a sample increment. In ACV-MF, besides sharing the sample set $\bm{z}_0$ between $\mathcal{Q}_n(Y_0)$ and $\mathcal{Q}_n(Y_i)$, the estimation of $\mu^{\text{ACV-MF}}_i$ utilizes the sample set of $\mu^{\text{ACV-MF}}_{i-1}$ with some extra samples. We refer to~\cite[Fig. 2]{gorodetsky_generalized_2020} for a visual explanation of the two strategies. No analytical results are available for the optimal sample distribution strategy in the case of finite-sample sizes, and so our aim is to simply show that our analysis applies to a variety of strategies.

According to~\cite{gorodetsky_generalized_2020}, the ACV-IS estimator obtains an optimal weight
\begin{equation}
    \bm{\alpha}^*_{\text{ACV-IS}} = -\left[\bm{C} \circ \bm{F}^{\text{ACV-IS}} \right]^{-1}\left[\text{diag}\left(\bm{F}^{\text{ACV-IS}}\right) \circ \bm{c} \right],
    \label{e:ACV-IS_alpha}
\end{equation}
where $\bm{F}^{\text{ACV-IS}} \in \mathbb{R}^{M \times M}$ has elements
\begin{equation}
    f^{\text{ACV-IS}}_{ij} = \begin{cases}
        \dfrac{r_i-1}{r_i}\dfrac{r_j-1}{r_j} &\mbox{if } i \neq j\\
        \dfrac{r_i-1}{r_i} &\mbox{otherwise}
    \end{cases},
    \label{e:elem_acv_is}
\end{equation}
$\text{diag}(\bullet)$ denotes a vector whose elements are the diagonal of the matrix $\bullet$, and $r_i \in \mathbb{R}_{+}$ is the ratio between the total number of realizations of $Y_i$ and the total number of evaluations of $Y_0$.
Similarly, the ACV-MF estimator obtains an optimal sample weight
\begin{equation}
    \bm{\alpha}^*_{\text{ACV-MF}} = -\left[\bm{C} \circ \bm{F}^{\text{ACV-MF}} \right]^{-1}\left[\text{diag}\left(\bm{F}^{\text{ACV-MF}}\right) \circ \bm{c} \right],
    \label{e:ACV-MF_alpha}
\end{equation}
where $\bm{F}^{\text{ACV-MF}} \in \mathbb{R}^{M \times M}$ has elements
\begin{equation}
    f^{\text{ACV-MF}}_{ij} = \begin{cases}
        \dfrac{\min(r_i,r_j)-1}{\min(r_i,r_j)} &\mbox{if } i \neq j\\
        \dfrac{r_i-1}{r_i} &\mbox{otherwise}
    \end{cases}.
    \label{e:elem_acv_mf}
\end{equation}
The corresponding correlation coefficients are
\begin{align}
    R^2_{\text{ACV-IS}} &= \bm{a}^{\text{T}}\left[\bm{C} \circ \bm{F}^{\text{ACV-IS}}\right]^{-1}\bm{a}, \label{e:R^2_IS} \\
    R^2_{\text{ACV-MF}} &= \bm{b}^{\text{T}}\left[\bm{C} \circ \bm{F}^{\text{ACV-MF}}\right]^{-1}\bm{b}, \label{e:R^2_MF}
\end{align}
where
\begin{equation*}
    \begin{aligned}
        \bm{a} &= \text{diag}\left(\bm{F}^{\text{ACV-IS}}\right) \circ \bar{\bm{c}}, \\
        \bm{b} &= \text{diag}\left(\bm{F}^{\text{ACV-MF}}\right) \circ \bar{\bm{c}}.
    \end{aligned}
\end{equation*}
We can rewrite the ACV optimal weight in terms of the covariances amongst MC estimators $\bm{\mathsf{C}}$ and $\bm{\mathsf{c}}$ (rather than amongst models $Y_i$) by substituting~\eqref{e:C_c_cov} into~\eqref{e:ACV-IS_alpha} and~\eqref{e:ACV-MF_alpha} as
\begin{align}
    \bm{\alpha}^*_{\text{ACV-IS}} &= -\left[\bm{\mathsf{C}} \circ \bm{F}^{\text{ACV-IS}} \right]^{-1}\left[\text{diag}\left(\bm{F}^{\text{ACV-IS}}\right) \circ \bm{\mathsf{c}} \right], \label{e:alpha_star_is} \\
   \bm{\alpha}^*_{\text{ACV-MF}} &= -\left[\bm{\mathsf{C}} \circ \bm{F}^{\text{ACV-MF}} \right]^{-1}\left[\text{diag}\left(\bm{F}^{\text{ACV-MF}}\right) \circ \bm{\mathsf{c}} \right]. \label{e:alpha_star_mf}
\end{align}

\subsubsection{Approximate control variates with estimated covariance}
In this section we describe an extension to the ACV estimator that considers unknown covariances amongst the low-fidelity information sources. This extension utilizes the same idea of~\cite{lavenberg_perspective_1981} to construct a new estimator as an average of an ensemble of estimators.

Suppose again that the optimal weight is estimated from an ensemble of $K$ simulations, each of which employs the same number of samples. Using the sample weight $\ubar{\bm{\alpha}}_{\text{ACV}}$, 
we define the ensemble ACV estimator as
\begin{align}
        \bar{\mathcal{Q}}^{\text{ACV}}(\ubar{\bm{\alpha}}_{\text{ACV}}) &= \frac{1}{K} \sum_{j=1}^K \mathcal{Q}_j^{\text{ACV}}(\ubar{\bm{\alpha}}_{\text{ACV}}) = \frac{1}{K} \sum_{j=1}^K \left( \mathcal{Q}_n^{(j)}(Y_0) + \ubar{\bm{\alpha}}_{\text{ACV}}^{\text{T}}\left(\bm{\mathcal{Q}}_j - \bm{\mu}^{\text{ACV}}_j\right) \right) \nonumber \\
        &= \frac{\bm{1}_K^{\text{T}}}{K}\bm{Q}(Y_0) + \left(\bar{\bm{\mathcal{Q}}} - \bar{\bm{\mu}}^{\text{ACV}}\right)^{\text{T}}\ubar{\bm{\alpha}}_{\text{ACV}},
    \label{e:acv_sample_mean}
\end{align}
where $\bar{\bm{\mu}}^{\text{ACV}} = \dfrac{1}{K} \displaystyle\sum_{j=1}^K \bm{\mu}^{\text{ACV}}_j$.

The next section analyzes the variance reduction of the proposed ensemble estimators $\bar{Q}^{\text{ACV}}$, both for MF and IS sampling strategies.

\section{Ensemble ACV estimators}
\label{s:var_cv_e_sample_w}

The goal of this section is to present Theorem~\ref{thm:var_estimators}, which expresses the variances of the three ensemble estimators derived from the CV and ACV  (i.e., $\bar{\mathcal{Q}}^{\text{CV}}(\ubar{\bm{\alpha}}_{\text{CV}})$, $\bar{\mathcal{Q}}^{\text{ACV-IS}}(\ubar{\bm{\alpha}}_{\text{ACV-IS}})$ and $\bar{\mathcal{Q}}^{\text{ACV-MF}}(\ubar{\bm{\alpha}}_{\text{ACV-MF}})$) with respect to the number of samples, the correlation coefficient, the variance of the MC estimator $\mathcal{Q}_n(Y_0)$, and the expectation of a function of estimators. We use this relationship in Theorem~\ref{thm:var_cv_sample_weight} to compute explicit expressions for these variances. In Corollary~\ref{thm:lower_bounds} we derive lower bounds on the number of ensembles required to guarantee smaller variances of the three sample-weight estimators than that of the baseline estimator $\mathcal{Q}_n(Y_0)$.

The results rely on a multivariate Gaussianity assumption, and so are asymptotically true for model settings where the information sources have finite mean and variance as an implication from the central limit theorem. That is, the results holds when the vectors $\{\mathcal{Q}_n(Y_0),\mathcal{Q}_n(Y_1),\ldots,\mathcal{Q}_n(Y_M)\}$ in the CV-based estimator and \[\{\mathcal{Q}_n(Y_0),\mathcal{Q}_n(Y_1),\ldots,\mathcal{Q}_n(Y_M),\mathcal{Q}_{nr_1}(Y_1),\ldots,\mathcal{Q}_{nr_M}(Y_M)\}\] in the ACV-based estimators have a multivariate normal distribution as $n \to \infty.$

Using the Gaussianity assumption of the vectors of MC estimators, the first theorem allows us to calculate the variances of the ensemble estimators in terms of the expectations shown below.
\begin{theorem}
    \label{thm:var_estimators}
    \renewcommand{\theenumi}{\alph{enumi}}
    ~\begin{enumerate}
        \item Let the vector $\{\mathcal{Q}_n(Y_0),\mathcal{Q}_n(Y_1),\ldots,\mathcal{Q}_n(Y_M)\}$ have a multivariate normal distribution. Then we have
        \begin{align}
            \vvar{}{\bar{\mathcal{Q}}^{\text{CV}}(\ubar{\bm{\alpha}}_{\text{CV}})}
            &= \vvar{}{\mathcal{Q}_n(Y_0)}\left(1-R^2\right) \left( \frac{1}{K} + \ee{\bm{\mathcal{Q}}}{\left(\bar{\bm{\mathcal{Q}}} - \bar{\bm{\mu}}^{\text{CV}}\right)^{\text{T}} \left( \bm{D}\bm{D}^{\text{T}}\right)^{-1} \left(\bar{\bm{\mathcal{Q}}} - \bar{\bm{\mu}}^{\text{CV}}\right)} \right),
            \label{e:var_q_cv}
        \end{align}
        where $R$ is defined in~\eqref{e:cv_R^2}.
        \item Let the vector $\{\mathcal{Q}_n(Y_0),\mathcal{Q}_n(Y_1),\ldots,\mathcal{Q}_n(Y_M),\mu_1^e,\ldots,\mu_M^e\}$, where $e \in \{\text{ACV-IS},\text{ACV-MF}\}$ and $\mu_i^e = \mathcal{Q}_{nr_i}(Y_i)$ for $i=1,2,\ldots,M$, have a multivariate normal distribution. Then, we have
        \begin{equation}
        \begin{aligned}
            \vvar{}{\bar{\mathcal{Q}}^e(\ubar{\bm{\alpha}}_e)} = \;&\vvar{}{\mathcal{Q}_n(Y_0)}\left(1-R^2_e\right) \\ 
            &\times \left( \frac{1}{K} + \ee{\tilde{\bm{\mathcal{Q}}}^e}{\left(\bar{\bm{\mathcal{Q}}} - \bar{\bm{\mu}}^e\right)^{\text{T}} \left[\left(\bm{D}\bm{D}^{\text{T}}\right) \circ \bm{F}^e \right]^{-1}
            \left[ \left(\bm{D}\bm{D}^{\text{T}}\right) \circ \bm{\mathcal{F}}^e_M \circ \left(\bm{\mathcal{F}}^e_M\right)^{\text{T}} \right]
            \left[\left(\bm{D}\bm{D}^{\text{T}}\right) \circ \bm{F}^e \right]^{-1}
            \left(\bar{\bm{\mathcal{Q}}} - \bar{\bm{\mu}}^e\right)} \right)
            \label{e:var_q_acv}
        \end{aligned}
        \end{equation}
        where $\tilde{\bm{\mathcal{Q}}}^e = \left[\mathcal{Q}_n(Y_1)-\mu_1^e,\ldots,\mathcal{Q}_n(Y_M)-\mu_M^e\right]^{\text{T}}$, $\bm{\mathcal{F}}^e_M = \text{diag}(\bm{F}^e) \otimes \bm{1}_M$, and $R^2_e$ is defined in~\eqref{e:R^2_IS} and~\eqref{e:R^2_MF}.
    \end{enumerate}
\end{theorem}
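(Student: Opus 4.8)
The plan is to prove both parts with the same three-move template: exploit the Gaussian independence of sample means and sample covariances, peel off the optimal control-variate direction through a population regression so as to expose the $(1-R^2)$ factor, and then evaluate the remaining variance by conditioning on the low-fidelity data. Throughout I would pass to the centered batch variables $u_j = \mathcal{Q}_n^{(j)}(Y_0) - \mu_0$ and $\bm{v}_j = \bm{\mathcal{Q}}_j - \bm{\mu}$, which are i.i.d.\ mean-zero jointly Gaussian over the $K$ batches with $\cov{}{\bm{v}_j} = \bm{\mathsf{C}}$ and $\cov{}{\bm{v}_j,u_j} = \bm{\mathsf{c}}$.

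For part (a) I would use the population regression $u_j = \bm{\gamma}^{\text{T}}\bm{v}_j + e_j$ whose coefficient is $\bm{\gamma} = \bm{\mathsf{C}}^{-1}\bm{\mathsf{c}} = -\bm{\alpha}_{\text{CV}}^*$ and whose residual satisfies $e_j \perp \bm{v}_j$ with $\vvar{}{e_j} = \vvar{}{\mathcal{Q}_n(Y_0)}(1-R^2)$; by joint Gaussianity the whole family $\{e_j\}$ is independent of $\{\bm{v}_j\}$. Writing $\hat{\bm{\mathsf{c}}} = \hat{\bm{\mathsf{C}}}\bm{\gamma} + \hat{\bm{\mathsf{s}}}$, where $\hat{\bm{\mathsf{s}}}$ is the sample cross-covariance of $\bm{v}$ and $e$, gives $\ubar{\bm{\alpha}}_{\text{CV}} = \bm{\alpha}_{\text{CV}}^* - \hat{\bm{\mathsf{C}}}^{-1}\hat{\bm{\mathsf{s}}}$, so the $\bm{\alpha}_{\text{CV}}^*$ contribution cancels against the $\bm{\gamma}^{\text{T}}\bar{\bm{v}}$ coming from $\bar{u}$ and the estimator error collapses to $\bar{\mathcal{Q}}^{\text{CV}}(\ubar{\bm{\alpha}}_{\text{CV}}) - \mu_0 = \bar{e} - \hat{\bm{\mathsf{s}}}^{\text{T}}\hat{\bm{\mathsf{C}}}^{-1}\bar{\bm{v}}$. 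Conditioning on $\{\bm{v}_j\}$, both $\bar{e}$ and $\hat{\bm{\mathsf{s}}}$ are linear in the still-independent residuals $\{e_j\}$, so the conditional mean vanishes (this gives unbiasedness and kills the outer term of the law of total variance) and the conditional variance equals $\vvar{}{e_j}\sum_j a_j^2$ with $a_j = \tfrac1K - \tfrac{c_j}{K-1}$ and $c_j = \bar{\bm{v}}^{\text{T}}\hat{\bm{\mathsf{C}}}^{-1}(\bm{v}_j - \bar{\bm{v}})$. Using $\sum_j c_j = 0$, $\sum_j c_j^2 = (K-1)\,\bar{\bm{v}}^{\text{T}}\hat{\bm{\mathsf{C}}}^{-1}\bar{\bm{v}}$, and identifying $\bm{D}\bm{D}^{\text{T}} = (K-1)\hat{\bm{\mathsf{C}}} = \sum_j(\bm{\mathcal{Q}}_j-\bar{\bm{\mathcal{Q}}})(\bm{\mathcal{Q}}_j-\bar{\bm{\mathcal{Q}}})^{\text{T}}$, the conditional variance becomes $\vvar{}{\mathcal{Q}_n(Y_0)}(1-R^2)\bigl(\tfrac1K + \bar{\bm{v}}^{\text{T}}(\bm{D}\bm{D}^{\text{T}})^{-1}\bar{\bm{v}}\bigr)$; taking expectation over $\{\bm{v}_j\}$ and using $\bar{\bm{v}} = \bar{\bm{\mathcal{Q}}} - \bar{\bm{\mu}}^{\text{CV}}$ yields \eqref{e:var_q_cv}.

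For part (b) I would run the identical template with the control differences $\bm{\Delta}_j = \bm{\mathcal{Q}}_j - \bm{\mu}_j^e$ in place of $\bm{v}_j$. The partitions \eqref{e:ACV-IS}--\eqref{e:ACV-MF} make each $\Delta_i$ equal to $f^e_{ii}$ times a difference of a base-sample and a fresh-sample estimator, which forces $\cov{}{\bm{\Delta}_j} = \bm{\mathsf{C}}\circ\bm{F}^e$ and $\cov{}{\bm{\Delta}_j,u_j} = \text{diag}(\bm{F}^e)\circ\bm{\mathsf{c}}$; hence the regression of $u$ on $\bm{\Delta}$ has coefficient $-\bm{\alpha}_e^*$ and residual variance $\vvar{}{\mathcal{Q}_n(Y_0)}(1-R^2_e)$, supplying the leading factor. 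The sandwiched matrix $[(\bm{D}\bm{D}^{\text{T}})\circ\bm{F}^e]^{-1}[(\bm{D}\bm{D}^{\text{T}})\circ\bm{\mathcal{F}}^e_M\circ(\bm{\mathcal{F}}^e_M)^{\text{T}}][(\bm{D}\bm{D}^{\text{T}})\circ\bm{F}^e]^{-1}$ is then assembled as in part (a): the two outer inverses come from the two appearances of the estimated weight's denominator $\hat{\bm{\mathsf{C}}}\circ\bm{F}^e$, while the $\text{diag}(\bm{F}^e)$ factors recorded in $\bm{\mathcal{F}}^e_M$ enter through the cross-covariance $\text{diag}(\bm{F}^e)\circ\hat{\bm{\mathsf{c}}}$ that drives the weight-estimation error. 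A useful consistency check is that $\bm{F}^e = \bm{1}_M\bm{1}_M^{\text{T}}$ collapses the middle matrix to $\bm{D}\bm{D}^{\text{T}}$ and recovers \eqref{e:var_q_cv}.

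The hard part will be the conditioning step in (b). Unlike the CV case, the residual of regressing $u$ on $\bm{\Delta}$ is \emph{not} independent of the raw controls $\bm{v}$ that enter the estimated weight through $\hat{\bm{\mathsf{C}}}$ and $\hat{\bm{\mathsf{c}}}$, so the clean ``condition and integrate out one independent residual'' argument does not transfer verbatim. The decisive move is to choose the conditioning $\sigma$-algebra (the full low-fidelity ensemble, leaving only the base-sample $Y_0$ fluctuations random) so that the estimator is again linear in the remaining Gaussian noise, and then to carry the Hadamard $\bm{F}^e$ bookkeeping precisely enough that the conditional second moment of the weight-error term reproduces exactly $(\bm{D}\bm{D}^{\text{T}})\circ\bm{\mathcal{F}}^e_M\circ(\bm{\mathcal{F}}^e_M)^{\text{T}}$ in the center of the sandwich. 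Showing that the cross terms between $\bar{e}$ and the weight-error term vanish --- immediate from $e \perp \bm{v}$ in part (a) --- is where the sample-sharing coupling must be handled most carefully.
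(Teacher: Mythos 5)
Your part (a) is correct, and despite the regression-residual packaging it is the same mechanism as the paper's proof in Appendix~E: writing $u_j=\bm{\gamma}^{\text{T}}\bm{v}_j+e_j$ and integrating out the residuals conditionally on $\{\bm{v}_j\}$ is exactly the Gaussian conditioning the paper performs when it writes $\bar{\mathcal{Q}}^{\text{CV}}(\ubar{\bm{\alpha}}_{\text{CV}})=\bm{\mathcal{X}}^{\text{T}}\bm{Q}(Y_0)$ and uses $\vvar{}{\bm{Q}(Y_0)\mid\bm{\mathcal{Q}}}=(1-R^2)\vvar{}{\mathcal{Q}_n(Y_0)}\bm{I}$, and your computation $\sum_j a_j^2=\tfrac1K+\bar{\bm{v}}^{\text{T}}(\bm{D}\bm{D}^{\text{T}})^{-1}\bar{\bm{v}}$ is the paper's expansion of $\bm{\mathcal{X}}^{\text{T}}\bm{\mathcal{X}}$ in different notation.

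Part (b), however, is not a proof: you have correctly located the hard step and then deferred it, and the deferral is a genuine gap. The cancellation in (a) relied on $\ubar{\bm{\alpha}}_{\text{CV}}$ \emph{being} the sample regression coefficient of $u$ on $\bm{v}$; in (b), $\ubar{\bm{\alpha}}_e$ is \emph{not} the sample regression coefficient of $u$ on $\bm{\Delta}$ --- it is assembled from the raw-control statistics $\hat{\bm{\mathsf{C}}},\hat{\bm{\mathsf{c}}}$ and the deterministic $\bm{F}^e$ --- so nothing cancels, and both facts that drove (a) fail under the conditioning you propose. First, given the \emph{full} low-fidelity ensemble, the conditional variance of $\mathcal{Q}_n^{(j)}(Y_0)$ is $(1-\tilde{R}^2)\vvar{}{\mathcal{Q}_n(Y_0)}$, where $\tilde{R}^2$ is the multiple correlation with the pair $(\bm{\mathcal{Q}}_j,\bm{\mu}_j^e)$; this exceeds $R_e^2$ precisely because the residual of the regression on the differences is correlated with the raw controls (for $M=1$ under ACV-IS one computes $\cov{}{\epsilon_j,\mathcal{Q}_n(Y_1)}=\mathsf{c}_1/r_1\neq 0$, with $\mathsf{c}_1=\cov{}{\mathcal{Q}_n(Y_0),\mathcal{Q}_n(Y_1)}$), so this conditioning cannot produce the stated $(1-R_e^2)$ prefactor. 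Second, the conditional mean of $\bar{\mathcal{Q}}^e(\ubar{\bm{\alpha}}_e)$ given the full ensemble is no longer constant (the annihilation $\bm{\mathcal{X}}^{\text{T}}\bm{V}^{\text{T}}\bm{\beta}=0$ that you get for free in the CV case uses $\bm{D}\bm{V}^{\text{T}}=\bm{D}\bm{D}^{\text{T}}$, which has no analogue once raw controls and differences are distinct objects), so the outer term of the law of total variance does not vanish for free either. These are not bookkeeping details to be "carried precisely"; they are the obstruction itself, and your proposal supplies no argument that overcomes it.

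For comparison, the paper's route never introduces a weight-error term at all: it writes $\bar{\mathcal{Q}}^e(\ubar{\bm{\alpha}}_e)=\bm{\mathcal{X}}^{\text{T}}\bm{Q}(Y_0)$ with $\bm{\mathcal{X}}^{\text{T}}=\tfrac{\bm{1}_K^{\text{T}}}{K}-(\bar{\bm{\mathcal{Q}}}-\bar{\bm{\mu}}^e)^{\text{T}}[(\bm{D}\bm{D}^{\text{T}})\circ\bm{F}^e]^{-1}[\bm{\mathcal{F}}^e_K\circ\bm{D}]$, treats $\bm{\mathcal{X}}$ as fixed under the conditioning, inserts the conditional variance $(1-R_e^2)\vvar{}{\mathcal{Q}_n(Y_0)}\bm{I}$, and obtains the sandwich purely algebraically from $\bm{\mathcal{X}}^{\text{T}}\bm{\mathcal{X}}$ via $\bm{D}\bm{1}_K=\bm{0}_M$ and the Hadamard identity $[\bm{\mathcal{F}}^e_K\circ\bm{D}][\bm{\mathcal{F}}^e_K\circ\bm{D}]^{\text{T}}=(\bm{D}\bm{D}^{\text{T}})\circ\bm{\mathcal{F}}^e_M\circ(\bm{\mathcal{F}}^e_M)^{\text{T}}$ of Appendix~C --- so the middle matrix is not "the conditional second moment of the weight-error term" as you predict. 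It is only fair to note that the paper's own conditioning step quietly identifies "given $\tilde{\bm{\mathcal{Q}}}^e$" with "given all low-fidelity data" (it needs the former for the $(1-R_e^2)$ factor and the latter for $\bm{\mathcal{X}}$ to be fixed), i.e.\ it does not confront the raw-versus-difference coupling you flagged; you have in effect diagnosed a soft spot in the published argument. But diagnosing the obstruction is not the same as resolving it, and as written your proposal does not establish \eqref{e:var_q_acv}.
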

The proof of Theorem~\ref{thm:var_estimators} in Appendix~\ref{app:var_estimators} requires the identities provided in Appendix~\ref{app:identities} and the following two useful propositions. In the first proposition, we rewrite the sample weight in terms of the centered data matrix to facilitate the calculation of the variances of the ensemble estimators.
\begin{proposition}
    \label{pro:alpha_sample_cov}
    The estimated control variate weights can be written as
    \begin{align}
        \ubar{\bm{\alpha}}_{\text{CV}} &= -\left( \bm{D}\bm{D}^{\text{T}} \right)^{-1} \bm{D} \bm{Q}(Y_0), \label{e:alpha_cv_sample} \\
        \ubar{\bm{\alpha}}_e &= -\left[\left(\bm{D}\bm{D}^{\text{T}}\right) \circ \bm{F}^e \right]^{-1}\left[\text{diag}\left(\bm{F}^e\right) \circ \left(\bm{D} \bm{Q}(Y_0)\right) \right], \label{e:alpha_acv_sample}
    \end{align}
    where $e \in \{\text{ACV-IS},\text{ACV-MF}\}$ and $\bm{D}$ is the centered data matrix
    \begin{equation}
        \bm{D} = \begin{bmatrix}
            \mathcal{Q}_n^{(1)}(Y_1) - \bar{\mathcal{Q}}_n(Y_1) & \mathcal{Q}_n^{(2)}(Y_1) - \bar{\mathcal{Q}}_n(Y_1) & \ldots & \mathcal{Q}_n^{(K)}(Y_1) - \bar{\mathcal{Q}}_n(Y_1) \\
            \mathcal{Q}_n^{(1)}(Y_2) - \bar{\mathcal{Q}}_n(Y_2) & \mathcal{Q}_n^{(2)}(Y_2) - \bar{\mathcal{Q}}_n(Y_2) & \ldots & \mathcal{Q}_n^{(K)}(Y_2) - \bar{\mathcal{Q}}_n(Y_2) \\
            \ldots & \ldots & \ldots & \ldots \\
            \mathcal{Q}_n^{(1)}(Y_M) - \bar{\mathcal{Q}}_n(Y_M) & \mathcal{Q}_n^{(2)}(Y_M) - \bar{\mathcal{Q}}_n(Y_M) & \ldots & \mathcal{Q}_n^{(K)}(Y_M) - \bar{\mathcal{Q}}_n(Y_M)
            \label{e:centered_data_matrix}
        \end{bmatrix}
    \end{equation}
\end{proposition}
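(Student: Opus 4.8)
The plan is to reduce both estimated-weight formulas to the population-covariance forms already derived in the text and then show that the $1/(K-1)$ normalizations cancel. Recall that $\ubar{\bm{\alpha}}_{\text{CV}} = -\hat{\bm{\mathsf{C}}}^{-1}\hat{\bm{\mathsf{c}}}$ from \eqref{e:cv_sample_alpha}, while the sample versions of \eqref{e:alpha_star_is} and \eqref{e:alpha_star_mf} simply replace $\bm{\mathsf{C}}$ and $\bm{\mathsf{c}}$ by their estimates $\hat{\bm{\mathsf{C}}}$ and $\hat{\bm{\mathsf{c}}}$. Thus the whole proposition follows once I express $\hat{\bm{\mathsf{C}}}$ and $\hat{\bm{\mathsf{c}}}$ through the centered data matrix $\bm{D}$ and the vector $\bm{Q}(Y_0)$.

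First I would establish the two matrix identities $\bm{D}\bm{D}^{\text{T}} = (K-1)\hat{\bm{\mathsf{C}}}$ and $\bm{D}\bm{Q}(Y_0) = (K-1)\hat{\bm{\mathsf{c}}}$. The first is immediate: the $j$-th column of $\bm{D}$ in \eqref{e:centered_data_matrix} is exactly $\bm{\mathcal{Q}}_j - \bar{\bm{\mathcal{Q}}}$, so $\bm{D}\bm{D}^{\text{T}} = \sum_{j=1}^K (\bm{\mathcal{Q}}_j - \bar{\bm{\mathcal{Q}}})(\bm{\mathcal{Q}}_j - \bar{\bm{\mathcal{Q}}})^{\text{T}}$, which is precisely the outer-product sum defining $\hat{\bm{\mathsf{C}}}$ in \eqref{e:cv_sample_covariance} up to the factor $K-1$. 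The second identity is the only step needing an argument: $\bm{D}\bm{Q}(Y_0) = \sum_{j=1}^K (\bm{\mathcal{Q}}_j - \bar{\bm{\mathcal{Q}}})\mathcal{Q}_n^{(j)}(Y_0)$, whereas $\hat{\bm{\mathsf{c}}}$ pairs the centered columns against the \emph{centered} scalars $\mathcal{Q}_n^{(j)}(Y_0) - \bar{\mathcal{Q}}_n(Y_0)$. Their discrepancy is $\bar{\mathcal{Q}}_n(Y_0)\sum_{j=1}^K (\bm{\mathcal{Q}}_j - \bar{\bm{\mathcal{Q}}}) = \bm{0}$, since the centered columns of $\bm{D}$ sum to zero by the definition of $\bar{\bm{\mathcal{Q}}}$. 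Hence centering the $Y_0$ samples is unnecessary and the identity holds.

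With these in hand, \eqref{e:alpha_cv_sample} follows by substituting into $-\hat{\bm{\mathsf{C}}}^{-1}\hat{\bm{\mathsf{c}}}$ and canceling the common $(K-1)^{-1}$. For \eqref{e:alpha_acv_sample} I would use that the Hadamard product is bilinear, so the scalar $(K-1)^{-1}$ factors out of both $\hat{\bm{\mathsf{C}}} \circ \bm{F}^e = (K-1)^{-1}\left[(\bm{D}\bm{D}^{\text{T}}) \circ \bm{F}^e\right]$ and $\text{diag}(\bm{F}^e) \circ \hat{\bm{\mathsf{c}}} = (K-1)^{-1}\left[\text{diag}(\bm{F}^e) \circ (\bm{D}\bm{Q}(Y_0))\right]$; inverting the first term produces a compensating factor $(K-1)$ that cancels the surviving $(K-1)^{-1}$, yielding the stated expression. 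The main (and essentially only) obstacle is the second identity, but it reduces to the elementary observation that centered data sums to zero; everything else is bookkeeping together with the bilinearity of $\circ$.
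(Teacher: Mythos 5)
Your proof is correct and follows essentially the same route as the paper's: both express $\hat{\bm{\mathsf{C}}}$ and $\hat{\bm{\mathsf{c}}}$ through the centered data matrix, cancel the common $(K-1)$ factors (using bilinearity of the Hadamard product in the ACV case), and rely on the key fact that the centered columns sum to zero, i.e.\ $\bm{D}\bm{1}_K = \bm{0}_M$, to discard the $\bar{\mathcal{Q}}_n(Y_0)$ centering term. The only difference is ordering — you drop the $Y_0$-centering up front while the paper carries it along and eliminates it at the end — which is immaterial.
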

\begin{proof}
See Appendix~\ref{app:alpha_acv_sample}.
\end{proof}
The next step is to calculate the expectations in Theorem~\ref{thm:var_estimators} by finding the distribution of the expressions inside these operators. Coupled with the Gaussianity assumption, the specific structure of these expressions suggests the Hotelling's $T^2$ distribution~\cite{nist2012handbook}, which is confirmed by the second proposition.
\begin{proposition}
    \label{pro:q_mu_dist}
    \renewcommand{\theenumi}{\alph{enumi}}
    ~\begin{enumerate}
        \item Let the vector $\{\mathcal{Q}_n(Y_0),\mathcal{Q}_n(Y_1),\ldots,\mathcal{Q}_n(Y_M)\}$ have a multivariate normal distribution. Then,
        \begin{align}
            \bar{\bm{\mathcal{Q}}} - \bar{\bm{\mu}}^{\text{CV}} \sim \mathcal{N}\left( \bm{0}_M, \frac{\bm{\mathsf{C}}}{K} \right)
            \label{e:q_mu_cv_dist}
        \end{align}
        \item Let the vector $\{\mathcal{Q}_n(Y_0),\mathcal{Q}_n(Y_1),\ldots,\mathcal{Q}_n(Y_M),\mu_1^e,\mu_2^e,\ldots,\mu_M^e\}$, where $e \in \{\text{ACV-IS},\text{ACV-MF}\}$ and $\mu_i^e = \mathcal{Q}_{nr_i}(Y_i)$ for $i=1,2,\ldots,M$, have a multivariate normal distribution. Then,
        \begin{align}
            \bar{\bm{\mathcal{Q}}} - \bar{\bm{\mu}}^{e} &\sim \mathcal{N}\left(\bm{0}_{M}, \dfrac{\bm{\mathsf{C}} \circ \bm{F}^{e}}{K} \right)
            \label{e:q_mu_acv_dist}
        \end{align}
    \end{enumerate}
\end{proposition}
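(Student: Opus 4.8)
The plan is to exploit that each averaged difference vector is a sum over $K$ independent batches, so it inherits joint normality and has covariance equal to $1/K$ times the single-batch covariance. Under the stated multivariate-normality assumption every per-batch vector is jointly Gaussian, hence so is any linear functional of it; it therefore suffices to identify the mean and covariance of $\bar{\bm{\mathcal{Q}}} - \bar{\bm{\mu}}^e$ in each case.

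For part (a), $\bar{\bm{\mu}}^{\text{CV}} = \bm{\mu}$ is the constant vector of known means, so $\bar{\bm{\mathcal{Q}}} - \bm{\mu}$ is the average of $K$ i.i.d.\ centered copies of $\bm{\mathcal{Q}}_j \sim \mathcal{N}(\bm{\mu}, \bm{\mathsf{C}})$. Centering yields mean $\bm{0}_M$ and, by independence across batches, covariance $\bm{\mathsf{C}}/K$, which is exactly~\eqref{e:q_mu_cv_dist}.

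For part (b), the substantive work is the single-batch covariance of $\tilde{\bm{\mathcal{Q}}}^e$, whose $i$-th entry is $\mathcal{Q}_n(Y_i) - \mathcal{Q}_{nr_i}(Y_i)$, and $\bar{\bm{\mathcal{Q}}} - \bar{\bm{\mu}}^e$ is its average over $K$ batches. Both terms are unbiased for $\mu_i$, so the mean is $\bm{0}_M$. For the covariance I would invoke the elementary identity that two Monte Carlo estimators built from sample sets $A,B$ with $|A|=a$, $|B|=b$, $|A\cap B|=m$ satisfy $\cov{}{\mathcal{Q}_a(Y_i),\mathcal{Q}_b(Y_j)} = \frac{m}{ab}\cov{}{Y_i,Y_j}$. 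Expanding $\cov{}{\mathcal{Q}_n(Y_i)-\mathcal{Q}_{nr_i}(Y_i),\,\mathcal{Q}_n(Y_j)-\mathcal{Q}_{nr_j}(Y_j)}$ into four pairwise covariances and reading off the overlap count $m$ for each pair from the partitioning rules~\eqref{e:ACV-IS} and~\eqref{e:ACV-MF}, every term becomes a multiple of $\mathsf{c}_{ij} = \cov{}{Y_i,Y_j}/n$; the algebra then collapses to $\mathsf{c}_{ij} f_{ij}^e$, i.e.\ the $(i,j)$ entry of $\bm{\mathsf{C}} \circ \bm{F}^e$ with $\bm{F}^e$ from~\eqref{e:elem_acv_is} and~\eqref{e:elem_acv_mf}. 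Averaging over the $K$ independent batches scales this by $1/K$, giving~\eqref{e:q_mu_acv_dist}.

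The main obstacle is the overlap bookkeeping in part (b): one must correctly count shared samples for each of the four pairwise covariances under each scheme. For ACV-IS the term $\cov{}{\mathcal{Q}_{nr_i}(Y_i),\mathcal{Q}_{nr_j}(Y_j)}$ has overlap only $n$ (the common block $\bm{z}_0$, since the increments $\tilde{\bm{z}}_i^2$ are disjoint), which produces the factor $\frac{r_i-1}{r_i}\frac{r_j-1}{r_j}$. For ACV-MF the nested increments make that overlap $n\min(r_i,r_j)$, and the one nontrivial cancellation is verifying that $1 - \frac{1}{r_i} - \frac{1}{r_j} + \frac{\min(r_i,r_j)}{r_i r_j}$ simplifies to $\frac{\min(r_i,r_j)-1}{\min(r_i,r_j)}$. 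The diagonal entries coincide for both schemes and reduce to $\frac{r_i-1}{r_i}$ via $\vvar{}{\mathcal{Q}_n(Y_i)-\mathcal{Q}_{nr_i}(Y_i)}$ with overlap $n$.
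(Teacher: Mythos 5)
Your proof is correct. For part (a) it is essentially identical to the paper's argument: Gaussianity is inherited by linear maps, the mean vanishes by unbiasedness, and independence across the $K$ batches gives covariance $\bm{\mathsf{C}}/K$. For part (b), though, you take a genuinely more self-contained route. The paper does not compute the single-batch covariance at all --- it simply cites Appendices D and E of \cite{gorodetsky_generalized_2020} to assert $\vvar{}{\bar{\bm{\mathcal{Q}}} - \bar{\bm{\mu}}^{e}} = \left(\bm{\mathsf{C}} \circ \bm{F}^{e}\right)/K$ --- whereas you rederive it from first principles using the overlap identity $\cov{}{\mathcal{Q}_a(Y_i),\mathcal{Q}_b(Y_j)} = \frac{m}{ab}\cov{}{Y_i,Y_j}$ for estimators sharing $m$ samples, then expand each entry of the covariance of $\tilde{\bm{\mathcal{Q}}}^e$ into four pairwise terms. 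Your bookkeeping checks out: under ACV-IS every pairwise overlap is $n$ (the disjointness of the increments $\tilde{\bm{z}}_i^2$ is exactly what kills any extra overlap in the $\cov{}{\mathcal{Q}_{nr_i}(Y_i),\mathcal{Q}_{nr_j}(Y_j)}$ term), giving the off-diagonal factor $\frac{r_i-1}{r_i}\frac{r_j-1}{r_j}$ of~\eqref{e:elem_acv_is}; under ACV-MF the nesting $\bm{z}_i^2 \subseteq \bm{z}_j^2$ for $i<j$ gives overlap $n\min(r_i,r_j)$, and your claimed cancellation is right --- taking $r_i \leq r_j$ without loss of generality, $1 - \frac{1}{r_i} - \frac{1}{r_j} + \frac{r_i}{r_i r_j} = 1 - \frac{1}{r_i} = \frac{\min(r_i,r_j)-1}{\min(r_i,r_j)}$, matching~\eqref{e:elem_acv_mf}; and the diagonal $\frac{r_i-1}{r_i}$ is common to both schemes. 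What each approach buys: the paper's citation keeps its appendix short and defers to where the $\bm{F}^e$ matrices were originally derived, while your version makes the proposition self-contained and makes transparent exactly how the sample-partitioning rules~\eqref{e:ACV-IS} and~\eqref{e:ACV-MF} manufacture the entries of $\bm{F}^e$.
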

\begin{proof}
See Appendix~\ref{app:q_mu_dist}.
\end{proof}

We can obtain explicit expressions for the expectation in Theorem~\ref{thm:var_estimators} under certain reasonable limiting conditions on the ratio of low-fidelity to high-fidelity samples $r_i$. Theorem~\ref{thm:var_cv_sample_weight} summarize the variance reduction ratios of the ensemble CV-type estimators with respect to the baseline estimator $\mathcal{Q}_n(Y_0)$.
\begin{theorem}
    \label{thm:var_cv_sample_weight}
    \renewcommand{\theenumi}{\alph{enumi}}
    ~\begin{enumerate}
        \item Let the vector $\{\mathcal{Q}_n(Y_0),\mathcal{Q}_n(Y_1),\ldots,\mathcal{Q}_n(Y_M)\}$ have a multivariate normal distribution. Then,
        \begin{equation}
            \frac{\vvar{}{\mathcal{Q}^{\text{CV}}(\ubar{\bm{\alpha}}_{\text{CV}})}}{\vvar{}{\mathcal{Q}_n(Y_0)}} = (1-R^2) \left( 1 + \frac{M}{K-M-2} \right)
            \label{e:var_ratio_cv}
        \end{equation}
        \item Let the vector $\{\mathcal{Q}_n(Y_0),\mathcal{Q}_n(Y_1),\ldots,\mathcal{Q}_n(Y_M),\mu_1^e,\ldots,\mu_M^e\}$, where $e \in \{\text{ACV-IS},\text{ACV-MF}\}$ and $\mu_i^e = \mathcal{Q}_{nr_i}(Y_i)$ for $i=1,2,\ldots,M$, have a multivariate normal distribution. If we further assume that
        \begin{align}
            \text{(ACV-IS)} &\quad r_i \gg 1, \label{e:assm_acv_is} \\
            \text{(ACV-MF)} &\quad r_i = r,
            \label{e:assm_acv_mf}
        \end{align}
        for $i=1,2,\ldots,M$, then
        \begin{align}
            \frac{\vvar{}{\mathcal{Q}^e(\ubar{\bm{\alpha}}_e)}}{\vvar{}{\mathcal{Q}_n(Y_0)}} &= (1-R^2_e) \left( 1 + \frac{a(e)M}{K-M-2} \right),
            \label{e:var_ratio_acv}
        \end{align}
        where $a(\text{ACV-IS}) = 1$ and $a(\text{ACV-MF}) = \dfrac{r-1}{r}$.
    \end{enumerate}
\end{theorem}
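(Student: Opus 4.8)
The plan is to take the two closed-form variance expressions already supplied by Theorem~\ref{thm:var_estimators} and evaluate the single outstanding expectation in each, since the prefactors $(1-R^2)$, $(1-R_e^2)$ and the additive $\tfrac{1}{K}$ are handed to us directly by \eqref{e:var_q_cv} and \eqref{e:var_q_acv}. Thus the whole proof reduces to computing $\ee{\bm{\mathcal{Q}}}{(\bar{\bm{\mathcal{Q}}}-\bar{\bm{\mu}}^{\text{CV}})^{\text{T}}(\bm{D}\bm{D}^{\text{T}})^{-1}(\bar{\bm{\mathcal{Q}}}-\bar{\bm{\mu}}^{\text{CV}})}$ for part (a) and its Hadamard-weighted analogue for part (b), then adding $\tfrac{1}{K}$, dividing by $\vvar{}{\mathcal{Q}_n(Y_0)}$, and using $1+\tfrac{M}{K-M-2}=\tfrac{K-2}{K-M-2}$ to reach the stated form.

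For part (a), Proposition~\ref{pro:q_mu_dist} gives $\bar{\bm{\mathcal{Q}}}-\bar{\bm{\mu}}^{\text{CV}}\sim\mathcal{N}(\bm{0}_M,\bm{\mathsf{C}}/K)$ via \eqref{e:q_mu_cv_dist}, while the centered data matrix of Proposition~\ref{pro:alpha_sample_cov} makes $\bm{D}\bm{D}^{\text{T}}=\sum_{j=1}^K(\bm{\mathcal{Q}}_j-\bar{\bm{\mathcal{Q}}})(\bm{\mathcal{Q}}_j-\bar{\bm{\mathcal{Q}}})^{\text{T}}$ a Wishart matrix $W_M(\bm{\mathsf{C}},K-1)$. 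The key structural fact I would invoke is the classical independence of the sample mean $\bar{\bm{\mathcal{Q}}}$ from the centered deviations $\bm{D}$ for Gaussian data, so that the Gaussian vector and $\bm{D}\bm{D}^{\text{T}}$ are independent and the quadratic form is exactly a scaling of Hotelling's $T^2$. Its mean I would read off the $F_{M,K-M}$ distribution of the statistic (as the text's reference to $T^2$ anticipates), or equivalently through the first moment of the inverse-Wishart $\ee{}{(\bm{D}\bm{D}^{\text{T}})^{-1}}=\bm{\mathsf{C}}^{-1}/(K-M-2)$ together with the trace identity $\ee{}{\bm{x}^{\text{T}}\bm{W}^{-1}\bm{x}}=\Tr(\ee{}{\bm{W}^{-1}}\cov{}{\bm{x}})$; this gives $\Tr(\bm{\mathsf{C}}^{-1}\bm{\mathsf{C}}/K)/(K-M-2)=M/\big(K(K-M-2)\big)$. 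Adding $\tfrac1K$ and dividing by $\vvar{}{\mathcal{Q}_n(Y_0)}$ produces \eqref{e:var_ratio_cv}.

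For part (b) the same machinery applies once the Hadamard-product sandwich in \eqref{e:var_q_acv} is simplified, which is where the limiting assumptions \eqref{e:assm_acv_is}--\eqref{e:assm_acv_mf} enter. By \eqref{e:elem_acv_mf}, setting $r_i=r$ makes every entry of $\bm{F}^{\text{ACV-MF}}$ equal to $\tfrac{r-1}{r}$, so $\bm{F}^e=\tfrac{r-1}{r}\,\bm{1}_M\bm{1}_M^{\text{T}}$ and likewise $\bm{\mathcal{F}}_M^e=\tfrac{r-1}{r}\,\bm{1}_M\bm{1}_M^{\text{T}}$; by \eqref{e:elem_acv_is}, letting $r_i\gg1$ drives every entry of $\bm{F}^{\text{ACV-IS}}$ to $1$. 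In either regime each matrix is a constant multiple of the all-ones matrix, so a Hadamard product with $\bm{D}\bm{D}^{\text{T}}$ is just that scalar times $\bm{D}\bm{D}^{\text{T}}$; the three scalar factors in the sandwich $[(\bm{D}\bm{D}^{\text{T}})\circ\bm{F}^e]^{-1}[(\bm{D}\bm{D}^{\text{T}})\circ\bm{\mathcal{F}}_M^e\circ(\bm{\mathcal{F}}_M^e)^{\text{T}}][(\bm{D}\bm{D}^{\text{T}})\circ\bm{F}^e]^{-1}$ then cancel exactly and the middle matrix collapses to $(\bm{D}\bm{D}^{\text{T}})^{-1}$. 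With $\cov{}{\bar{\bm{\mathcal{Q}}}-\bar{\bm{\mu}}^e}=\bm{\mathsf{C}}\circ\bm{F}^e/K$ from \eqref{e:q_mu_acv_dist}, which reduces to $a(e)\bm{\mathsf{C}}/K$ with $a(\text{ACV-MF})=\tfrac{r-1}{r}$ and $a(\text{ACV-IS})=1$, the identical trace/inverse-Wishart computation returns $a(e)M/\big(K(K-M-2)\big)$, giving \eqref{e:var_ratio_acv}.

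The main obstacle I anticipate is entirely in part (b). First, I must rigorously justify that $\bar{\bm{\mathcal{Q}}}-\bar{\bm{\mu}}^e$ remains independent of $\bm{D}\bm{D}^{\text{T}}$, since $\bar{\bm{\mu}}^e$ is built from the extended estimators $\mathcal{Q}_{nr_i}(Y_i)$ that share samples with the $\mathcal{Q}_n(Y_i)$ entering $\bm{D}$ under the partitionings \eqref{e:ACV-IS}--\eqref{e:ACV-MF}; this requires appealing to the mean/covariance independence within the enlarged Gaussian vector rather than to a single sample. Second, I must be explicit about which equalities are exact and which hold only in the stated limit: the Hadamard collapse is exact for ACV-MF once $r_i=r$, whereas for ACV-IS it is precisely the content of the $r_i\gg1$ assumption. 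Confirming the degrees of freedom ($K-1$, producing the $K-M-2$ denominator) and checking that the diagonal-versus-off-diagonal structure of $\bm{F}^e$ aligns so that both $\bm{F}^e$ and $\bm{\mathcal{F}}_M^e$ become proportional to the same all-ones matrix is the only genuinely delicate bookkeeping.
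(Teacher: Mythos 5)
Your proposal is correct and follows essentially the same route as the paper's proof: both reduce the problem to the expectations in Theorem~\ref{thm:var_estimators}, use the entry-wise collapse of $\bm{F}^e$ and $\bm{\mathcal{F}}^e_M$ to a constant multiple of the all-ones matrix under assumptions~\eqref{e:assm_acv_is}--\eqref{e:assm_acv_mf}, and evaluate the resulting Gaussian quadratic form via Wishart theory with $K-1$ degrees of freedom, yielding the $K-M-2$ denominator. Your inverse-Wishart first moment plus trace identity is just an unrolled version of the paper's appeal to the mean of Hotelling's $T^2$, and your explicit handling of the independence of $\bar{\bm{\mathcal{Q}}}-\bar{\bm{\mu}}^e$ from $\bm{D}\bm{D}^{\text{T}}$ (through the enlarged i.i.d.\ Gaussian batch vectors) is a point the paper leaves implicit behind its textbook citation.
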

\begin{proof}
See Appendix~\ref{app:var_cv_sample_weight}.
\end{proof}

To guarantee variance reduction, the number of ensembles $K$ must be bounded from below as follows
\begin{corollary}
    Suppose $R^2 \neq 0$, $R^2_{e_1} \neq 0$ for $e_1 \in \{\text{ACV-IS},\text{ACV-MF}\}$. Furthermore, let $e_2 \in \{\text{CV},\text{ACV-IS},\text{ACV-MF}\}$ and
    \begin{equation}
        K > \max (M+2, B_{e_2} ),
        \label{e:K_B}
    \end{equation}
    where
    \begin{align}
        B_{\text{CV}} &= \frac{M}{R^2}+2, \\
        B_{\text{ACV-IS}} &= \frac{M}{R^2_{\text{ACV-IS}}}+2, \\
        B_{\text{ACV-MF}} &= \frac{r-1}{r}\frac{M}{R^2_{\text{ACV-MF}}} + \frac{M}{r} + 2.
        \label{e:B}
    \end{align}
    Then, we have
    \begin{equation*}
        \frac{\vvar{}{\mathcal{Q}^{e_2}(\ubar{\bm{\alpha}}_{e_2})}}{\vvar{}{\mathcal{Q}_n(Y_0)}} < 1.
    \end{equation*}
    \label{thm:lower_bounds}
\end{corollary}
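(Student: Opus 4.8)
The plan is to treat this as a direct consequence of the closed-form variance ratios in Theorem~\ref{thm:var_cv_sample_weight}, reducing the problem to solving a single scalar inequality in $K$ for each of the three estimators. Since the CV case and the two ACV cases share the same algebraic form $(1-R^2_{e_2})\left(1 + \frac{a(e_2)M}{K-M-2}\right)$, with $a(\text{CV}) = a(\text{ACV-IS}) = 1$ and $a(\text{ACV-MF}) = \frac{r-1}{r}$, I would handle all three simultaneously and only specialize $a(e_2)$ and $R^2_{e_2}$ at the end.

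First I would impose $K > M+2$ so that the denominator $K - M - 2$ is strictly positive; this both makes the expression from Theorem~\ref{thm:var_cv_sample_weight} well-defined and guarantees the bracketed factor exceeds one, which is precisely what forces a nontrivial lower bound on $K$ (otherwise the product with $(1-R^2_{e_2}) < 1$ would drop below one for free). Next I would use the hypotheses $R^2 \neq 0$ and $R^2_{e_1} \neq 0$ together with the fact that these coefficient-of-determination terms satisfy $0 < R^2_{e_2} \le 1$: the boundary case $R^2_{e_2} = 1$ makes the ratio vanish and is immediate, so I focus on $0 < R^2_{e_2} < 1$, where $1 - R^2_{e_2} > 0$ and dividing the target inequality $(1-R^2_{e_2})\left(1 + \frac{a(e_2)M}{K-M-2}\right) < 1$ by $1 - R^2_{e_2}$ preserves its direction.

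The core step is then the scalar manipulation. Dividing through yields $\frac{a(e_2)M}{K-M-2} < \frac{R^2_{e_2}}{1-R^2_{e_2}}$, and since both $K-M-2$ and $R^2_{e_2}$ are positive I can cross-multiply and solve for $K$, obtaining $K > \frac{a(e_2)M(1-R^2_{e_2})}{R^2_{e_2}} + M + 2$. Rewriting the right-hand side as $a(e_2)\frac{M}{R^2_{e_2}} + (1-a(e_2))M + 2$ and inserting the two values of $a(e_2)$ recovers exactly $B_{\text{CV}} = \frac{M}{R^2}+2$, $B_{\text{ACV-IS}} = \frac{M}{R^2_{\text{ACV-IS}}}+2$, and $B_{\text{ACV-MF}} = \frac{r-1}{r}\frac{M}{R^2_{\text{ACV-MF}}} + \frac{M}{r} + 2$ (using $1 - \frac{r-1}{r} = \frac{1}{r}$). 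Taking $K$ above both this bound and the earlier requirement $K > M+2$ produces the stated condition $K > \max(M+2, B_{e_2})$.

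I do not expect a genuine obstacle here—the argument is elementary algebra once Theorem~\ref{thm:var_cv_sample_weight} is available. The only point demanding care is bookkeeping of the sign conditions: I must verify $1 - R^2_{e_2} > 0$ and $K - M - 2 > 0$ before any division or cross-multiplication, since a sign slip would flip the inequality and invalidate the bound. A secondary sanity check I would record is that each $B_{e_2} \ge M+2$ (which follows from $R^2_{e_2} \le 1$), so the $\max$ in~\eqref{e:K_B} is stated for safety and is redundant except in the degenerate limit $R^2_{e_2} = 1$, where both branches coincide.
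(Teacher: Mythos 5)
Your proposal is correct and follows essentially the same route as the paper: both start from the closed-form ratio in Theorem~\ref{thm:var_cv_sample_weight} and solve the resulting scalar inequality for $K$ (using $K>M+2$ to fix the sign of the denominator), recovering the stated $B_{e_2}$ through the unified parameter $a(e_2)$. The only difference is that the paper proves the slightly more general statement with an arbitrary target threshold $y\in(0,1]$ --- hence its case analysis on the sign of $1-y-R^2_{e_2}$ and the contradiction argument ruling out $y+R^2_{e_2}<1$ --- and then sets $y=1$, whereas by fixing the threshold to $1$ from the outset you make that sign automatic and the case split unnecessary.
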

\begin{proof}
See Appendix~\ref{app:num_samples}.
\end{proof}

The above corollary implies that in order for $\vvar{}{\mathcal{Q}^e(\ubar{\bm{\alpha}}_e)} < \vvar{}{\mathcal{Q}_n(Y_0)}$, i.e., variance reduction of the CV-type estimators with estimated weight compared to the baseline estimator, the number of realizations $K$ of the random variable $\mathcal{Q}^e(\ubar{\bm{\alpha}}_e)$ has to be at least $\max (M+2,B_e)$, where $B_e$ depends on the sample partitioning, the correlation amongst models, and the number of low-fidelity models. This corollary recovers the result in \cite{lavenberg_perspective_1981} for the CV estimator.

From the proof of the corollary in Appendix~\ref{app:num_samples}, the results are obtained by setting the upper bound of the ratio $\dfrac{\vvar{}{\mathcal{Q}^e(\ubar{\bm{\alpha}}_e)}}{\vvar{}{\mathcal{Q}_n(Y_0)}}$ (i.e., $y$ in \eqref{e:num_samples}) equal to 1, which satisfies the constraint $y+R^2_e > 1$ for \textit{any} non-zero $R_e$, i.e., when $y=1$, we only need the models to be correlated and do not need a specific value of the correlation coefficient. In case we know the correlation amongst models, we may choose a smaller upper bound such that $y > 1 - R^2_e$, e.g., if $R^2_e=0.9$, $y$ can take any value in the interval $(0.1, 1]$. This reflects the principle of the CV-based methods: stronger correlation leads to smaller variance.

The corollary explicitly ties the correlation amongst models to the sampling requirements. This type of connection has previously been ignored in the general case of the ACV-like estimators. Furthermore, it provides an avenue through which to inject \textit{problem specific} information as correlations. In this respect it can be used in multi-level Monte Carlo~\cite{giles_multi-level_2020} schemes for which a convergence rate for a numerical method is used to determine optimal allocations and guarantee convergence. We envision that these types of problems can also be amenable to deriving expressions for the correlation.

Finally, to choose $K$ and the numbers of samples in practice we can solve an optimization problem in which (1) $K$ and the numbers of samples are design variables; (2) the variance of an appropriate estimator is minimized; and (3) the total cost is bounded above and $K$ is constrained by~\eqref{e:K_B}. We leave solving such a problem for future work.

\section{Multi-fidelity importance sampling control-variate estimator}
\label{s:mf_e}

We now combine the CV-type estimators with importance sampling (IS) to target rare-event calculations. The resulting estimator will closely parallel that of~\cite{peherstorfer_multifidelity_2016}, with the primary difference being the leveraging of control variates for further variance reduction.

Specifically, instead of the Monte Carlo estimator, we now use importance sampling as the baseline estimators for both the CV and ACV estimators, yielding
\begin{equation}
    \mathcal{Q}^{\text{MF}}_{\hat{q},n}\left( \mathcal{Q}^{\text{IS}}_{\hat{q},n}(Y_0), \mathcal{Q}^{\text{IS}}_{\hat{q},n}(Y_1), \alpha\right) =  \mathcal{Q}^{\text{IS}}_{\hat{q},n}(Y_0) + \alpha\left( \mathcal{Q}^{\text{IS}}_{\hat{q},n}(Y_1) - \mu_1\right)
    \label{e:MF}
\end{equation}
for the CV and 
\begin{equation}
    \mathcal{Q}^{\text{MF-ACV}}_{\hat{q},n}\left( \mathcal{Q}^{\text{IS}}_{\hat{q},n}(Y_0), \mathcal{Q}^{\text{IS}}_{\hat{q},n}(Y_1), \alpha\right) =  \mathcal{Q}^{\text{IS}}_{\hat{q},n}(Y_0) + \alpha\left( \mathcal{Q}^{\text{IS}}_{\hat{q},n}(Y_1) - \mu_1^{\text{IS}}\right) \label{e:MF_ACV}
\end{equation}
for the ACV, where $\mu_1$ is the known mean of $Y_1$ and $\mu_1^{\text{IS}}$ is the IS estimator for the mean of $Y_1$, i.e., $\mu_1^{\text{IS}} = \mathcal{Q}^{\text{IS}}_{\hat{q},m}(Y_1)$ for $m \in \mathbb{N}_0$. Here, $\hat{q}$ is a the biasing distribution that has the property $\text{supp}(Y_0p) \subseteq \text{supp}(\hat{q})$. For simplicity of presentation, we only consider the case with a single additional low fidelity model.
Now Eqs. \eqref{e:var_cv_alpha}, \eqref{e:var_acv}, \eqref{e:opt_alpha} and \eqref{e:acv_opt_alpha} become
\begin{align}
    \vvar{\hat{q}}{\mathcal{Q}^{\text{MF}}_{\hat{q},n}\left( \mathcal{Q}^{\text{IS}}_{\hat{q},n}(Y_0), \mathcal{Q}^{\text{IS}}_{\hat{q},n}(Y_1), \alpha\right)} &= \vvar{\hat{q}}{\mathcal{Q}^{\text{IS}}_{\hat{q},n}(Y_0)} + \alpha^2 \vvar{\hat{q}}{\mathcal{Q}^{\text{IS}}_{\hat{q},n}(Y_1)} + 2\alpha \cov{\hat{q}}{\mathcal{Q}^{\text{IS}}_{\hat{q},n}(Y_0),\mathcal{Q}^{\text{IS}}_{\hat{q},n}(Y_1)}, \label{e:var_MF} \\
    \vvar{\hat{q}}{\mathcal{Q}^{\text{MF-ACV}}_{\hat{q},n}\left( \mathcal{Q}^{\text{IS}}_{\hat{q},n}(Y_0), \mathcal{Q}^{\text{IS}}_{\hat{q},n}(Y_1), \alpha\right)} &= \vvar{\hat{q}}{\mathcal{Q}^{\text{IS}}_{\hat{q},n}(Y_0)} + \alpha^2\Big( \vvar{\hat{q}}{\mathcal{Q}^{\text{IS}}_{\hat{q},n}(Y_1)} + \vvar{\hat{q}}{\mu_1^{\text{IS}}} - 2\cov{\hat{q}}{\mathcal{Q}^{\text{IS}}_{\hat{q},n}(Y_1),\mu_1^{\text{IS}}} \Big) \nonumber \\
    &\;\;\; + 2\alpha
    \left(\cov{\hat{q}}{\mathcal{Q}^{\text{IS}}_{\hat{q},n}(Y_0),\mathcal{Q}^{\text{IS}}_{\hat{q},n}(Y_1)} - \cov{\hat{q}}{\mathcal{Q}^{\text{IS}}_{\hat{q},n}(Y_0), \mu_1^{\text{IS}}}\right), \label{e:var_MF_ACV} \\
    \alpha^*_{\text{CV}} &= - \frac{\cov{\hat{q}}{\mathcal{Q}^{\text{IS}}_{\hat{q},n}(Y_0),\mathcal{Q}^{\text{IS}}_{\hat{q},n}(Y_1)}}{\vvar{\hat{q}}{\mathcal{Q}^{\text{IS}}_{\hat{q},n}(Y_1)}}, \\
    \alpha^*_{\text{ACV}} &= - \frac{\cov{\hat{q}}{\mathcal{Q}^{\text{IS}}_{\hat{q},n}(Y_0),\mathcal{Q}^{\text{IS}}_{\hat{q},n}(Y_1)} - \cov{\hat{q}}{\mathcal{Q}^{\text{IS}}_{\hat{q},n}(Y_0), \mu_1^{\text{IS}}}}{\vvar{\hat{q}}{\mathcal{Q}^{\text{IS}}_{\hat{q},n}(Y_1)} + \vvar{\hat{q}}{\mu_1^{\text{IS}}} - 2\cov{\hat{q}}{\mathcal{Q}^{\text{IS}}_{\hat{q},n}(Y_1), \mu_1^{\text{IS}}}}. \label{e:opt_alpha_ACV_MF}
\end{align}

\subsection{Properties of the MF estimators}
The proposed estimators are unbiased.

\begin{theorem}
    Suppose $\text{supp}(Y_0p) \subseteq \text{supp}(\hat{q})$. Then, $\mathcal{Q}^{\text{MF}}_{\hat{q},n}$ and $\mathcal{Q}^{\text{MF-ACV}}_{\hat{q},n}$ are unbiased estimators of the expected value $\mu_0$.
    \label{thm:unbias}
\end{theorem}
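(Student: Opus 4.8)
The plan is to compute $\ee{\hat{q}}{\cdot}$ of each estimator directly and exploit linearity of expectation together with the unbiasedness of the underlying importance sampling estimators recorded in~\eqref{e:is_unbias}. The key structural observation is that each estimator is the sum of a baseline IS term and a control correction, so I only need to verify that the baseline term has expectation $\mu_0$ and that the correction has vanishing expectation.

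First I would treat the CV case~\eqref{e:MF}. Taking expectations and using linearity, $\ee{\hat{q}}{\mathcal{Q}^{\text{MF}}_{\hat{q},n}} = \ee{\hat{q}}{\mathcal{Q}^{\text{IS}}_{\hat{q},n}(Y_0)} + \alpha\left(\ee{\hat{q}}{\mathcal{Q}^{\text{IS}}_{\hat{q},n}(Y_1)} - \mu_1\right)$. By~\eqref{e:is_unbias} applied to $Y_0$ the first term equals $\mu_0$, and by the same unbiasedness statement applied to $Y_1$ we have $\ee{\hat{q}}{\mathcal{Q}^{\text{IS}}_{\hat{q},n}(Y_1)} = \mu_1$, so the correction collapses to $\alpha(\mu_1-\mu_1)=0$ and the estimator is unbiased.

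Next I would handle the ACV case~\eqref{e:MF_ACV}, whose only difference is that the known mean $\mu_1$ is replaced by the estimator $\mu_1^{\text{IS}} = \mathcal{Q}^{\text{IS}}_{\hat{q},m}(Y_1)$. Since the IS estimator is unbiased for every sample size, $\ee{\hat{q}}{\mu_1^{\text{IS}}} = \mu_1 = \ee{\hat{q}}{\mathcal{Q}^{\text{IS}}_{\hat{q},n}(Y_1)}$, so the correction term again has zero expectation by linearity. I would emphasize that this conclusion requires nothing about the joint distribution of $\mathcal{Q}^{\text{IS}}_{\hat{q},n}(Y_1)$ and $\mu_1^{\text{IS}}$: unbiasedness depends only on the two marginal means, hence it is insensitive to whether these IS estimators share samples or use independent draws (that coupling matters only for the variance expressions such as~\eqref{e:var_MF_ACV}).

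The step requiring the most care is justifying $\ee{\hat{q}}{\mathcal{Q}^{\text{IS}}_{\hat{q},n}(Y_1)} = \mu_1$, since the stated hypothesis only guarantees domination of $Y_0 p$ by $\hat{q}$. To invoke~\eqref{e:is_unbias} for the low-fidelity term I would either add the analogous condition $\text{supp}(Y_1 p)\subseteq \text{supp}(\hat{q})$ explicitly, or argue that it holds automatically in the rare-event setting: the Gaussian-mixture biasing density~\eqref{e:GMM} is fit to the low-fidelity failure region, so its support contains $\text{supp}(Y_1 p)$, and its full-support normal components in fact dominate $p$ everywhere. With this domination in place both corrections are mean-zero and both estimators return $\mu_0$ in expectation.
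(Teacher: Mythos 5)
Your proof is correct and follows essentially the same route as the paper's: expand by linearity of expectation and invoke the IS unbiasedness~\eqref{e:is_unbias} for the baseline term and for both forms of the control correction, so that the correction has zero mean and the estimators return $\mu_0$. In fact you are slightly more careful than the paper, whose proof applies~\eqref{e:is_unbias} to $Y_1$ even though the stated hypothesis only dominates $Y_0p$; your fix---either adding $\text{supp}(Y_1p)\subseteq\text{supp}(\hat{q})$ as an explicit hypothesis or invoking the global support of the Gaussian-mixture proposal---is precisely the gap-closing the paper itself only supplies afterward, in the corollary specializing $\hat{q}$ to a GMM.
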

\begin{proof}
Let us consider the expected values of $\mathcal{Q}^{\text{MF}}_{\hat{q},n}$ and $\mathcal{Q}^{\text{MF-ACV}}_{\hat{q},n}$ with respect to $\hat{q}$
\begin{equation*}
    \begin{aligned}
        \ee{\hat{q}}{\mathcal{Q}^{\text{MF}}_{\hat{q},n}(\mathcal{Q}^{\text{IS}}_{\hat{q},n}(Y_0),\mathcal{Q}^{\text{IS}}_{\hat{q},n}(Y_1), \alpha)} &= \ee{\hat{q}}{\mathcal{Q}^{\text{IS}}_{\hat{q},n}(Y_0)} + \alpha\ee{\hat{q}}{\mathcal{Q}^{\text{IS}}_{\hat{q},n}(Y_1) - \mu_1}, \\
        \ee{\hat{q}}{\mathcal{Q}^{\text{MF-ACV}}_{\hat{q},n}\left( \mathcal{Q}^{\text{IS}}_{\hat{q},n}(Y_0), \mathcal{Q}^{\text{IS}}_{\hat{q},n}(Y_1), \alpha\right)} &= \ee{\hat{q}}{\mathcal{Q}^{\text{IS}}_{\hat{q},n}(Y_0)} + \alpha\ee{\hat{q}}{\mathcal{Q}^{\text{IS}}_{\hat{q},n}(Y_1) - \mu_1^{\text{IS}}}.
    \end{aligned}
\end{equation*}
Since $\text{supp}(Y_0p) \subseteq \text{supp}(\hat{q})$, which allows us to apply \eqref{e:is_unbias}, we have $\ee{\hat{q}}{\mathcal{Q}^{\text{IS}}_{\hat{q},n}(Y_i)} = \ee{}{Y_i(Z)} = \mu_i \text{ for } i=0,1,$ and $\ee{\hat{q}}{\mu_1^{\text{IS}}} = \ee{}{Y_1(Z)} = \mu_1$. Thus,
\begin{equation*}
    \begin{aligned}
        \ee{\hat{q}}{\mathcal{Q}^{\text{MF}}_{\hat{q},n}(\mathcal{Q}^{\text{IS}}_{\hat{q},n}(Y_0),\mathcal{Q}^{\text{IS}}_{\hat{q},n}(Y_1), \alpha)} &= \mu_0, \\
        \ee{\hat{q}}{\mathcal{Q}^{\text{MF-ACV}}_{\hat{q},n}\left( \mathcal{Q}^{\text{IS}}_{\hat{q},n}(Y_0), \mathcal{Q}^{\text{IS}}_{\hat{q},n}(Y_1), \alpha\right)} &= \mu_0,
    \end{aligned}
\end{equation*}
which implies $\mathcal{Q}^{\text{MF}}_{\hat{q},n}$ and $\mathcal{Q}^{\text{MF-ACV}}_{\hat{q},n}$ are unbiased estimators of the expected value $\mu_0$.
\end{proof}

This result means that the Gaussian mixture can be used as a proposal.
\begin{corollary}
    If $\hat{q}$ is a Gaussian mixture as in \eqref{e:GMM}, $\mathcal{Q}^{\text{MF}}_{\hat{q},n}$ and $\mathcal{Q}^{\text{MF-ACV}}_{\hat{q},n}$ are unbiased estimators of the expected value $\mu_0$.
\end{corollary}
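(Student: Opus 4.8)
The plan is to recognize this corollary as an immediate specialization of Theorem~\ref{thm:unbias}, whose sole hypothesis is the support-containment condition $\text{supp}(Y_0 p) \subseteq \text{supp}(\hat{q})$. All that remains is to verify that this condition holds automatically whenever $\hat{q}$ is a Gaussian mixture of the form~\eqref{e:GMM}, and then to invoke the theorem directly. No new analysis of the estimators themselves is needed, since their unbiasedness has already been established under the generic support assumption.

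First I would compute the support of $\hat{q}$. Each component density $\mathcal{N}(\bm{z};\bm{\mu}_i,\bm{\Sigma}_i)$ with positive-definite covariance $\bm{\Sigma}_i$ is strictly positive at every $\bm{z} \in \reals^d$, because its Gaussian kernel $\exp\!\left(-\tfrac{1}{2}(\bm{z}-\bm{\mu}_i)^{\text{T}}\bm{\Sigma}_i^{-1}(\bm{z}-\bm{\mu}_i)\right)$ never vanishes. Since the mixture weights obey $0 \le \pi_i \le 1$ and $\sum_{i=1}^k \pi_i = 1$, at least one $\pi_i$ is strictly positive, whence $\hat{q}(\bm{z}) = \sum_{i=1}^k \pi_i \mathcal{N}(\bm{z};\bm{\mu}_i,\bm{\Sigma}_i) > 0$ for every $\bm{z} \in \reals^d$. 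Consequently $\text{supp}(\hat{q}) = \reals^d$.

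With full support established, the containment $\text{supp}(Y_0 p) \subseteq \reals^d = \text{supp}(\hat{q})$ holds trivially, as $\text{supp}(Y_0 p)$ is by definition a subset of $\reals^d$. The hypothesis of Theorem~\ref{thm:unbias} is therefore satisfied, and applying that theorem yields the unbiasedness of both $\mathcal{Q}^{\text{MF}}_{\hat{q},n}$ and $\mathcal{Q}^{\text{MF-ACV}}_{\hat{q},n}$ as estimators of $\mu_0$.

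The only point requiring care — and the single mild obstacle — is the positive-definiteness of the covariances $\bm{\Sigma}_i$ returned by the fitting routine. If some $\bm{\Sigma}_i$ were singular, that component would be supported on a proper affine subspace of $\reals^d$, and the full-support argument could break down. I would resolve this by noting that the EM/cross-entropy construction of Appendix~\ref{app:EM} produces nondegenerate covariance matrices, so each component genuinely charges all of $\reals^d$ and the degenerate edge case does not occur.
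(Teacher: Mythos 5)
Your proof is correct and takes essentially the same route as the paper: both arguments show the Gaussian mixture is strictly positive on all of $\reals^d$, so that $\text{supp}(Y_0p) \subseteq \text{supp}(\hat{q})$ holds trivially and Theorem~\ref{thm:unbias} applies directly. Your added attention to the positive-definiteness of the $\bm{\Sigma}_i$ (ruling out degenerate components) is a small refinement that the paper leaves implicit, but it does not change the substance of the argument.
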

\begin{proof}
Because $\pi_i$ are probabilities and $\mathcal{N}(\bm{z};\bm{\mu}_i,\bm{\Sigma}_i) > 0,\forall \bm{z} \in \mathbb{R}^d,i=1,2,\ldots,k$, the GMM in \eqref{e:GMM} has global support, that is, $\hat{q}(\bm{z})>0, \forall \bm{z} \in \mathbb{R}^d$. This leads to $\text{supp}(Y_0p) \subseteq \text{supp}(\hat{q})$ and, hence, Theorem~\ref{thm:unbias} holds.
\end{proof}

As with any control-variate estimator, variance reduction is greater when the correlation between estimators $\mathcal{Q}^{\text{IS}}_{\hat{q},n}(Y_0)$ and $\mathcal{Q}^{\text{IS}}_{\hat{q},n}(Y_1)$ is larger. In fact, as shown in the below theorems, for \textit{any} non-zero correlation the variance of our estimator is smaller than that of the multi-fidelity importance sampling (MFIS) estimator $\mathcal{Q}^{\text{MFIS}}_{\hat{q},n}$ presented in \cite{peherstorfer_multifidelity_2016}. Specifically, the MFIS estimator uses a proposal density that is derived from $Y_1$. In other words, the MFIS estimator leverages low-fidelity information sources to design the proposal distribution and then uses this proposal distribution within a standard importance sampling scheme for the high-fidelity model $Y_0.$ 

Assuming that we use the same proposal, our proposed estimator is guaranteed to have lower variance than the MFIS for a certain range of the control weight. This further reduction is achieved by leveraging the low-fidelity models again as control variates. 
When the control variate weight is zero and the proposal distribution matches, we obtain an equivalent estimator
$\mathcal{Q}^{\text{MFIS}}_{\hat{q},n}\left( Y_0 \right) =  \mathcal{Q}^{\text{IS}}_{\hat{q},n}(Y_0).
$, when $\alpha=0.$

We can generally choose the weight $\alpha$ to achieve greater variance reduction by leveraging the correlation between $Y_0$ and $Y_1$.
Formally, Theorem~\ref{thm:CV} states that if the weight $\alpha$ belongs to a certain range and $\vvar{}{Q(Y_1)} \neq 0$, then the variance of the CV estimator $\mathcal{Q}^{\text{CV}}(\alpha)$ is bounded above by that of the estimator $Q(Y_0)$. The equality happens when there is no correlation between $Q(Y_0)$ and $Q(Y_1)$.

\begin{theorem}[Range of control variate weight for CV estimator] \label{thm:CV}
    Let  $\mathcal{Q}^{\text{CV}}(\alpha) = Q(Y_0) + \alpha \left(Q(Y_1) - \mu_1\right)$ with $\vvar{}{Q(Y_1)} > 0$, where $Q(Y_0)$ and $Q(Y_1)$ are unbiased estimators for the means of $Y_0$ and $Y_1$, respectively. Furthermore, let 
    \begin{equation}
        \bar{f} = -\dfrac{2\cov{}{Q(Y_0),Q(Y_1)}}{\vvar{}{Q(Y_1)}} 
    \end{equation}
    denote a scaled ratio of the covariance to the variance of $Q(Y_1)$. 
    
    If $\bar{f} \geq 0$ (resp. $\bar{f} \leq 0$), then for control variate weight in the range $\alpha \in \left[0, \bar{f}\right]$ (resp. $\alpha \in \left[\bar{f}, 0\right]$), the variance of the CV estimator is bounded above by that of the baseline estimator, i.e., $\vvar{}{\mathcal{Q}^{\text{CV}}(\alpha)} \leq \vvar{}{Q(Y_0)}.$
    
    Equality is obtained for $\alpha = 0.$
\end{theorem}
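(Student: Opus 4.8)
The plan is to reduce the statement to the elementary sign analysis of a quadratic polynomial in the scalar weight $\alpha$. First I would record the variance of the CV estimator. Since $\mu_1$ is a deterministic constant, the adjustment term contributes no extra variance beyond that of $Q(Y_1)$, so
\[
    \vvar{}{\mathcal{Q}^{\text{CV}}(\alpha)} = \vvar{}{Q(Y_0)} + \alpha^2\,\vvar{}{Q(Y_1)} + 2\alpha\,\cov{}{Q(Y_0),Q(Y_1)},
\]
which is just the single-model specialization ($M=1$) of \eqref{e:var_cv_alpha}. Subtracting the baseline variance gives the quantity whose sign controls everything,
\[
    \Delta(\alpha) := \vvar{}{\mathcal{Q}^{\text{CV}}(\alpha)} - \vvar{}{Q(Y_0)} = \alpha^2\,\vvar{}{Q(Y_1)} + 2\alpha\,\cov{}{Q(Y_0),Q(Y_1)}.
\]

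Next I would rewrite $\Delta$ in factored form using the definition of $\bar{f}$. Since $\vvar{}{Q(Y_1)} > 0$, the relation $\bar{f} = -2\cov{}{Q(Y_0),Q(Y_1)}/\vvar{}{Q(Y_1)}$ is well defined and gives $2\cov{}{Q(Y_0),Q(Y_1)} = -\bar{f}\,\vvar{}{Q(Y_1)}$. Substituting yields
\[
    \Delta(\alpha) = \vvar{}{Q(Y_1)}\,\alpha(\alpha - \bar{f}).
\]
Because the prefactor $\vvar{}{Q(Y_1)}$ is strictly positive, the sign of $\Delta(\alpha)$ equals the sign of the parabola $\alpha(\alpha-\bar{f})$, which opens upward and has its two roots at $\alpha = 0$ and $\alpha = \bar{f}$.

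Finally I would carry out the sign analysis. The product $\alpha(\alpha-\bar{f})$ is nonpositive precisely for $\alpha$ lying between its two roots. When $\bar{f} \geq 0$ this interval is $[0,\bar{f}]$, and when $\bar{f} \leq 0$ it is $[\bar{f}, 0]$; in either case $\Delta(\alpha) \leq 0$, i.e.\ $\vvar{}{\mathcal{Q}^{\text{CV}}(\alpha)} \leq \vvar{}{Q(Y_0)}$ throughout the stated range, which is the claim. Equality $\Delta(\alpha)=0$ occurs exactly at the two roots $\alpha = 0$ and $\alpha = \bar{f}$, so in particular at $\alpha = 0$, recovering the baseline estimator. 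I do not anticipate a genuine obstacle: the content is purely the factorization and the orientation of an upward-opening parabola. The only points demanding care are to invoke $\vvar{}{Q(Y_1)} > 0$ both to define $\bar{f}$ and to fix the parabola's orientation, and to keep the two sign cases for $\bar{f}$ organized so that the reported interval endpoints appear in the correct order.
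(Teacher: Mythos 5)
Your proof is correct and follows essentially the same route as the paper's: both reduce the claim to the sign of the quadratic $V(\alpha)=\alpha^2\vvar{}{Q(Y_1)}+2\alpha\cov{}{Q(Y_0),Q(Y_1)}$ and read off the interval between its two roots. Your explicit factorization $\Delta(\alpha)=\vvar{}{Q(Y_1)}\,\alpha(\alpha-\bar{f})$ is a cosmetic repackaging of the paper's factored form $\alpha\left(\alpha\vvar{}{Q(Y_1)}+2\cov{}{Q(Y_0),Q(Y_1)}\right)$, not a different argument.
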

\begin{proof}
Let
\begin{align*}
    V(\alpha) &= \vvar{}{\mathcal{Q}^{\text{CV}}(\alpha)} - \vvar{}{Q(Y_0)} \\
      &= \left( \vvar{}{Q(Y_0)} + \alpha^2\vvar{}{Q(Y_1)} + 2\alpha\cov{}{Q(Y_0),Q(Y_1)}  \right) - \vvar{}{Q(Y_0)} \\
      &= \alpha\left(\alpha\vvar{}{Q(Y_1)} + 2\cov{}{Q(Y_0),Q(Y_1)}\right),
\end{align*}
where the second equality uses~\eqref{e:var_cv_alpha} with one low-fidelity model.
Therefore,
\begin{align*}
    V(\alpha) \leq 0 \iff
    \left[\begin{aligned}
        &\alpha \geq 0 \text{ and } \alpha \leq \bar{f} \Rightarrow \text{ if } \bar{f} \geq 0, \text{ then } \alpha \in \left[0, \bar{f}\right]. \\
        &\alpha \leq 0 \text{ and } \alpha \geq \bar{f} \Rightarrow \text{ if } \bar{f} \leq 0, \text{ then } \alpha \in \left[\bar{f}, 0\right].
    \end{aligned}
    \right.
\end{align*}
\end{proof}

The below corollary confirms the advantage of our estimator over the MFIS one: aside from the trivial case of independent models, using an appropriate value of $\alpha$ ensures variance reduction.
\begin{corollary}[Range of control variate weight for MF estimator]
    Let 
    \begin{align*}
        \tilde{f} &= -\dfrac{2\cov{\hat{q}}{\mathcal{Q}^{\text{IS}}_{\hat{q},n}(Y_0),\mathcal{Q}^{\text{IS}}_{\hat{q},n}(Y_1)}}{\vvar{\hat{q}}{\mathcal{Q}^{\text{IS}}_{\hat{q},n}(Y_1)}}, 
    \end{align*}
    and $\vvar{\hat{q}}{\mathcal{Q}^{\text{IS}}_{\hat{q},n}(Y_1)} > 0$. If $\tilde{f} \geq 0$ (resp. $\tilde{f} \leq 0$), then for control variate weight in the range $\alpha \in [0,\tilde{f}]$ (resp. $\alpha \in [\tilde{f},0]$) the variance of the MF estimator is bounded above by that of the MFIS estimator, i.e., $\vvar{\hat{q}}{\mathcal{Q}^{\text{MF}}_{\hat{q},n}\left( \mathcal{Q}^{\text{IS}}_{\hat{q},n}(Y_0), \mathcal{Q}^{\text{IS}}_{\hat{q},n}(Y_1), \alpha\right)} \leq \vvar{\hat{q}}{\mathcal{Q}^{\text{MFIS}}_{\hat{q},n}(Y_0)}$.
    
    Equality is obtained for $\alpha = 0.$
    \label{cor:MF_MFIS}
\end{corollary}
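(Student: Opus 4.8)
The plan is to recognize this corollary as the direct specialization of Theorem~\ref{thm:CV} to importance-sampling baselines. First I would set $Q(Y_0) = \mathcal{Q}^{\text{IS}}_{\hat{q},n}(Y_0)$ and $Q(Y_1) = \mathcal{Q}^{\text{IS}}_{\hat{q},n}(Y_1)$, with all moments taken with respect to the proposal $\hat{q}$. Under the standing support assumption $\text{supp}(Y_0 p) \subseteq \text{supp}(\hat{q})$ used throughout Section~\ref{s:mf_e}, the identity~\eqref{e:is_unbias} guarantees that these IS estimators are unbiased for $\mu_0$ and $\mu_1$, so they meet the hypotheses of Theorem~\ref{thm:CV}; the extra nondegeneracy condition $\vvar{\hat{q}}{\mathcal{Q}^{\text{IS}}_{\hat{q},n}(Y_1)} > 0$ is assumed in the corollary itself.

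Next I would observe that the MF estimator in~\eqref{e:MF} is exactly the CV estimator $\mathcal{Q}^{\text{CV}}(\alpha) = Q(Y_0) + \alpha\left(Q(Y_1) - \mu_1\right)$ for this choice of baselines, and that the scalar $\tilde{f}$ in the corollary coincides term-for-term with the $\bar{f}$ of Theorem~\ref{thm:CV} under the substitution. Moreover, the MFIS estimator of~\cite{peherstorfer_multifidelity_2016} is the baseline itself, $\mathcal{Q}^{\text{MFIS}}_{\hat{q},n}(Y_0) = \mathcal{Q}^{\text{IS}}_{\hat{q},n}(Y_0) = Q(Y_0)$, which is precisely the $\alpha = 0$ member of the family.

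With these identifications the conclusion is immediate: Theorem~\ref{thm:CV} states that for $\tilde{f} \geq 0$ and $\alpha \in [0, \tilde{f}]$ (resp.\ $\tilde{f} \leq 0$ and $\alpha \in [\tilde{f}, 0]$) we have $\vvar{\hat{q}}{\mathcal{Q}^{\text{CV}}(\alpha)} \leq \vvar{\hat{q}}{Q(Y_0)}$, which is exactly the claimed bound $\vvar{\hat{q}}{\mathcal{Q}^{\text{MF}}_{\hat{q},n}} \leq \vvar{\hat{q}}{\mathcal{Q}^{\text{MFIS}}_{\hat{q},n}(Y_0)}$, with equality at $\alpha = 0$.

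There is essentially no analytical obstacle here, since the statement is a corollary; the only care needed is bookkeeping---confirming that the MFIS baseline matches the CV baseline and that $\tilde{f} = \bar{f}$. If a self-contained argument were preferred over invoking Theorem~\ref{thm:CV}, I would instead start from~\eqref{e:var_MF}, subtract $\vvar{\hat{q}}{\mathcal{Q}^{\text{IS}}_{\hat{q},n}(Y_0)}$, and factor the difference as $\alpha\bigl(\alpha\,\vvar{\hat{q}}{\mathcal{Q}^{\text{IS}}_{\hat{q},n}(Y_1)} + 2\cov{\hat{q}}{\mathcal{Q}^{\text{IS}}_{\hat{q},n}(Y_0),\mathcal{Q}^{\text{IS}}_{\hat{q},n}(Y_1)}\bigr)$; a sign analysis of this quadratic in $\alpha$ reproduces the two stated intervals and the equality case. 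It is worth noting that unbiasedness is not actually required for the variance inequality itself---shifting by the constant $-\mu_1$ leaves the variance unchanged---so the support assumption is only invoked to ensure that the two estimators target the same quantity $\mu_0$, making the variance comparison meaningful.
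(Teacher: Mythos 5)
Your proposal is correct and follows exactly the paper's route: the paper proves this corollary by substituting $p$, $Q(Y_0)$, and $Q(Y_1)$ in Theorem~\ref{thm:CV} with $\hat{q}$, $\mathcal{Q}^{\text{IS}}_{\hat{q},n}(Y_0)$, and $\mathcal{Q}^{\text{IS}}_{\hat{q},n}(Y_1)$, which is precisely your identification of the MF estimator as the CV estimator with IS baselines and of MFIS as the $\alpha=0$ member. Your added bookkeeping (unbiasedness via the support condition, $\tilde{f}=\bar{f}$, and the optional direct quadratic argument) is consistent with, and slightly more explicit than, the paper's one-line proof.
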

\begin{proof}
Substitute $p, Q(Y_0)$ and $Q(Y_1)$ in Theorem~\ref{thm:CV} with $\hat{q}, \mathcal{Q}^{\text{IS}}_{\hat{q},n}(Y_0)$ and $\mathcal{Q}^{\text{IS}}_{\hat{q},n}(Y_1)$, respectively.
\end{proof}

Theorem~\ref{thm:CV} can be extended straightforwardly to the approximate control variates described in Section~\ref{ss:acv}.
\begin{theorem}[Range of control variate weight for ACV estimator]
    \label{thm:range_ACV}
    Let
    \begin{align*}
         s_1 &= \vvar{}{Q(Y_1) - \mu_1^{\text{ACV}}} > 0, \\
         s_2 &= \cov{}{Q(Y_0),Q(Y_1)-\mu_1^{\text{ACV}}}.
    \end{align*}
    If $s_2 \leq 0$ (resp. $s_2 \geq 0$), then for control variate weight in the range $\alpha \in \left[0,-\dfrac{2s_2}{s_1}\right]$ $\left(\textit{resp. } \alpha \in \left[-\dfrac{2s_2}{s_1},0\right]\right)$ the variance of the ACV estimator is bounded above by that of the baseline estimator, i.e., $\vvar{}{\mathcal{Q}^{\text{ACV}}(\alpha)} \leq \vvar{}{Q(Y_0)}$.
    
    Equality is obtained for $\alpha = 0.$
\end{theorem}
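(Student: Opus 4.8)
The plan is to reduce the statement to the single--model specialization of Theorem~\ref{thm:CV}, with the adjusted difference $Q(Y_1)-\mu_1^{\text{ACV}}$ playing the role that $Q(Y_1)$ played there. First I would specialize the general ACV variance expression~\eqref{e:var_acv} to the case of a single low-fidelity model ($M=1$, so that $\bm{\alpha}=\alpha$ is a scalar), which gives
\begin{equation*}
    \vvar{}{\mathcal{Q}^{\text{ACV}}(\alpha)} = \vvar{}{Q(Y_0)} + \alpha^2 \vvar{}{Q(Y_1)-\mu_1^{\text{ACV}}} + 2\alpha \cov{}{Q(Y_0),Q(Y_1)-\mu_1^{\text{ACV}}}.
\end{equation*}
In the notation of the statement this is exactly $\vvar{}{Q(Y_0)} + \alpha^2 s_1 + 2\alpha s_2$.

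Next I would define $V(\alpha) = \vvar{}{\mathcal{Q}^{\text{ACV}}(\alpha)} - \vvar{}{Q(Y_0)}$ and factor it as
\begin{equation*}
    V(\alpha) = \alpha^2 s_1 + 2\alpha s_2 = \alpha\left(\alpha s_1 + 2 s_2\right),
\end{equation*}
a quadratic in $\alpha$ with roots at $\alpha=0$ and $\alpha=-2s_2/s_1$. Because $s_1>0$ the parabola opens upward, so $V(\alpha)\le 0$ holds precisely for $\alpha$ lying between the two roots. A sign analysis on $s_2$ then fixes their order: if $s_2\le 0$ the nonzero root $-2s_2/s_1\ge 0$ sits to the right of $0$, yielding the interval $\left[0,-2s_2/s_1\right]$; if $s_2\ge 0$ it sits to the left, yielding $\left[-2s_2/s_1,0\right]$. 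Evaluating $V(0)=0$ then establishes the claimed equality at $\alpha=0$ (equality also holds at the second root, but this is not needed).

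The argument is essentially identical to that of Theorem~\ref{thm:CV}, so there is no substantive obstacle; the one point worth checking is the $M=1$ reduction of~\eqref{e:var_acv} and the bookkeeping of which quantities map to $s_1$ and $s_2$. In particular, unlike the classical CV setting, I would emphasize that no assumption on the joint behaviour of $\mu_1^{\text{ACV}}$ with $Q(Y_0)$ or $Q(Y_1)$ is required: the scalars $s_1$ and $s_2$ already encode the full covariance structure of the composite random variable $Q(Y_1)-\mu_1^{\text{ACV}}$, so the quadratic-factoring step carries over verbatim. Consequently the proof can be stated in a single line by invoking Theorem~\ref{thm:CV} with $Q(Y_1)$ replaced by $Q(Y_1)-\mu_1^{\text{ACV}}$ and with $\vvar{}{Q(Y_1)}$ and $\cov{}{Q(Y_0),Q(Y_1)}$ replaced by $s_1$ and $s_2$ respectively.
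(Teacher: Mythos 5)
Your proposal is correct and follows exactly the route the paper intends: its proof of Theorem~\ref{thm:range_ACV} is just the instruction to specialize~\eqref{e:var_acv} to one low-fidelity model, form $V(\alpha)=\vvar{}{\mathcal{Q}^{\text{ACV}}(\alpha)}-\vvar{}{Q(Y_0)}$, and repeat the sign analysis of Theorem~\ref{thm:CV}, which is precisely what you carry out (and your observation that $s_1,s_2$ absorb the covariance structure of $Q(Y_1)-\mu_1^{\text{ACV}}$ is the reason the reduction works verbatim).
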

\begin{proof}
This theorem is proved by following the same steps as in the proof of Theorem~\ref{thm:CV}, which are first simplify the difference $V(\alpha) = \vvar{}{\mathcal{Q}^{\text{ACV}}(\alpha)} - \vvar{}{Q(Y_0)}$ using~\eqref{e:var_acv} with one low-fidelity model, and then find $\alpha$ such that $V(\alpha) \leq 0$. We skip the details for brevity.
\end{proof}

In words, the theorem states that if the variance of $Q(Y_1) - \mu_1^{\text{ACV}}$ is non-zero, we can always choose $\alpha$ from an interval depending on the sign of $\cov{}{Q(Y_0),Q(Y_1)} - \cov{}{Q(Y_0), \mu_1^{\text{ACV}}}$ so that the ACV estimator has smaller variance than the MC estimator.

These results and the ensemble estimator variance reduction results provide strong motivation towards using the control-variate weight for further variance reduction even when the weight cannot be exactly estimated. These results indicate that there is a range of values of the weight that still leads to variance reduction, i.e., that some weight estimate can have an error but still be beneficial. The ensemble estimator variance results provide a sufficient condition for the number of samples required to guarantee a particular size of variance reduction.

\subsection{Algorithms}

Algorithms \ref{al:MF_CCV} and \ref{al:MF_ACV} provide pseudocode to implement the ensemble estimators for the control variate and the approximate control variate with the importance sampling approach. These algorithms requires the selection of a low-fidelity model correlated to the high-fidelity model, the input parameters to the EM algorithm, and the number of samples. Using these inputs, the algorithms produce an estimate of the expected value of the high-fidelity model. We note that Algorithm \ref{al:MF_ACV} uses the ACV-IS strategy, and adapting it to the ACV-MF strategy is trivial and omitted for brevity.

\begin{algorithm}[htb]
    \caption{Ensemble control variate estimator using importance sampling}
    \label{al:MF_CCV}
    \begin{algorithmic}[1]
    \REQUIRE $Y_0,Y_1$: high-fidelity and low-fidelity model; $p$: PDF of the input random variables; $\mu_1$: expected value of $Y_1$, i.e., $\mu_1=\ee{}{Y_1}$; $n_s,\tau,k_{\text{init}}$: parameters of the EM algorithm; $R$: correlation coefficient; $\mathcal{C}$: target cost, or $\zeta$: target accuracy;
    \ENSURE $\hat{\mu}_0$: estimate of the expected value of $Y_0$, i.e., $\hat{\mu}_0\approx\ee{}{Y_0}$; $v_{\text{CV}}$: estimate of the sample variance of the ensemble CV estimator.
    \STATE Compute the approximate IS density $\hat{q}$ using the EM algorithm with the parameters $n_s,\tau,$ and $k_{\text{init}}$
    \STATE Determine the number of outer loops $K$ and the number of samples $n$ using $R$ and $\mathcal{C}$ (or $\zeta$) (e.g., solving an optimization problem) 
    \FOR{$i=1,2,\ldots,K$}
        \STATE Draw $n$ samples $\bm{z}_1,\bm{z}_2,\ldots,\bm{z}_n$ from the density $\hat{q}$
    \FOR{$j=1,2,\ldots,n$}
        \STATE $Q_{j}(Y_k) = Y_k(\bm{z}_j)\widehat{W}(\bm{z}_j)$ for $k=0,1$
    \ENDFOR
        \STATE $\bar{Q}_i(Y_k) = \dfrac{1}{n}\displaystyle\sum_{j=1}^{n} Q_{j}(Y_k)$ for $k=0,1$
    \ENDFOR
    \STATE $\tilde{Q}(Y_k) = \dfrac{1}{K}\displaystyle\sum_{i=1}^{K}\bar{Q}_i(Y_k)$ for $k=0,1$
    \STATE $\hat{c} = \dfrac{1}{K-1} \displaystyle\sum_{i=1}^K \left( \bar{Q}_i(Y_0) - \tilde{Q}(Y_0) \right)\left( \bar{Q}_i(Y_1) - \tilde{Q}(Y_1) \right)$; $\hat{s}_k = \dfrac{1}{K-1} \displaystyle\sum_{i=1}^K \left( \bar{Q}_i(Y_k) - \tilde{Q}(Y_k) \right)^2$ for $k=0,1$
    \STATE $\ubar{\alpha} = -\dfrac{\hat{c}}{\hat{s}_1}$
    \STATE $\hat{\mu}_0 = \tilde{Q}(Y_0) + \ubar{\alpha}\left( \tilde{Q}(Y_1) - \mu_1 \right)$; $\bar{v}_{\text{CV}} = \dfrac{1}{K} \left( \hat{s}_0 + \ubar{\alpha}^2 \hat{s}_1 + 2\ubar{\alpha} \hat{c} \right)$ \hfill\COMMENT{\eqref{e:var_MF}}
    \end{algorithmic}
\end{algorithm}

\begin{algorithm}[htb]
    \caption{Ensemble approximate control variate estimator using importance sampling}
    \label{al:MF_ACV}
    \begin{algorithmic}[1]
    \REQUIRE $Y_0,Y_1$: high-fidelity and low-fidelity model; $p$: PDF of the input random variables; $n_s,\tau,k_{\text{init}}$: parameters of the EM algorithm; $R$: correlation coefficient; $\mathcal{C}$: target cost, or $\zeta$: target accuracy;
    \ENSURE $\hat{\mu}_0$: estimate of the expected value of $Y_0$, i.e., $\hat{\mu}_0\approx\ee{}{Y_0}$; $v_{\text{IS}}$: estimate of the sample variance of the ensemble ACV-IS estimator.
    \STATE Compute the approximate IS density $\hat{q}$ using the EM algorithm with the parameters $n_s,\tau,$ and $k_{\text{init}}$
    \STATE Determine the number of outer loops $K$, and the number of samples $n$ and $m$ using $R$ and $\mathcal{C}$ (or $\zeta$) (e.g., solving an optimization problem) 
    \FOR{$i=1,2,\ldots,K$}
        \STATE Draw $n$ samples $\{\bm{z}_1,\bm{z}_2,\ldots,\bm{z}_{n}\}$ and $m$ samples $\{\bm{z}'_1,\bm{z}'_2,\ldots,\bm{z}'_{m}\}$ from the density $\hat{q}$
    \FOR{$j=1,2,\ldots,n$}
        \STATE $Q_{j}(Y_k) = Y_k(\bm{z}_j)\widehat{W}(\bm{z}_j)$ for $k=0,1$
    \ENDFOR
    \FOR{$j=1,2,\ldots,m$}
        \STATE $Q'_{j}(Y_1) = Y_1(\bm{z}'_j)\widehat{W}(\bm{z}'_j)$
    \ENDFOR
    \STATE $\bar{Q}_i(Y_k) = \dfrac{1}{n}\displaystyle\sum_{j=1}^{n} Q_{j}(Y_k)$ for $k=0,1$
    \STATE $\bar{Q}'_i(Y_1) = \dfrac{1}{n+m}\left( \displaystyle\sum_{j=1}^{n} Q_{j}(Y_1) + \displaystyle\sum_{j=1}^{m} Q'_{j}(Y_1) \right)$
    \ENDFOR
    \STATE $\tilde{Q}(Y_k) = \dfrac{1}{K}\displaystyle\sum_{i=1}^{K}\bar{Q}_i(Y_k)$ for $k=0,1$; $\tilde{Q}'(Y_1) = \dfrac{1}{K}\displaystyle\sum_{i=1}^{K}\bar{Q}'_i(Y_1)$
    \STATE $\hat{s}_k = \dfrac{1}{K-1} \displaystyle \sum_{i=1}^{K} \left( \bar{Q}_i(Y_k) - \tilde{Q}(Y_k) \right)^2$ for $k=0,1$; $\hat{s}' = \dfrac{1}{K-1} \displaystyle \sum_{i=1}^{K} \left( \bar{Q}'_i(Y_1) - \tilde{Q}'(Y_1) \right)^2$
    \STATE $\hat{c} = \dfrac{1}{K-1} \displaystyle \sum_{i=1}^{K} \left( \bar{Q}_i(Y_0) - \tilde{Q}(Y_0) \right) \left( \bar{Q}_i(Y_1) - \tilde{Q}(Y_1) \right)$
    \STATE $\hat{c}'_k = \dfrac{1}{K-1} \displaystyle \sum_{i=1}^{K} \left( \bar{Q}_i(Y_k) - \tilde{Q}(Y_k) \right) \left( \bar{Q}'_i(Y_1) - \tilde{Q}'(Y_1) \right)$ for $k=0,1$
    \STATE $\ubar{\alpha} = -\dfrac{\hat{c}-\hat{c}'_0}{\hat{s}_1 + \hat{s}' - 2\hat{c}'_1}$ \hfill\COMMENT{\eqref{e:opt_alpha_ACV_MF}}
    \STATE $\hat{\mu}_0 = \tilde{Q}(Y_0) + \ubar{\alpha} \left( \tilde{Q}(Y_1) - \tilde{Q}'(Y_1) \right)$; $\bar{v}_{\text{IS}} = \dfrac{1}{K} \left( \hat{s}_0 + \ubar{\alpha}^2 (\hat{s}_1 + \hat{s}' - 2\hat{c}'_1) + 2\ubar{\alpha} (\hat{c}-\hat{c}'_0) \right)$ \hfill\COMMENT{\eqref{e:var_MF_ACV}}
    \end{algorithmic}
\end{algorithm}

\section{Numerical Results}
\label{s:results}
In this section we demonstrate the performance of our ensemble importance-sampling control variate algorithms for rare-event estimation for three examples. The first example is a simple case of estimating tail probabilities involving normal random variables. The second example is a cantilever beam whose material uncertainty is modeled as a random field discretized by the Karhunen-Lo\`{e}ve expansion. Our third example analyzes a clamped Mindlin plate in bending under random loads and material properties. 
To focus on demonstrating the benefit of the control variates and exercising our theory, we focus only on problems with a single low-fidelity model.

The simplicity of the first example allows cheap evaluations of its models, and so enables us to disambiguate between sources of errors, such as lack of optimality in the proposal distribution. The second example is more realistic than the first one in the sense that it solves a system of PDEs by the finite element method and computing output statistics on a dense mesh is often expensive. Based on the similar settings as the first example of \cite{peherstorfer_multifidelity_2016}, our last example aims to stress the benefit of our proposed estimator on a more complex problem.

For each example we provide the governing equations, definitions of limit state functions, and the choice of HF and LF models. We then describe implementation details such as parameters of the EM algorithm and number of samples. In each example we aim to compare the variance of several estimator to compare their performance with the MFIS estimator under equal costs. In each example we fix the total cost to $\mathcal{C}_{\text{MFIS}}$, and then only use a number of high fidelity samples $n_{\text{HF}}$ and low-fidelity samples $n_{\text{LF}}$ to ensure
$\mathcal{C}_{\text{MFIS}} = n_{\text{HF}}\mathcal{C}_{\text{HF}} + n_{\text{LF}}\mathcal{C}_{\text{LF}}$, where  $\mathcal{C}_{\text{HF}}$ and $\mathcal{C}_{\text{LF}}$ are the cost of the MFIS estimator, the cost of one evaluation of the HF and LF models, respectively. Note that we only compare \textit{online} cost, where the low-fidelity distribution was already computed, because the offline cost is equal for all algorithms (i.e., they all use the same biasing distribution). The values of $n_{\text{HF}}$ and $n_{\text{LF}}$ are determined in each example from either a stated assumption or from the runtime of the implementation.

All examples~\cite{cvis_codes} are implemented in Matlab and use the cross-entropy (CE) code from \cite{CE_codes} with some trivial modifications to integrate with our estimator. All quantities given below are dimensionless for simplicity. For convenience, the figures in this section use shortened notations for the variances of considered estimators, i.e.,
\begin{align}
    v_{\text{CV}} &= \vvar{\hat{q}}{\mathcal{Q}^{\text{MF}}_{\hat{q},n}}, \\
    v_0 &= \vvar{\hat{q}}{\mathcal{Q}^{\text{MFIS}}_{\hat{q},n}},  \\
    v_{\text{IS}} &= \vvar{\hat{q}}{\mathcal{Q}^{\text{MF-1}}_{\hat{q},n}}, \\
    \bar{v}_{\text{CV}} &= \vvar{\hat{q}}{\bar{\mathcal{Q}}^{\text{MF}}_{\hat{q},K}}, \\
    \bar{v}_{\text{IS}} &= \vvar{\hat{q}}{\bar{\mathcal{Q}}^{\text{MF-1}}_{\hat{q},K}},
\end{align}
where $\bar{\mathcal{Q}}^{\text{MF}}_{\hat{q},K}$ and $\bar{\mathcal{Q}}^{\text{MF-1}}_{\hat{q},K}$ are the ensemble estimators defined in \eqref{e:cv_sample_mean} and \eqref{e:acv_sample_mean} when replacing $\mathcal{Q}^{\text{CV}}$ and $\mathcal{Q}^{\text{ACV}}$ with $\mathcal{Q}^{\text{MF}}_{\hat{q},n}$ and $\mathcal{Q}^{\text{MF-1}}_{\hat{q},n}$, respectively, and using $K$ ensembles; and MF-1 indicates the use of the ACV-IS scheme. Estimates of the variance of each ensemble estimators are available from Algorithms \ref{al:MF_CCV} and \ref{al:MF_ACV}, and for reference we recall the expressions of $v_{\text{CV}}$ and $v_{\text{IS}}$ are Eqs.~\eqref{e:var_MF} and~\eqref{e:var_MF_ACV}. However, these variances are only valid for cases with large $K$. In cases with small $K$ these algorithms need to run many times to evaluate the empirical variances of the estimators. Lastly, we note that the true mean in the control variate approach is determined from either analytical expression (the first example) or using an extra set of a very large number of samples (the second and third example).

\subsection{Analytical example}
\label{ss:ex-1}

We first consider an analytical example where we seek to evaluate tail probabilities of Gaussians. Our aim is to explore: (1) how non-optimality in the proposal distribution affects the variance reduction, and (2) how much benefit we obtain in both over the process that uses a single-fidelity importance sampling estimator based on a low-fidelity proposal~\cite{peherstorfer_multifidelity_2016}.

We consider a standard normal input space $Z \sim \mathcal{N}(0,1)$, and two limit-state functions $g_0$ and $g_1$. In this problem we will treat $g_0$ as the high-fidelity model. The failure probabilities can be analytically computed using the standard normal cumulative distribution function $\Psi$ according to $P_f(g_0(z) < 0) = 1 - \Psi(l_0)$, where $l_0=3$ is our chosen threshold. The high-fidelity limit-state function becomes $g_0(z) = l_0 - z.$

A low-fidelity model would have an error in the failure probability threshold, and we posit that such an error occurs from an incorrect specification of a threshold. In this case we use $l_1 = 2.8$ as the threshold to make this model ``lower-fidelity'' $g_1(z) = l_1 - z.$

\subsubsection{Experiment with threshold sequence}
\begin{figure}[h]
    \centering
    \includegraphics[width=2.5in]{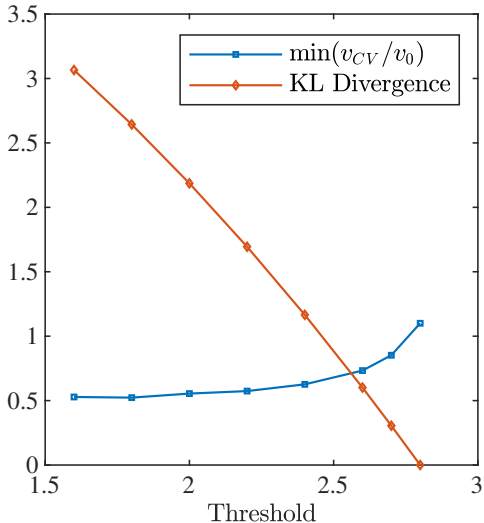}
    \caption{Minimum variance ratio of $v_{\text{CV}}$ to $v_0$, and KL divergence between the approximate low-fidelity distribution and the exact low-fidelity distribution vs. threshold $l^{(i)}$}
    \label{f:Ex1:tol_min}
\end{figure}
The cross-entropy method does not provide a fine-grained control over approximation error due to the approximation quality of the Gaussian mixture model. However, we would like to investigate how such approximation quality affects algorithm performance. To this end, this subsection investigates an alternative way to generate proposal distributions with fine-grained control on error.

Our alternative approach is to use a sequence of biasing distributions obtained by specifying an alternate set of limits $g^{(i)}(z) = l^{(i)} - z$, $l^{(i)} \in \{1.6,1.8,\dots,2.8\}$, and to use rejection sampling to sample form them exactly. When $l^{(i)} = 2.8$, we are exactly sampling from the optimal proposal for the low-fidelity model. As the threshold decreases to $l^{(i)}=1.6$, our proposal distribution has increasing error (as any GMM proposal would). In other words, as the intermediate thresholds $l^{(i)}$ approach the LF threshold $l_1$, the corresponding intermediate IS densities progressively become better and ultimately the low-fidelity IS density. In this manner, we have disambiguated the error due to sub-optimal biasing distributions and those due to low-fidelity effects.

Figure~\ref{f:Ex1:tol_min} shows two curves: the red curve is the  Kullback-Leibler (KL) divergence between the approximate low-fidelity distribution and the exact low-fidelity distribution, and the blue curve is the minimum variance ratio $\min(v_{\text{CV}}/v_0)$ between the control-variate and the MFIS estimators.
As the KL divergence converges to zero, the variance ratio $\min(v_{\text{CV}}/v_0)$ approaches 1. This behavior is expected because it corresponds to perfect sampling of the low-fidelity model, which leads to a zero variance estimate for the low-fidelity model. Specifically, as this variance vanishes, the contribution to the covariance vanishes, and we recover $v_{\text{CV}}=v_0$.

We also see that as the KL divergence increases, $\min(v_{\text{CV}}/v_0)$ reaches a plateau. This behavior aligns well with the fact that the variance reduction must depend on the correlation amongst models and we cannot reduce $v_{\text{CV}}/v_0$ to 0 simply by using more crude approximations of the LF biasing distribution.

\subsubsection{Varying the control variate weight for a fixed proposal}
Next we show that even if the estimate of the control variate weight is not extremely accurate in practice, there is still a interval of weights in which our estimator is able to achieve a larger variance reduction.

Figure~\ref{f:Ex-1:tol_15} shows the ratio of the variance between $v_{\text{CV}}$ and $v_0$ with respect to varying weight $\alpha$ for the proposal based on $l^{(i)}=1.6.$ Such a dependence is quadratic according to~\eqref{e:var_MF}.
The threshold of equal performance is shown in red.
It is clear from Figure~\ref{f:Ex-1:tol_15} that there is a range of $\alpha$ in which the CV estimator has small variance than the MFIS estimator. This fact has been established in Corollary~\ref{cor:MF_MFIS}. In other words, although the CV estimator has an additional parameter to estimate---there is a range of weights for which it still improves upon the baseline estimator.

\begin{figure}
    \centering
    \includegraphics[width=2.5in]{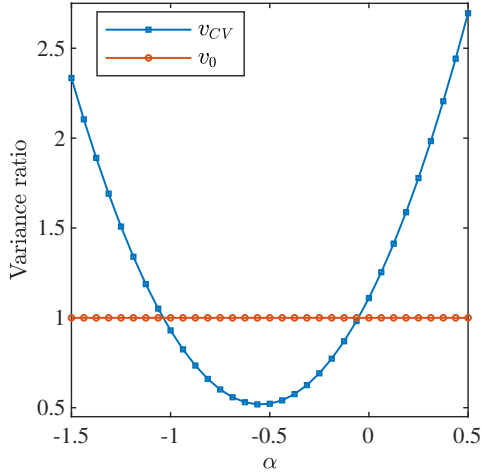}
    \caption{$\alpha$ vs. $v_{\text{CV}}/v_0$ for $l^{(i)}=1.6$.}
    \label{f:Ex-1:tol_15}
\end{figure}
\begin{figure}[h]
    \centering
    \includegraphics[width=3in]{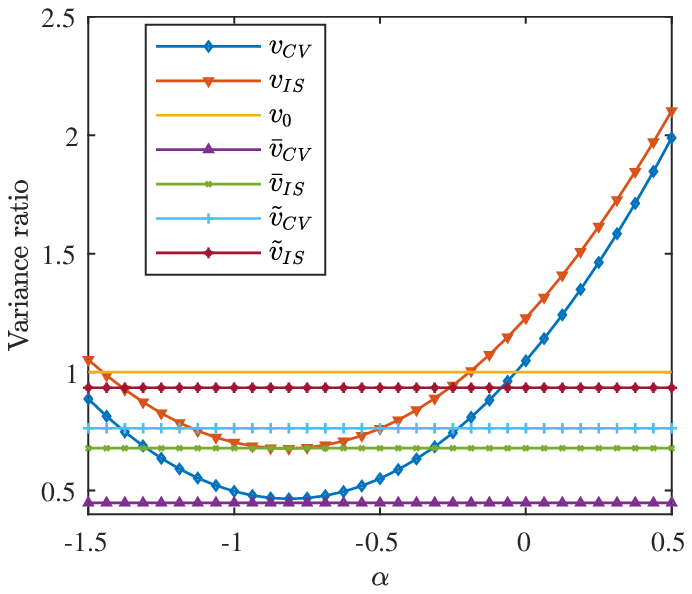}
    \caption{$\alpha$ vs. variance ratios using the EM algorithm with GMM; $\tilde{v}_{\text{CV}} = \vvar{\hat{q}}{\bar{\mathcal{Q}}^{\text{MF}}_{\hat{q},K}}$ and
    $\tilde{v}_{\text{IS}} = \vvar{\hat{q}}{\bar{\mathcal{Q}}^{\text{MF-1}}_{\hat{q},K}}$ with small $K$ (i.e., $K=4$).}
    \label{f:Ex-1:ce_ve_v0_alpha}
\end{figure}

\subsubsection{Experiment with a fixed threshold using the EM algorithm}
We now switch from the accurate biasing distribution sampling to the EM algorithm approach of Section~\ref{ss:is_e}. This algorithm is deployed to construct the approximate density from the low-fidelity model (i.e., $l_1=2.8$), from which samples are drawn to calculate the estimated variances
in Figure~\ref{f:Ex-1:ce_ve_v0_alpha}. The input parameters of the EM algorithm are $n_s=3000, \tau=0.1$ and the initial number of mixture components $k_{\text{init}}=3$. The variance of the baseline estimator $v_0$ is calculated using $n_0=5 \times 10^5$ samples. We assume further that for this example the HF model is 30 times more expensive to evaluate than the LF model.
As shown in Table~\ref{tbl:ex1_sample}, the number of samples used to calculate other estimators
is chosen such that they consume the same cost as the baseline estimator. To compute the variances of the ensemble estimators, i.e., $\bar{v}_{\text{CV}}$ and $\bar{v}_{\text{IS}}$, we utilize the sample sets of the corresponding component estimators (i.e., $\mathcal{Q}^{\text{MF}}_{\hat{q},n}$ and $\mathcal{Q}^{\text{MF-1}}_{\hat{q},n}$) but divide each of them into $K=1000$ batches, and then apply Algorithms~\ref{al:MF_CCV} and~\ref{al:MF_ACV}. 
\begin{table}[h]
    \centering
    \begin{threeparttable}
    \begin{tabular}{ c c c c }
    \toprule
        & $v_0$ & $v_{\text{CV}}$ & $v_{\text{IS}}$ \\
    \cmidrule{1-4}
        $n_{\text{HF}}$ & 500000 & 483870 & 434782 \\
    \cmidrule{1-4}
        $n_{\text{LF}}$ & & 483870 & 1956519 \\
    \bottomrule
    \end{tabular}
    \caption{Sample allocation for the first example.}
    \label{tbl:ex1_sample}
    \end{threeparttable}
\end{table}

Figure~\ref{f:Ex-1:ce_ve_v0_alpha} shows that:
\begin{enumerate}[label=\arabic*., topsep=0pt, parsep=0pt, partopsep=0pt, itemsep=0pt]
    \item The ACV estimators  $\mathcal{Q}^{\text{MF}}_{\hat{q},n}$ and $\mathcal{Q}^{\text{MF-1}}_{\hat{q},n}$ have smaller variances than that of MFIS over a certain range of weights;
    \item As reported in~\cite{gorodetsky_generalized_2020}, $v_{\text{CV}}$ is the smallest among the estimators since it uses the exact mean of $Y_1$;
    \item Since $\bar{v}_{\text{CV}}$ and $\bar{v}_{\text{IS}}$ utilize estimated optimal control weight, their variances approximate the minimum of $v_{\text{CV}}$ and $v_{\text{IS}}$;
    \item Even with small $K$, the ensemble estimators $\tilde{v}_{\text{CV}}$ and $\tilde{v}_{\text{IS}}$ are still smaller than $v_0$ according to Corollary~\ref{thm:lower_bounds}.
\end{enumerate}
As a reference, Figure~\ref{f:Ex1:ce_qce} shows the HF, LF and approximate density---denoted by $q_0$, $q_1$ and $\hat{q}$, respectively. Here, $\hat{q}$ is the approximate optimal density $q_1$ obtained  via the GMM approximation.
Being the cross-entropy approximation to the LF density, the function $\hat{q}$ fluctuates around $q_1$; in particular, the vertical line of $q_1$ at $2.8$ is estimated by a very steep curve.
\begin{figure}[ht]
    \centering
    \includegraphics[width=2.5in]{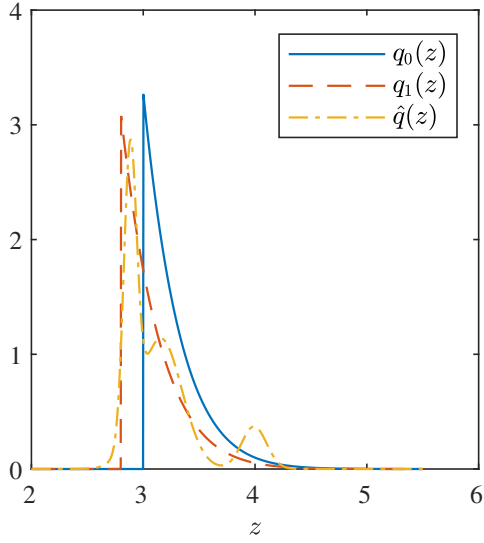}
    \caption{High-fidelity, low-fidelity and approximate density.}
    \label{f:Ex1:ce_qce}
\end{figure}

\subsection{Cantilever beam}
\label{ss:ex-2}

In this section, we compare estimators for a cantilever beam with uncertain material properties. %

\begin{figure}[ht]
    \centering
    \includegraphics[scale=0.3]{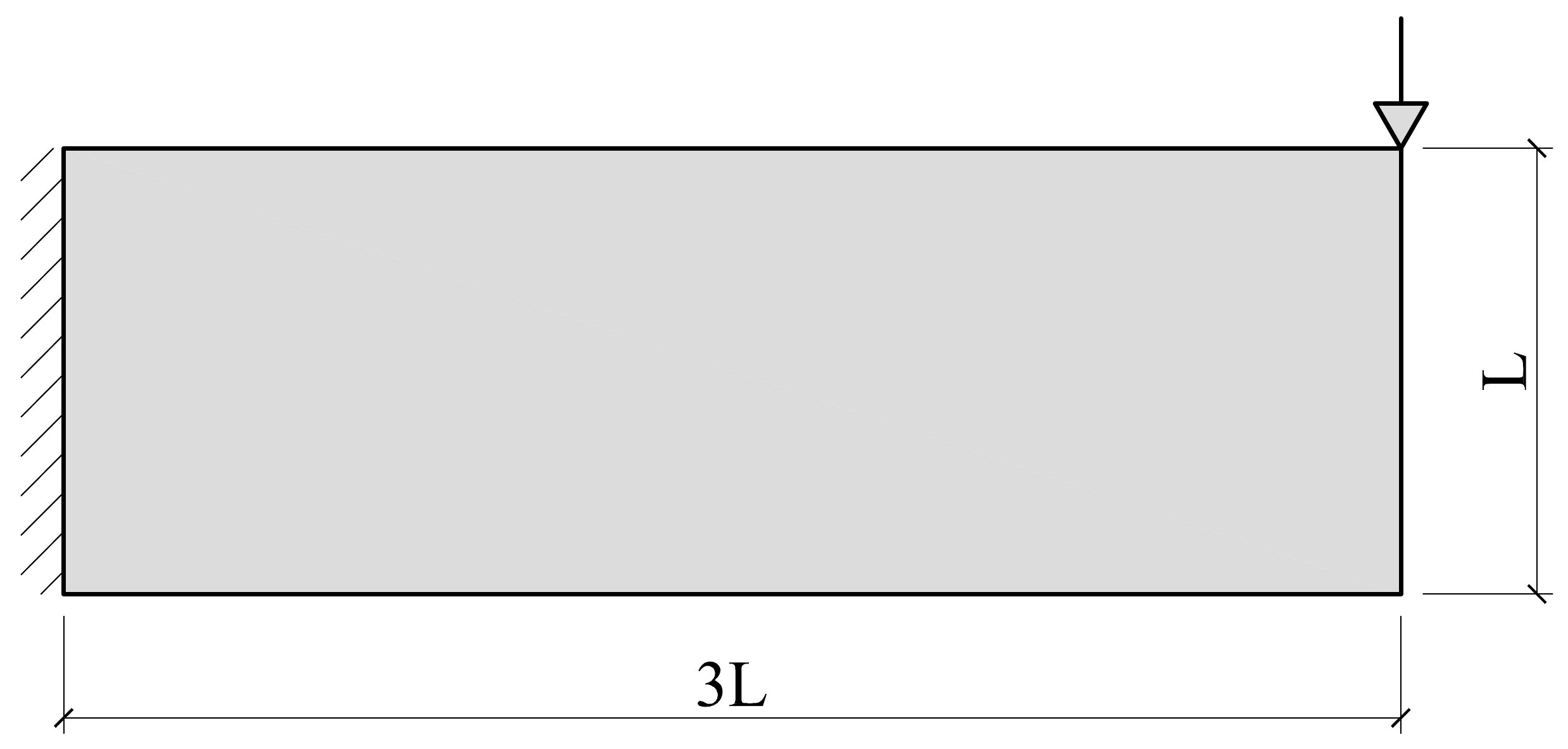} 
    \caption{The cantilever beam.}
    \label{fig:C}
\end{figure}
Figure~\ref{fig:C} displays the cantilever beam, a two-dimensional domain used in this example. The beam is fixed on its left side, subject to an vertical load at its upper-right corner, and has the dimensions of $3L \times L$ unit length. Let $\mathcal{X}=[0,3L]\times[0,L]\subset\mathbb{R}^2$ denote the domain in Figure~\ref{fig:C}. Considering a rectangular Cartesian coordinate system and the Einstein summation convention, the governing equations \cite{sadd2020elasticity} of a two-dimensional, infinitesimal strain, linear elastic problem without body force are given as
\begin{align}
    \sigma_{ij,j} &= 0, \nonumber \\
    \sigma_{ij} &= \lambda\epsilon_{kk}\delta_{ij} + 2\mu\epsilon_{ij}, \label{e:stress-strain} \\
    \epsilon_{ij} &= \frac{1}{2}(u_{i,j}+u_{j,i}), \nonumber \\
    u_i(0,x_2) &= 0, \label{e:boundary}
\end{align}
where the $(\bullet)_{,j}$ subscript stands for $\partial (\bullet)/\partial x_j$; $\sigma_{ij}$ is the Cauchy stress tensor, $u_i$ the displacement, $\epsilon_{ij}$ the strain, and $\delta_{ij}$ the Kronecker delta. The boundary condition \eqref{e:boundary} reflects the fixed left side of the beam. In \eqref{e:stress-strain} $\lambda$ and $\mu$ are the Lam\'{e} constants which can be calculated from the Young's modulus $E$ and Poisson's ratio $\nu$ as follows
\begin{align*}
    \lambda = \frac{E\nu}{(1+\nu)(1-2\nu)},\, \mu = \frac{E}{2(1+\nu)}.
\end{align*}
\begin{figure}[htb]
     \centering
     \begin{subfigure}[b]{0.495\textwidth}
         \centering
         \includegraphics[width=3in]{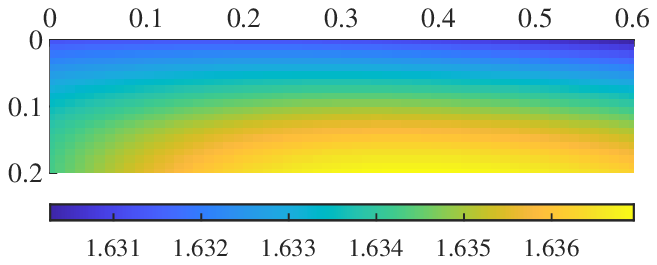}
         \caption{High-fidelity model}
         \label{f:Ex-2:ss_E_high}
     \end{subfigure}
     \begin{subfigure}[b]{0.495\textwidth}
         \centering
         \includegraphics[width=3in]{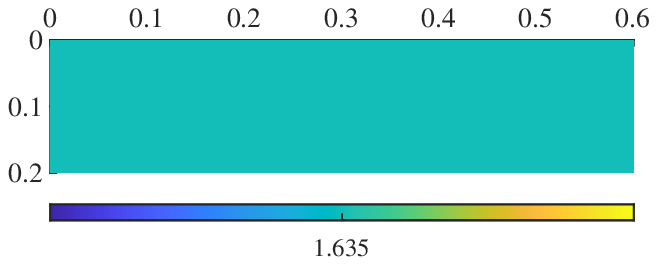}
         \caption{Low-fidelity model}
         \label{f:Ex-2:ss_E_low}
     \end{subfigure}
     \caption{Young's modulus for sample $\{\xi_1,\ldots,\xi_{n_{\text{KL}}}\}=\{0.5,\ldots,0.5\}$.}
     \label{f:Ex-2:ss_E}
\end{figure}
\begin{figure}[htb]
     \centering
     \begin{subfigure}[b]{0.495\textwidth}
         \centering
         \includegraphics[width=3in]{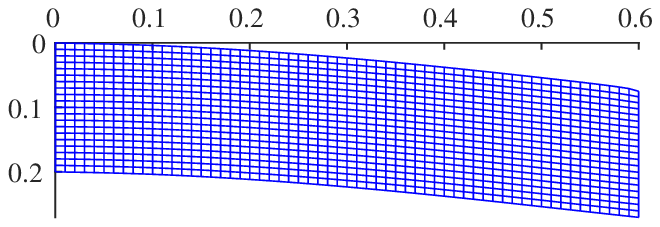}
         \caption{High-fidelity model}
         \label{f:Ex-2:ss_high}
     \end{subfigure}
     \begin{subfigure}[b]{0.495\textwidth}
         \centering
         \includegraphics[width=3in]{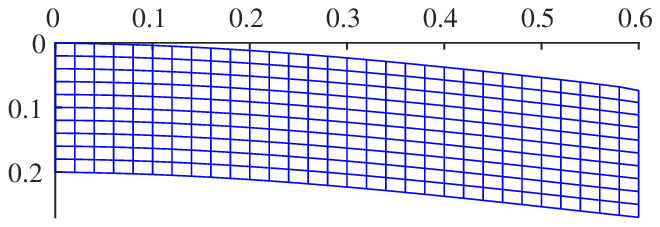}
         \caption{Low-fidelity model}
         \label{f:Ex-2:ss_low}
     \end{subfigure}
     \caption{Mesh deformation for sample $\{\xi_1,\ldots,\xi_{n_{\text{KL}}}\}=\{0.5,\ldots,0.5\}$.}
     \label{f:Ex-2:ss}
\end{figure}
\begin{figure}[ht]t
    \centering
    \includegraphics[width=2.5in]{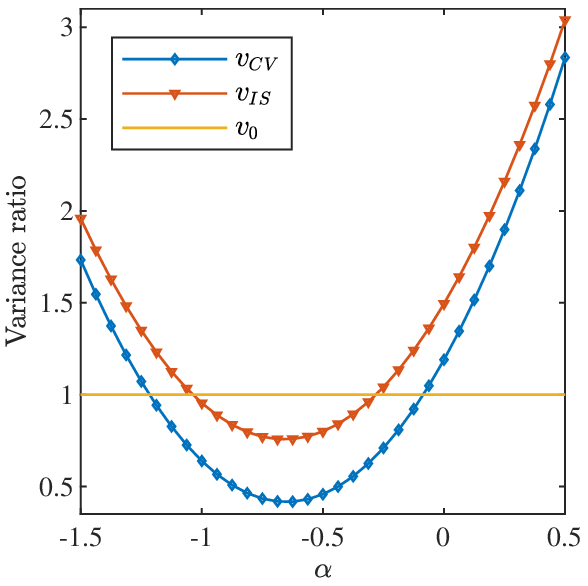}
    \caption{$\alpha$ vs. variance ratios for the cantilever beam.}
    \label{f:Ex-2:ve_v0}
\end{figure}
The equations are solved using the finite element method, whose mesh is assembled from square, linear, plane stress elements with unit thickness. Those elements are made of an isotropic, linear elastic material characterized by $E$ and $\nu$. 

We treat Young's modulus as the uncertain quantity in this problem. Young's modulus must be positive and finitely bounded and we model it as uniform random field transformed from a two-dimensional Gaussian random field with the following covariance
\begin{equation}
    \begin{aligned}
     K(\bm{s},\bm{t}) = 
\exp\left(\frac{-(s_1-t_1)^2}{r_1^2}\right)\exp\left(\frac{-(s_2-t_2)^2}{r_2^2}\right) = K_1(s_1,t_1)K_2(s_2,t_2) \textrm{ for }
    \bm{s},\bm{t} \in \mathcal{X},
    \label{e:covariance}
    \end{aligned}
\end{equation}
where $r_1$ and $r_2$ are the correlation lengths in the two coordinate directions. The transformation \cite{grigoriu_simulation_1998} is performed by
\begin{equation}
    \begin{aligned}
    E(\bm{x},\omega) = F^{-1} \circ [\Phi\left(y(\bm{x},\omega)\right)] \text{ for } \bm{x} \in \mathcal{X} \text{ and } \omega \in \Omega,
    \end{aligned}
\end{equation}
where $F^{-1}$ is the inverse of a prescribed CDF, $\Phi\left(y(\bm{x},\omega)\right)$ the standard normal CDF, $y(\bm{x},\omega)$ a stationary zero-mean Gaussian random field, and $E(\bm{x},\omega)$ the Young's modulus. Choosing $F^{-1}$ as the inverse of an uniform CDF, we have
\begin{equation}
    \begin{aligned}
        E(\bm{x},\omega) = a + (b-a)\Phi(y(\bm{x},\omega)),
        \label{e:modulus}
    \end{aligned}
\end{equation}
where $a$ and $b$ are the lower and upper bound of the uniform distribution. In practice the random field $y(\bm{x},\omega)$ needs to be discretized by an appropriate method such as Expansion Optimal Linear Estimator \cite{li_optimal_1993} and polynomial chaos expansion \cite{ghanem2003stochastic,xiu2010numerical}. Among such methods, the Karhunen-Lo\`{e}ve (KL) expansion minimizes the mean squared error \cite{alexanderian_brief_2015,wang_2008} resulting in the smallest number of terms in the expansion to obtain a required accuracy \cite{sudret_2000}. The KL expansion of the random field $y(\bm{x},\omega)$ is given as
\begin{equation}
    y(\bm{x},\omega) = \sum_{i=1}^\infty\sqrt{\lambda_i}\xi_i(\omega)\psi_i(\bm{x}),
    \label{e:kl_1}
\end{equation}
where $\xi_i(\omega)$ are standard normal random variables. The eigenvalues $\lambda_i$ and the corresponding orthogonal eigenfunctions $\psi_i(\bm{x})$ are solutions of the following eigenvalue problem:
\begin{equation}
    \int_{\mathcal{D}}K(\bm{s},\bm{t})\psi_i(\bm{t})d\bm{t}=\lambda_i \psi_i(\bm{s}),
    \label{e:fredholm}
\end{equation}
where $K(\bm{s},\bm{t})$ is the covariance function of the random field. Here two practical issues have to be considered. First, the infinite KL expansion in \eqref{e:kl_1} are truncated to be computable. Second, the integral equation \eqref{e:fredholm} is not trivial to solve on high-dimensional domain. Thus, the separability of the covariance function \eqref{e:covariance} is exploited leading to separable eigenvalues and eigenfunctions \cite{wang_2008} as follows
\begin{equation}
    \begin{aligned}
        \lambda_i &= \lambda_{i_1} \lambda_{i_2}, \\
        \psi_i(\bm{x}) &= \psi_{i_1}(x_1) \psi_{i_2}(x_2),
    \end{aligned}
\end{equation}
where $\lambda_{i_j}$ and $\psi_{i_j}(x_j)$, $j=\{1,2\}$, are the eigenvalues and eigenfunctions of the integral equation \eqref{e:fredholm} using the covariance function $K_j(s_j,t_j)$ in \eqref{e:covariance}; and $\lambda_i$ are arranged in decreasing order.

In this example the following numerical values are used: Poisson's ratio $\nu=0.3$; correlation lengths $r_1=60$ and $r_2=20$; beam dimensions $3L \times L = 0.6 \times 0.2$; bounds of the Young's modulus $a=1$ and $b=2.$

 The limit state functions are given as $g_i(u) = l_i - u,i=\{0,1\}$, where $u$ is the vertical displacement at the load application point. The thresholds $l_i$ are chosen so that the failure probabilities $P_f(g_i(u) < 0)$, which are computed using $10^6$ reference samples of $\{\xi_i(\omega)\}_{i=1}^{n_{\text{KL}}}$, are small and different, i.e., $l_0=118.923,l_1=108.510,P_f(g_0(u) < 0)=0.001173$ and $P_f(g_1(u) < 0)=0.022428$.

 We utilize two different meshing schemes as multi-fidelity models, i.e., the high-fidelity model corresponds to the mesh of $60 \times 20$ elements, and the low-fidelity model the mesh of $30 \times 10$ elements. The models also have different, but shared, sources of ucnertainty. For the high-fidelity model: $u$ depends on the Young's modulus $E(\bm{x},\omega)$ which in turn is transformed according to \eqref{e:modulus} using a truncated KL expansion $y(\bm{x},\omega)$ from \eqref{e:kl_1}; and the stiffness matrix of each finite element is calculated using the values of the eigenfunctions at the element's centroid coordinates. For the low-fidelity model, the Young's modulus is just a uniform random variable $E(\omega) = a + (b-a)\Phi(y(\omega))$, where
\begin{equation}
    \begin{aligned}
        y(\omega) = \sum_{i=1}^{n_{\text{KL}}}\sqrt{\lambda_i}\xi_i(\omega)\psi_i(\bar{\bm{x}})
    \end{aligned}
\end{equation}
and $\bar{\bm{x}}$ is the coordinates of the beam centroid, i.e., $\bar{\bm{x}}=(0.3,0.1)$. For both models, the number of $\lambda_{i_j}$ and $\psi_{i_j}(x_j)$ is $n_j=5$, and the number of $\lambda_i$ and $\psi_i(\bm{x})$ is $n_{\text{KL}}=10$. 

Figures \ref{f:Ex-2:ss_E} and \ref{f:Ex-2:ss} show the Young's modulus and the mesh deformation of the beam for the high- and low-fidelity model when all the random variables $\xi_i(\omega)$ take the value of $0.5$; the actual displacements are scaled down by $10^3$ for visualization. It is clear from Figure \ref{f:Ex-2:ss_E} that the Young's modulus is spatially variable for the high-fidelity model while it is only a constant for the low-fidelity model.

The EM algorithm uses inputs $n_s=5000, \tau=0.1$, and $k_{\text{init}}=5$. The baseline estimator $\mathcal{Q}^{\text{MFIS}}_{\hat{q},n}$ takes $n_0=4 \times 10^5$ samples to compute its variance. We first find that $\mathcal{C}_{\text{HF}} \approx 11 \mathcal{C}_{\text{LF}}$, where $\mathcal{C}_{\text{HF}}$ and $\mathcal{C}_{\text{LF}}$ are the costs to produce one evaluation of the LF and HF model, respectively. Then, we allocate the samples to each model as shown in Table~\ref{tbl:ex2_sample}, using the cost ratio such that the total cost of model evaluations is equal to that of the baseline estimator.
\begin{table}[h]
    \centering
    \begin{threeparttable}
    \begin{tabular}{ c c c c }
    \toprule
        & $v_0$ & $v_{\text{CV}}$ & $v_{\text{IS}}$ \\
    \cmidrule{1-4}
        $n_{\text{HF}}$ & 400000 & 366666 & 293333 \\
    \cmidrule{1-4}
        $n_{\text{LF}}$ & & 366666 & 1173332 \\
    \bottomrule
    \end{tabular}
    \caption{Sample allocation for the second example.}
    \label{tbl:ex2_sample}
    \end{threeparttable}
\end{table}

The results of this example are presented in Figure~\ref{f:Ex-2:ve_v0}. Again, our estimators perform better than the MFIS one and the CV estimator shows favorable result compared to the ACV scheme.

\subsection{Clamped Mindlin plate in bending}
\label{ss:ex-3}
The last example is a modified version of that provided in \cite{peherstorfer_multifidelity_2016}, where authors derive the MFIS estimator. While the MFIS estimator is able to achieve impressive speedups of up to several orders of magnitude compared to the MC method, we demonstrate that further variance reduction is still possible by employing control variates.

\begin{figure}[!htb]
    \centering
    \includegraphics[scale=0.8]{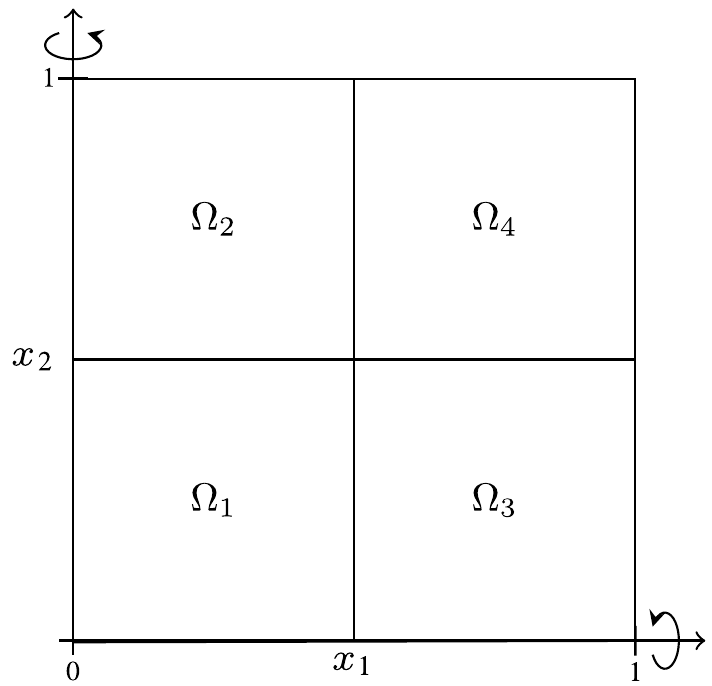}
    \caption{The Mindlin plate \cite{peherstorfer_multifidelity_2016}.}
    \label{fig:P}
\end{figure}
\begin{figure}[!htb]
     \centering
     \begin{subfigure}[b]{0.495\textwidth}
         \centering
         \includegraphics[width=2.5in]{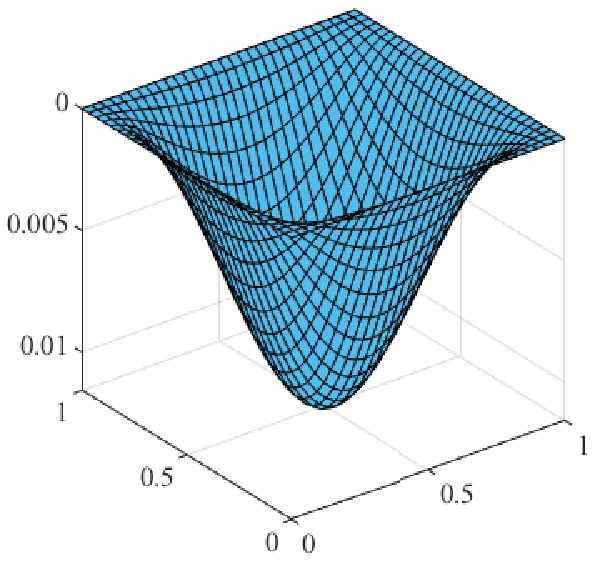}
         \caption{High-fidelity model}
         \label{f:Ex-3:ss_high}
     \end{subfigure}
     \begin{subfigure}[b]{0.495\textwidth}
         \centering
         \includegraphics[width=2.5in]{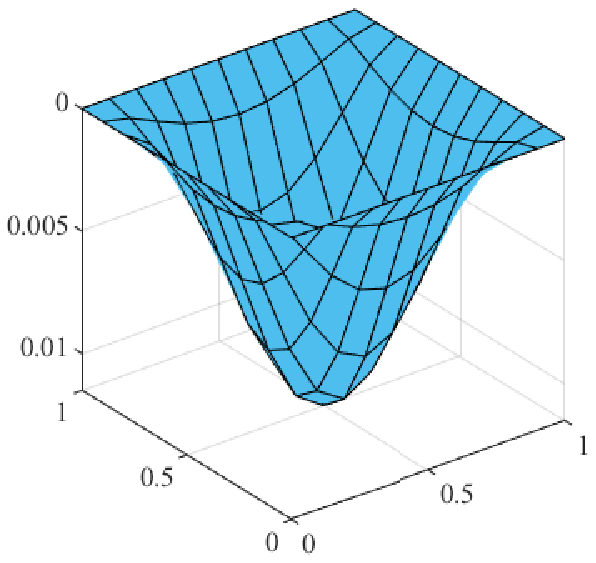}
         \caption{Low-fidelity model}
         \label{f:Ex-3:ss_low}
     \end{subfigure}
     \caption{Plate deformation for sample $\{h_i,s_i\}_{i=1}^4=\{0.05,1\}$.}
     \label{f:Ex-3:ss}
\end{figure}
Let $\mathcal{Y}=[0,1]^2$ denote the domain of the clamped Mindlin plate in Figure \ref{fig:P}; $\mathcal{E}$ the four edges of the plate; $\theta_1$ and $\theta_2$ the rotations of the normal to the plate middle plane with respect to the axes $x_2$ and $x_1$, respectively; and $w$ the displacement of the middle plane in the (out-of-plane) $x_3$-direction. The governing equations of the Mindlin's theory of plate in static equilibrium are given as
\begin{align}
    M_{ij,j}-Q_i &= 0, \nonumber \\
    Q_{i,i}+s &= 0, \nonumber \\
    \theta_1(\mathcal{E})=0,\theta_2(\mathcal{E})&=0,w(\mathcal{E})=0, \label{e:plate_bc}
\end{align}
where $M_{ij}$ and $Q_i$ are the moment and shear resultants, i.e., $M_{11}=D\left(\dfrac{\partial \theta_1}{\partial x_1} + \nu\dfrac{\partial \theta_2}{\partial x_2}\right)$, $Q_1=\kappa G h \left( \theta_1 + \dfrac{\partial w}{\partial x_1} \right)$, etc.; $D$, $G$, $\kappa$, $h$ and $\nu$ are the bending rigidity, shear modulus, shear correction factor, plate thickness and Poisson's ratio, respectively; and $s$ is a transverse load. We refer to \cite{lim_canonical_2003,reddy2006theory} for a complete treatment of the plate theory with detailed equations. The boundary conditions \eqref{e:plate_bc} state that for a clamped plate there are no rotations and displacement along the edges of the plate which is made of an isotropic, linear elastic material with Young's modulus $E=10^4$ and Poisson's ratio $\nu=0.3$. Following the settings in \cite{peherstorfer_multifidelity_2016}, we divide the plate into four regions $\{\mathcal{Y}_i\}_{i=1}^4$, each of which has a random thickness $h_i$ and is subject to a random load $s_i$. Both $h_i$ and $s_i$ are uniformly distributed, i.e., $\{h_i\}_{i=1}^4 \sim \mathcal{U}(0.05,0.1)$ and $\{s_i\}_{i=1}^4 \sim \mathcal{U}(1,2)$. According to the first-order shear deformation theory \cite{lim_canonical_2003} the shear correction factor is $\kappa=5/6$.
\begin{figure}[h]
    \centering
    \includegraphics[width=2.5in]{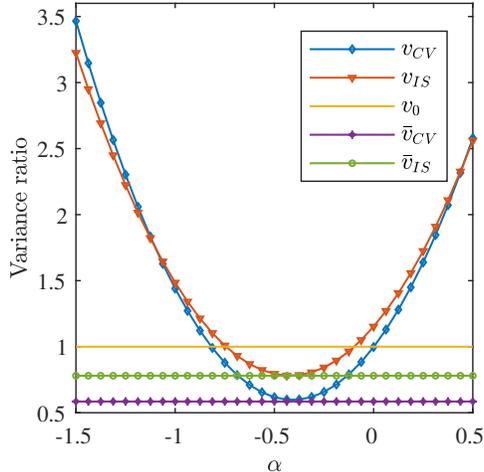}
    \caption{$\alpha$ vs. variance ratios for the Mindlin plate.}
    \label{f:Ex-3:Plate}
\end{figure}

In this example the high-fidelity model has the mesh size of $30 \times 30$ square bilinear isoparametric elements (i.e., $Q4$ elements) while the low-fidelity model utilizes $10 \times 10$ elements. The Matlab code from \cite[Chapter~12]{ferreira2008matlab} is adopted for finite element analysis. Figure \ref{f:Ex-3:ss} shows the plate deformation for both models using $\{h_i,s_i\}_{i=1}^4=\{0.05,1\}$. The limit state functions are defined as $g_i(\bm{z}) = l_i - w_i^{\text{c}}(\bm{z}),i=\{0,1\},$ where $\bm{z}=\left\{\{h_i\}_{i=1}^4,\{s_i\}_{i=1}^4\right\} \subset \mathbb{R}^8$ is a realization of input random variables and $w_i^{\text{c}}(\bm{z})$ is the $x_3$-direction displacement of the plate centroid. 

The parameters used in the EM algorithm are given as $n_s=5000, \tau=0.1$ and $k_{\text{init}}=5$. The baseline estimator $\mathcal{Q}^{\text{MFIS}}_{\hat{q},n}$ is evaluated using $n_0=4 \times 10^5$ samples for its variance. As shown in Table~\ref{tbl:ex3_sample}, the cost of other estimators is guaranteed to be equal to the baseline estimator by appropriate sample allocation using the number of ensembles $K=1000$ and the empirical formula $\mathcal{C}_{\text{HF}} \approx 37 \mathcal{C}_{\text{LF}}$.
\begin{table}[h]
    \centering
    \begin{threeparttable}
    \begin{tabular}{ c c c c }
    \toprule
        & $v_0$ & $v_{\text{CV}}$ & $v_{\text{IS}}$ \\
    \cmidrule{1-4}
        $n_{\text{HF}}$ & 400000 & 389473 & 356626 \\
    \cmidrule{1-4}
        $n_{\text{LF}}$ & & 389473 & 1604817 \\
    \bottomrule
    \end{tabular}
    \caption{Sample allocation for the third example.}
    \label{tbl:ex3_sample}
    \end{threeparttable}
\end{table}

The variance ratios are shown in Figure~\ref{f:Ex-3:Plate}. First, we stress that since the MFIS estimator does not take into account the correlations among models, it is not able to exploit the multi-fidelity modeling to the fullest extent, and, thus, our estimators have achieved clear advantages over it. Second, comparing $v_{\text{IS}}$ with $v_{\text{CV}}$, the estimator becomes less efficient with estimated weight as concluded in~\cite{gorodetsky_generalized_2020}. Third, as explained in the first example, $\bar{v}_{\text{CV}}$ and $\bar{v}_{\text{IS}}$ tightly follows the minimum of $v_{\text{CV}}$ and $v_{\text{IS}}$, respectively. Finally, it is noted that even with sub-optimal weight the performance of these estimators is still better than the MFIS estimator over a significant range of weights, as specified by Theorem~\ref{thm:range_ACV}.
\section{Conclusions}
\label{s:conclusions}

In this paper we have developed an ensemble estimator for approximate control variate schemes that provides a mechanism to estimate unknown covariances, in addition to unknown means.
 This contribution has allowed us to provide theoretical bounds on the number of samples required to guarantee certain variance reduction. Furthermore, this guarantee depends upon a correlation coefficient that is problem dependent. The second contribution is applying the framework in the context of importance sampling. We show that the approximate control variate can further reduce the variance compared to the MFIS approach described in~\cite{peherstorfer_multifidelity_2016}. We are able to achieve considerably greater variance reduction with this approach on several problems of computational mechanics.

Future work will seek to study values of the correlation coefficient that can be derived from the underlying problem---similar to what is done in multi-level MC for multi-fidelity models arising in varying discretizations. Another line of work is extending the importance sampling techniques to include several low-fidelity models. One challenge to overcome is effectively balancing the cost of computing a biasing distribution using the low-fidelity model and the variance reduction that it provides. Indeed, the current approaches to multi-fidelity importance sampling tend to ignore this computational aspect. Finally, as an effective variance reduction technique, our estimators have extensive application potential in expensive UQ problems including optimization under uncertainty, and in particular, reliability-based and robust optimization.

\section{Acknowledgements}
We thank Gianluca Geraci, John Jakeman, Mike Eldred, and Teresa Portone for helpful discussions surrounding this paper. This project was funded by the Sandia National Laboratories LDRD program.

\appendix
\numberwithin{equation}{section}

\section{The expectation-maximization algorithm}
\label{app:EM}
The Kullback-Leibler (KL) divergence between the optimal density $q^*(\bm{z})$ and the approximate density $\hat{q}(\bm{z})$ is
\begin{align*}
    \mathcal{D}(q^*(\bm{z}),\hat{q}(\bm{z})) &= 
    \mathbb{E}_{q^*}\left[ \ln\left( \frac{q^*(\bm{z})}{\hat{q}(\bm{z})} \right)\right] =
    \int_{\mathbb{R}^d} q^*(\bm{z}) \ln (q^*(\bm{z})) d\bm{z} - \int_{\mathbb{R}^d} q^*(\bm{z}) \ln (\hat{q}(\bm{z})) d\bm{z}.
\end{align*}
The CE method aims to minimize the KL divergence to find the unknowns in the GMM \eqref{e:GMM}. Let us gather the unknown parameters into the vector  $\bm{v}=\{\pi_i,\bm{\mu}_i,\bm{\Sigma}_i; i=1,2,\ldots,k\}$. Then the optimization problem can be equivalently written as
\begin{align}
    \min_{\bm{v}} \mathcal{D}(q^*(\bm{z}),\hat{q}(\bm{z};\bm{v})) = \max_{\bm{v}} \int_{\mathbb{R}^d} q^*(\bm{z}) \ln (\hat{q}(\bm{z};\bm{v})) d\bm{z},
    \label{e:min_KL}
\end{align}
because the first term of the KL divergence is independent of $\hat{q}.$
where $\hat{q}(\bm{z};\bm{v})$ stresses the presence of parameters in the GMM \eqref{e:GMM}. Replacing $Y_0(\bm{z})$ with $\mathcal{I}_{\mathcal{G}}(\bm{z})$ in \eqref{e:IS_density}, and inserting this expression into \eqref{e:min_KL} we obtain\footnote{$\mu_0$ is not needed since it is a constant.}
\begin{align}
    \min_{\bm{v}} \mathcal{D}(q^*(\bm{z}),\hat{q}(\bm{z};\bm{v})) = \max_{\bm{v}} \int_{\mathbb{R}^d} \mathcal{I}_{\mathcal{G}}(\bm{z}) p(\bm{z}) \ln (\hat{q}(\bm{z};\bm{v})) d\bm{z}.
    \label{e:min_KL_2}
\end{align}
 Another sampling density $\hat{q}(\bm{z};\bm{w})$, which has the same form as $\hat{q}(\bm{z};\bm{v})$ but with a different parameter vector $\bm{w}$, is introduced to facilitate the optimization algorithm
\begin{align}
    \min_{\bm{v}} \mathcal{D}(q^*(\bm{z}),\hat{q}(\bm{z};\bm{v})) &= \max_{\bm{v}} \int_{\mathbb{R}^d} \mathcal{I}_{\mathcal{G}}(\bm{z}) \ln (\hat{q}(\bm{z};\bm{v})) \widehat{W}(\bm{z};\bm{w}) \hat{q}(\bm{z};\bm{w}) d\bm{z} = \max_{\bm{v}} \mathbb{E}_{\hat{q}(\bm{z};\bm{w})} \left[\mathcal{I}_{\mathcal{G}}(\bm{z}) \ln (\hat{q}(\bm{z};\bm{v})) \widehat{W}(\bm{z};\bm{w})\right] \nonumber \\
    &\approx \max_{\bm{v}} \frac{1}{n_s} \sum_{i=1}^{n_s} \mathcal{I}_{\mathcal{G}}(\bm{z}_i) \ln (\hat{q}(\bm{z}_i;\bm{v})) \widehat{W}(\bm{z}_i;\bm{w}),
    \label{e:ce_with_samples}
\end{align}
where $\bm{z}_i,i=1,2,\ldots,n_s,$ are samples drawn from $\hat{q}(\bm{z};\bm{w}),$ and $\widehat{W}(\bm{z};\bm{w}) = \dfrac{p(\bm{z})}{\hat{q}(\bm{z};\bm{w})}$. It is noted that by choosing an appropriate joint likelihood $h(\hat{\bm{z}}|\bm{v}) = \prod_{i \in \hat{n}_s} \hat{q}(\bm{z}_i|\bm{v})^{\widehat{W}(\bm{z}_i)} $, the optimization problem \eqref{e:ce_with_samples} is equivalent to the maximum log-likelihood estimation (MLE) problem
\begin{align}
    \hat{\bm{v}} = \argmax_{\bm{v}} \ln (h(\hat{\bm{z}}|\bm{v})) = \argmax_{\bm{v}} \sum_{i \in \hat{n}_s} \ln (\hat{q}(\bm{z}_i|\bm{v})) \widehat{W}(\bm{z}_i),
\end{align}
where $\hat{\bm{z}}=\{\bm{z}_i\}_{i \in \hat{n}_s}$, $\hat{n}_s = \{i \in n_s: \mathcal{I}_{\mathcal{G}}(\bm{z}_i) \neq 0 \}$. The EM algorithm is an iterative method to find $\hat{\bm{v}}$, which is also the solution of \eqref{e:ce_with_samples}. Let $\hat{\bm{v}}^{(m)}$ denote the parameter vector at $m^{\text{th}}$ iteration. In \cite{Borman_EM} it is shown that 
\begin{align}
    \hat{\bm{v}}^{(m+1)} = \argmax_{\bm{v}} \mathbb{E}_{\bm{X}|\hat{\bm{z}},\hat{\bm{v}}^{(m)}} \left[ \ln h(\bm{x}|\bm{v}) \right] = \argmax_{\bm{v}} Q(\bm{v}|\hat{\bm{v}}^{(m)}),
    \label{e:update_EM}
\end{align}
where $\bm{X}$ is the complete data set. Using the GMM \eqref{e:GMM} and $n_{\tau}=\tau \hat{n}_s$, where $\tau \in ]0,1[$ is a fixed value to identify the intermediate failure domains, we have \cite{geyer_cross_2019,Chen_emdemystified} 
\begin{align}
    Q(\bm{v}|\hat{\bm{v}}^{(m)}) &= \sum_{i=1}^{n_{\tau}} \widehat{W}(\bm{z}_i) \sum_{j=1}^k \gamma_{ij}^{(m)}\ln (\pi_j \mathcal{N}(\bm{z}_i;\bm{\mu}_j,\bm{\Sigma}_j)) \label{e:Q_vv} \\
    \gamma_{ij}^{(m)} &= \frac{\pi_j^{(m)} \mathcal{N}(\bm{z}_i;\bm{\mu}_j^{(m)},\bm{\Sigma}_j^{(m)})}{\sum_{r=1}^k \pi_r^{(m)} \mathcal{N}(\bm{z}_i;\bm{\mu}_r^{(m)},\bm{\Sigma}_r^{(m)})} \label{e:gamma}.s
\end{align}
The updating scheme is then derived by solving the following optimization problem
\begin{equation}
    \begin{aligned}
        \max_{\bm{v}}     && & Q(\bm{v}|\hat{\bm{v}}^{(m)}) \\
        \text{subject to} && & \sum_{i=1}^k \pi_i = 1, \\
                          && & \pi_i \geq 0, i=1,2,\ldots,k, \\
                          && & \bm{\Sigma}_i \succ 0, i=1,2,\ldots,k,
    \end{aligned}
\end{equation}
where the first and second constraint enforce $\pi_i$ to be probabilities, and the last constraint is meant to render the covariance matrices positive definite. Using \eqref{e:Q_vv}, \eqref{e:gamma}, and the method of Lagrange multipliers, the updating equations \cite{geyer_cross_2019,Chen_emdemystified} are listed below.
\begin{align*}
    \nu_j^{(m)} &= \sum_{i=1}^{n_{\tau}} \widehat{W}(\bm{z}_i)\gamma_{ij}^{(m)}, \\
    \pi_j^{(m+1)} &= \frac{\nu_j^{(m)}}{\sum_{r=1}^{k} \nu_r^{(m)}}, \\
    \bm{\mu}_j^{(m+1)} &= \frac{\sum_{i=1}^{n_{\tau}} \widehat{W}(\bm{z}_i)\gamma_{ij}^{(m)} \bm{z}_i}{\nu_j^{(m)}}, \\
    \bm{\Sigma}_j^{(m+1)} &= \frac{\sum_{i=1}^{n_{\tau}} \widehat{W}(\bm{z}_i)\gamma_{ij}^{(m)}(\bm{z}_i - \bm{\mu}_j^{(m+1)})(\bm{z}_i - \bm{\mu}_j^{(m+1)})^{T}}{\nu_j^{(m)}}, j=1,2,\ldots,k.
\end{align*}

\section{Proof of Proposition~\ref{pro:alpha_sample_cov}}
\label{app:alpha_acv_sample}
We rewrite $\hat{\bm{\mathsf{C}}}$ and $\hat{\bm{\mathsf{c}}}$ using~\eqref{e:cv_sample_covariance} and~\eqref{e:centered_data_matrix} as
\begin{align}
    \hat{\bm{\mathsf{C}}} &= \frac{\bm{D}\bm{D}^{\text{T}}}{K-1}, \label{e:covar_mat_hat_D}  \\
    \hat{\bm{\mathsf{c}}} &= \frac{\bm{D}\left( \bm{Q}(Y_0) - \bar{\mathcal{Q}}_n(Y_0) \bm{1}_K \right)}{K-1}. \label{e:covar_vec_hat_D}
\end{align}
Substituting~\eqref{e:covar_mat_hat_D} and~\eqref{e:covar_vec_hat_D} into~\eqref{e:cv_sample_alpha},~\eqref{e:alpha_star_is}, and \eqref{e:alpha_star_mf}, we obtain
\begin{align}
    \ubar{\bm{\alpha}}_{\text{CV}} &= -\hat{\bm{\mathsf{C}}}^{-1}\hat{\bm{\mathsf{c}}} = -\left(\frac{\bm{D}\bm{D}^{\text{T}}}{K-1} \right)^{-1} \frac{\bm{D}\left( \bm{Q}(Y_0) - \bar{\mathcal{Q}}_n(Y_0)\bm{1}_K \right)}{K-1} = -\left(\bm{D}\bm{D}^{\text{T}}\right)^{-1} \bm{D}\left( \bm{Q}(Y_0) - \bar{\mathcal{Q}}_n(Y_0)\bm{1}_K \right), 
\end{align}
and
\begin{align}
    \ubar{\bm{\alpha}}_e &= -\left[\hat{\bm{\mathsf{C}}} \circ \bm{F}^e \right]^{-1}\left[\text{diag}\left(\bm{F}^e\right) \circ \hat{\bm{\mathsf{c}}} \right]
    = -\left[\frac{\bm{D}\bm{D}^{\text{T}}}{K-1} \circ \bm{F}^e \right]^{-1}\left[\text{diag}\left(\bm{F}^e\right) \circ \frac{\bm{D}\left( \bm{Q}(Y_0) - \bar{\mathcal{Q}}_n(Y_0) \bm{1}_K \right)}{K-1} \right] \nonumber \\
    &= -\left[\left(\bm{D}\bm{D}^{\text{T}}\right) \circ \bm{F}^e \right]^{-1}\left[\text{diag}\left(\bm{F}^e\right) \circ \left(\bm{D}\left( \bm{Q}(Y_0) - \bar{\mathcal{Q}}_n(Y_0) \bm{1}_K \right)\right) \right],
\end{align}
respectively.
Because $\bm{D}$ is the centered data matrix, we obtain
\begin{equation}
    \begin{aligned}
        \bm{D}\bm{1}_K &= \begin{bmatrix}
            \mathcal{Q}_n^{(1)}(Y_1) - \bar{\mathcal{Q}}_n(Y_1) & \mathcal{Q}_n^{(2)}(Y_1) - \bar{\mathcal{Q}}_n(Y_1) & \ldots & \mathcal{Q}_n^{(K)}(Y_1) - \bar{\mathcal{Q}}_n(Y_1) \\
            \mathcal{Q}_n^{(1)}(Y_2) - \bar{\mathcal{Q}}_n(Y_2) & \mathcal{Q}_n^{(2)}(Y_2) - \bar{\mathcal{Q}}_n(Y_2) & \ldots & \mathcal{Q}_n^{(K)}(Y_2) - \bar{\mathcal{Q}}_n(Y_2) \\
            \ldots & \ldots & \ldots & \ldots \\
            \mathcal{Q}_n^{(1)}(Y_M) - \bar{\mathcal{Q}}_n(Y_M) & \mathcal{Q}_n^{(2)}(Y_M) - \bar{\mathcal{Q}}_n(Y_M) & \ldots & \mathcal{Q}_n^{(K)}(Y_M) - \bar{\mathcal{Q}}_n(Y_M)
        \end{bmatrix}
        \begin{bmatrix}
            1 \\
            1 \\
            \ldots \\
            1
        \end{bmatrix} = \bm{0}_M.
        \label{e:D_1_K}
    \end{aligned}
\end{equation}
Since $\bm{D}\bm{1}_K=\bm{0}_M$, $\ubar{\bm{\alpha}}_{\text{CV}}$ and $\ubar{\bm{\alpha}}_e$ become
\begin{align*}
    \ubar{\bm{\alpha}}_{\text{CV}} &= -\left( \bm{D}\bm{D}^{\text{T}} \right)^{-1} \bm{D} \bm{Q}(Y_0), \\
    \ubar{\bm{\alpha}}_e &= -\left[\left(\bm{D}\bm{D}^{\text{T}}\right) \circ \bm{F}^e \right]^{-1}\left[\text{diag}\left(\bm{F}^e\right) \circ \left(\bm{D} \bm{Q}(Y_0)\right) \right].
\end{align*}

\section{Useful matrix algebra identities}
\label{app:identities}
The below proposition provides several identities to manipulate the expressions of the variances of the ensemble estimators.
\begin{proposition}
    \label{pro:identities}
    Let $\bm{A} \in \mathbb{R}^{M \times M}$, $\bm{B} \in \mathbb{R}^{M \times K}$, $\bm{V} \in \mathbb{R}^{K \times M}$ and $\bm{v} \in \mathbb{R}^K$. Then, we have the following identities
    \begin{align}
        \text{diag}(\bm{A}) \circ (\bm{B}\bm{v}) &= \left(\left(\text{diag}(\bm{A}) \otimes \bm{1}_K\right) \circ \bm{B} \right)\bm{v}, \label{e:diag_v} \\
        \left(\left(\text{diag}(\bm{A}) \otimes \bm{1}_K\right) \circ \bm{B}\right)\bm{V} &= \left(\text{diag}(\bm{A}) \otimes \bm{1}_M\right) \circ \left(\bm{B}\bm{V}\right), \\
        \bm{V}^{\text{T}}\left(\bm{B} \circ \left(\text{diag}(\bm{A}) \otimes \bm{1}_K\right) \right)^{\text{T}} &= \left(\bm{V}^{\text{T}}\bm{B}^{\text{T}}\right) \circ \left(\text{diag}(\bm{A}) \otimes \bm{1}_M\right)^{\text{T}}, \\
        \left(\left(\text{diag}(\bm{A}) \otimes \bm{1}_K\right) \circ \bm{A}\right) \left(\left(\text{diag}(\bm{A}) \otimes \bm{1}_K\right) \circ \bm{A}\right)^{\text{T}} &= \left(\bm{A}\bm{A}^{\text{T}}\right) \circ \left(\text{diag}(\bm{A}) \otimes \bm{1}_M\right) \circ \left(\text{diag}(\bm{A}) \otimes \bm{1}_M\right)^{\text{T}}, \label{e:diag_diag_T}
    \end{align}
    where $\circ$ is the Hadamard product and $\otimes$ is the outer product.
\end{proposition}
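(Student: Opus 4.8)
The plan is to prove all four identities by direct element-wise comparison, exploiting the fact that the structured matrix $\text{diag}(\bm{A}) \otimes \bm{1}_K$ has a particularly simple entry pattern. Writing $\bm{d} = \text{diag}(\bm{A}) = (A_{11}, \ldots, A_{MM})^{\text{T}}$, the outer product $\bm{d} \otimes \bm{1}_K = \bm{d}\bm{1}_K^{\text{T}}$ is the $M \times K$ matrix whose $(i,j)$ entry equals $A_{ii}$ for \emph{every} column index $j$; equivalently, taking its Hadamard product with a matrix scales the $i$-th row entrywise by the diagonal entry $A_{ii}$. This single observation is the engine behind all four equalities, each of which merely re-expresses a row-scaling (or a row-and-column scaling) performed before versus after a matrix product.

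First I would establish \eqref{e:diag_v}. The $i$-th entry of the left-hand side is $A_{ii}(\bm{B}\bm{v})_i = A_{ii}\sum_j B_{ij}v_j$, while the matrix $(\text{diag}(\bm{A}) \otimes \bm{1}_K) \circ \bm{B}$ has $(i,j)$ entry $A_{ii}B_{ij}$, so applying it to $\bm{v}$ yields $\sum_j A_{ii}B_{ij}v_j$; the two agree. The second identity is proved identically: the $(i,j)$ entry of $((\text{diag}(\bm{A}) \otimes \bm{1}_K) \circ \bm{B})\bm{V}$ equals $\sum_k A_{ii}B_{ik}V_{kj} = A_{ii}(\bm{B}\bm{V})_{ij}$, which is exactly the $(i,j)$ entry of $(\text{diag}(\bm{A}) \otimes \bm{1}_M) \circ (\bm{B}\bm{V})$. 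The third identity is simply the transpose of the second, so rather than re-deriving it I would transpose both sides and invoke $(\bm{X} \circ \bm{Y})^{\text{T}} = \bm{X}^{\text{T}} \circ \bm{Y}^{\text{T}}$ together with $(\bm{B}\bm{V})^{\text{T}} = \bm{V}^{\text{T}}\bm{B}^{\text{T}}$.

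For the last identity \eqref{e:diag_diag_T} I would set $\bm{P} = (\text{diag}(\bm{A}) \otimes \bm{1}_K) \circ \bm{A}$, so that $P_{ik} = A_{ii}A_{ik}$, and compute $(\bm{P}\bm{P}^{\text{T}})_{ij} = \sum_k A_{ii}A_{ik}A_{jj}A_{jk} = A_{ii}A_{jj}(\bm{A}\bm{A}^{\text{T}})_{ij}$. On the right-hand side, the factor $\text{diag}(\bm{A}) \otimes \bm{1}_M$ supplies the row scaling $A_{ii}$ and its transpose supplies the column scaling $A_{jj}$, giving $(\bm{A}\bm{A}^{\text{T}})_{ij}A_{ii}A_{jj}$ at position $(i,j)$; the two match, completing the proof.

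The computations are entirely routine, so I do not expect a genuine mathematical obstacle. The only point demanding care is dimensional bookkeeping: the same symbol $\text{diag}(\bm{A})$ is tensored with $\bm{1}_K$ in some places and with $\bm{1}_M$ in others, and the argument must track whether a given structured factor is $M \times K$ or $M \times M$ so that every Hadamard product and matrix product is conformable. Fixing the row index $i$ and letting only the dummy summation index vary makes it transparent that the diagonal factors $A_{ii}$ (and, in the last identity, $A_{jj}$) pull outside each sum, which is precisely what drives every identity.
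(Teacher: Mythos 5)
Your proof is correct and follows essentially the same route as the paper: entrywise verification of the first two identities using the observation that Hadamard multiplication by $\text{diag}(\bm{A})\otimes\bm{1}_K$ scales row $i$ by $A_{ii}$, and obtaining the third by transposing the second. The only (immaterial) difference is the last identity \eqref{e:diag_diag_T}, which you verify by a direct entrywise computation of $(\bm{P}\bm{P}^{\text{T}})_{ij}=A_{ii}A_{jj}(\bm{A}\bm{A}^{\text{T}})_{ij}$, whereas the paper derives it by chaining the second and third identities; both are equally valid one-line verifications.
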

\begin{proof}
Let $[\star]_{i(ij)}$ denote an entry of the vector (matrix) $\star$. We prove the first two identities by showing the entries of both sides are equal. Thus,
\begin{align*}
    \left[\text{diag}(\bm{A}) \circ (\bm{B}\bm{v})\right]_i &= a_{ii}\sum_{j=1}^K b_{ij}v_j = \sum_{j=1}^K (a_{ii} b_{ij}) v_j = \left[\left(\left(\text{diag}(\bm{A}) \otimes \bm{1}_K\right) \circ \bm{B} \right)\bm{v}\right]_i \\
    \left[\left(\left(\text{diag}(\bm{A}) \otimes \bm{1}_K\right) \circ \bm{B}\right)\bm{V}\right]_{ij} &= \sum_{k=1}^K a_{ii}b_{ik} v_{kj} = a_{ii} \sum_{k=1}^K b_{ik} v_{kj} = \left[\left(\text{diag}(\bm{A}) \otimes \bm{1}_M\right) \circ \left(\bm{B}\bm{V}\right)\right]_{ij}
\end{align*}
The third identity is proved by transposing both sides of the second one as
\begin{align*}
    \left[\left(\left(\text{diag}(\bm{A}) \otimes \bm{1}_K\right) \circ \bm{B}\right)\bm{V}\right]^{\text{T}} &= \left[\left(\text{diag}(\bm{A}) \otimes \bm{1}_M\right) \circ \left(\bm{B}\bm{V}\right)\right]^{\text{T}} \\
     \bm{V}^{\text{T}}\left(\bm{B} \circ \left(\text{diag}(\bm{A}) \otimes \bm{1}_K\right) \right)^{\text{T}} &= \left(\bm{B}\bm{V}\right)^{\text{T}} \circ \left(\text{diag}(\bm{A}) \otimes \bm{1}_M\right)^{\text{T}} \\
     \bm{V}^{\text{T}}\left(\bm{B} \circ \left(\text{diag}(\bm{A}) \otimes \bm{1}_K\right) \right)^{\text{T}} &= \left(\bm{V}^{\text{T}}\bm{B}^{\text{T}}\right) \circ \left(\text{diag}(\bm{A}) \otimes \bm{1}_M\right)^{\text{T}}
\end{align*}
We find the final identity by applying the second and third one consecutively, and note that the Hadamard product is commutative and associative
\begin{align*}
    \left(\left(\text{diag}(\bm{A}) \otimes \bm{1}_K\right) \circ \bm{A}\right) \left(\left(\text{diag}(\bm{A}) \otimes \bm{1}_K\right) \circ \bm{A}\right)^{\text{T}} &= \left(\text{diag}(\bm{A}) \otimes \bm{1}_M\right) \circ \left(\bm{A}\left(\left(\text{diag}(\bm{A}) \otimes \bm{1}_K\right) \circ \bm{A}\right)^{\text{T}}\right) \\
    &= \left(\text{diag}(\bm{A}) \otimes \bm{1}_M\right) \circ \left(\left(\bm{A}\bm{A}^{\text{T}}\right) \circ \left(\text{diag}(\bm{A}) \otimes \bm{1}_M\right)^{\text{T}}\right) \\
    &= \left(\bm{A}\bm{A}^{\text{T}}\right) \circ \left(\text{diag}(\bm{A}) \otimes \bm{1}_M\right) \circ \left(\text{diag}(\bm{A}) \otimes \bm{1}_M\right)^{\text{T}}.
\end{align*}
\end{proof}

\section{Proof of Proposition~\ref{pro:q_mu_dist}}
\label{app:q_mu_dist}
The assumptions imply that the distributions of $\bar{\bm{\mathcal{Q}}} - \bar{\bm{\mu}}^{\text{CV}}$ and $\bar{\bm{\mathcal{Q}}} - \bar{\bm{\mu}}^{e}$ are multivariate normal. Thus, the proof focuses on finding the means and variances of those distributions. 

It is trivial to show that
\begin{equation*}
    \ee{}{\bar{\bm{\mathcal{Q}}} - \bar{\bm{\mu}}^{\text{CV}}} = \ee{}{\bar{\bm{\mathcal{Q}}} - \bar{\bm{\mu}}^{e}} = \bm{0}_M.
\end{equation*}
\renewcommand{\theenumi}{\alph{enumi}}
\begin{enumerate}
\item We compute the variance of $\bar{\bm{\mathcal{Q}}} - \bar{\bm{\mu}}^{\text{CV}}$ as
\begin{equation*}
    \begin{aligned}
        &\vvar{}{\bar{\bm{\mathcal{Q}}} - \bar{\bm{\mu}}^{\text{CV}}} = \vvar{}{\bar{\bm{\mathcal{Q}}}} \\
        &=
        \begin{bmatrix}
            \cov{}{\dfrac{1}{K} \displaystyle\sum_{j=1}^K \mathcal{Q}_n^{(j)}(Y_1),\dfrac{1}{K} \displaystyle\sum_{j=1}^K \mathcal{Q}_n^{(j)}(Y_1)} & \ldots & \cov{}{\dfrac{1}{K} \displaystyle\sum_{j=1}^K \mathcal{Q}_n^{(j)}(Y_1),\dfrac{1}{K} \displaystyle\sum_{j=1}^K \mathcal{Q}_n^{(j)}(Y_M)} \\
            \ldots & \ldots & \ldots \\
            \cov{}{\dfrac{1}{K} \displaystyle\sum_{j=1}^K \mathcal{Q}_n^{(j)}(Y_M),\dfrac{1}{K} \displaystyle\sum_{j=1}^K \mathcal{Q}_n^{(j)}(Y_1)} & \ldots & \cov{}{\dfrac{1}{K} \displaystyle\sum_{j=1}^K \mathcal{Q}_n^{(j)}(Y_M),\dfrac{1}{K} \displaystyle\sum_{j=1}^K \mathcal{Q}_n^{(j)}(Y_M)}
        \end{bmatrix} \\
        &= \frac{1}{K} \begin{bmatrix}
            \cov{}{\mathcal{Q}_n(Y_1),\mathcal{Q}_n(Y_1)} & \cov{}{\mathcal{Q}_n(Y_1),\mathcal{Q}_n(Y_2)} & \ldots & \cov{}{\mathcal{Q}_n(Y_1),\mathcal{Q}_n(Y_M)} \\
            \ldots & \ldots & \ldots & \ldots \\
            \cov{}{\mathcal{Q}_n(Y_M),\mathcal{Q}_n(Y_1)} & \cov{}{\mathcal{Q}_n(Y_M),\mathcal{Q}_n(Y_2)} & \ldots & \cov{}{\mathcal{Q}_n(Y_M),\mathcal{Q}_n(Y_M)}
        \end{bmatrix}
        = \frac{\bm{\mathsf{C}}}{K}.
    \end{aligned}
\end{equation*}
\item From \cite[Appendix D]{gorodetsky_generalized_2020} and \cite[Appendix E]{gorodetsky_generalized_2020}, we know that
\begin{align*}
    \vvar{}{\bar{\bm{\mathcal{Q}}} - \bar{\bm{\mu}}^{e}} &= \dfrac{\bm{\mathsf{C}} \circ \bm{F}^{e}}{K}.
\end{align*}
\end{enumerate}

\section{Proof of Theorem~\ref{thm:var_estimators}}
\label{app:var_estimators}
The goal of this proposition is to compute $\vvar{}{\bar{\mathcal{Q}}^{\text{CV}}(\ubar{\bm{\alpha}}_{\text{CV}})}$ and $\vvar{}{\bar{\mathcal{Q}}^e(\ubar{\bm{\alpha}}_e)}$ for $e \in \{\text{ACV-IS},\text{ACV-MF}\}$ in terms of $\vvar{}{\mathcal{Q}_n(Y_0)}$ and some known quantities, e.g., the covariances amongst models, the number of ensembles, etc.

We begin with an auxiliary result that will be used in the rest of the proof. Recall the law of total expectation 
\begin{align*}
    \ee{\mathsf{X}}{\mathsf{X}} = \ee{\mathsf{Y}}{\ee{\mathsf{X}}{\mathsf{X} \middle\vert \mathsf{Y}}},
\end{align*}
where $\mathsf{X}$ and $\mathsf{Y}$ are some random variables in the same probability space. We apply it to calculate the variances of $\bar{\mathcal{Q}}^{\text{CV}}(\ubar{\bm{\alpha}}_{\text{CV}})$ and $\bar{\mathcal{Q}}^e(\ubar{\bm{\alpha}}_e)$ by setting $\mathsf{X} = \left( \bar{\mathcal{Q}}^{\text{CV}}(\ubar{\bm{\alpha}}_{\text{CV}}) - \ee{}{\bar{\mathcal{Q}}^{\text{CV}}(\ubar{\bm{\alpha}}_{\text{CV}})} \right)^2$, $\mathsf{Y} = \bm{\mathcal{Q}}$ and $\mathsf{X} = \left( \bar{\mathcal{Q}}^e(\ubar{\bm{\alpha}}_e) - \ee{}{\bar{\mathcal{Q}}^e(\ubar{\bm{\alpha}}_e)} \right)^2$, $\mathsf{Y} = \tilde{\bm{\mathcal{Q}}}^e$ to obtain
\begin{align*}
    \vvar{}{\bar{\mathcal{Q}}^{\text{CV}}(\ubar{\bm{\alpha}}_{\text{CV}})} &= \ee{\bm{\mathcal{Q}}}{\vvar{}{\bar{\mathcal{Q}}^{\text{CV}}(\ubar{\bm{\alpha}}_{\text{CV}}) \middle\vert \bm{\mathcal{Q}}}}, \\
    \vvar{}{\bar{\mathcal{Q}}^e(\ubar{\bm{\alpha}}_e)} &= \ee{\tilde{\bm{\mathcal{Q}}}^e}{\vvar{}{\bar{\mathcal{Q}}^e(\ubar{\bm{\alpha}}_e) \middle\vert \bm{\mathcal{Q}}}},
\end{align*}
where $\bm{\mathcal{Q}} = \left[\mathcal{Q}_n(Y_1),\mathcal{Q}_n(Y_2),\ldots, \mathcal{Q}_n(Y_M)\right]^{\text{T}}$, $\tilde{\bm{\mathcal{Q}}}^e = \left[\mathcal{Q}_n(Y_1)-\mu_1^e,\mathcal{Q}_n(Y_2)-\mu_2^e,\ldots,\mathcal{Q}_n(Y_M)-\mu_M^e\right]^{\text{T}}$ and $\mu_i^e = \mathcal{Q}_{nr_i}(Y_i)$ for $i=1,2,\ldots,M$. As we can see, the vectors $\bm{\mathcal{Q}}$ and $\tilde{\bm{\mathcal{Q}}}^e$ only involve the low-fidelity models and the expectations with respect to these vectors eliminate the dependence of $\vvar{}{\bar{\mathcal{Q}}^{\text{CV}}(\ubar{\bm{\alpha}}_{\text{CV}})}$ and $\vvar{}{\bar{\mathcal{Q}}^e(\ubar{\bm{\alpha}}_e)}$ on $\{Y_i\}_{i=1}^M$. Since $\vvar{}{\bar{\mathcal{Q}}^{\text{CV}}(\ubar{\bm{\alpha}}_{\text{CV}}) \middle\vert \bm{\mathcal{Q}}}$ is a special case of $\vvar{}{\bar{\mathcal{Q}}^e(\ubar{\bm{\alpha}}_e) \middle\vert \tilde{\bm{\mathcal{Q}}}^e}$ with $\bm{F}^e = \bm{1}_M \otimes \bm{1}_M$, the computation of the later plays a central role in the proof below.

We now begin the main logic of the proof by substituting~\eqref{e:alpha_acv_sample} into~\eqref{e:acv_sample_mean} to obtain
\begin{equation}
    \begin{aligned}
        \bar{\mathcal{Q}}^e(\ubar{\bm{\alpha}}_e) &=  \frac{\bm{1}_K^{\text{T}}}{K} \bm{Q}(Y_0) - \left(\bar{\bm{\mathcal{Q}}} - \bar{\bm{\mu}}^e\right)^{\text{T}} \left[\left(\bm{D}\bm{D}^{\text{T}}\right) \circ \bm{F}^e \right]^{-1}\left[\text{diag}\left(\bm{F}^e\right) \circ \left(\bm{D} \bm{Q}(Y_0)\right) \right] \\
        &=  \left( \frac{\bm{1}_K^{\text{T}}}{K} - \left(\bar{\bm{\mathcal{Q}}} - \bar{\bm{\mu}}^e\right)^{\text{T}} \left[\left(\bm{D}\bm{D}^{\text{T}}\right) \circ \bm{F}^e \right]^{-1}\left[\bm{\mathcal{F}}^e_K \circ \bm{D} \right] \right) \bm{Q}(Y_0) \\
        &= \bm{\mathcal{X}}^{\text{T}} \bm{Q}(Y_0),
    \end{aligned}
    \label{e:x_x_s_y0_acv}
\end{equation}
where $\bm{\mathcal{F}}^e_K = \text{diag}(\bm{F}^e) \otimes \bm{1}_K$. The second line of~\eqref{e:x_x_s_y0_acv} uses the identity~\eqref{e:diag_v}.

Given $\tilde{\bm{\mathcal{Q}}}^e$, $\bm{\mathcal{X}}$ in~\eqref{e:x_x_s_y0_acv} is fixed so that the conditional variance of $\bar{\mathcal{Q}}^e(\ubar{\bm{\alpha}}_e)$ becomes
\begin{equation}
    \begin{aligned}
        \vvar{}{\bar{\mathcal{Q}}^e(\ubar{\bm{\alpha}}_e) \middle\vert \tilde{\bm{\mathcal{Q}}}^e} = \bm{\mathcal{X}}^{\text{T}} \vvar{}{\bm{Q}(Y_0) \middle\vert \tilde{\bm{\mathcal{Q}}}^e} \bm{\mathcal{X}}.
        \label{e:var_q_xTQxb}
    \end{aligned}
\end{equation}
To compute $\vvar{}{\bm{Q}(Y_0) \middle\vert \tilde{\bm{\mathcal{Q}}}^e}$, we utilize the assumption that the vector $\{\mathcal{Q}_n(Y_0),\mathcal{Q}_n(Y_1),\ldots,\mathcal{Q}_n(Y_M),\mu_1^e,\ldots,\mu_M^e\}$ has a multivariate normal distribution, and so the distribution of $\mathcal{Q}_n(Y_0)$ conditional on $\tilde{\bm{\mathcal{Q}}}^e = \{\mathcal{Q}_n(Y_1)-\mu_1^e,\ldots,\mathcal{Q}_n(Y_M)-\mu_M^e\}$ is also multivariate normal \cite[Theorem 5.3]{h2019applied} with variance
\begin{equation}
    \begin{aligned}
        \vvar{}{\mathcal{Q}_n(Y_0) \middle\vert \tilde{\bm{\mathcal{Q}}}^e} &= \left( 1- \cov{}{\tilde{\bm{\mathcal{Q}}}^e,\mathcal{Q}_n(Y_0)}^{\text{T}} \frac{\cov{}{\tilde{\bm{\mathcal{Q}}}^e,\tilde{\bm{\mathcal{Q}}}^e}^{-1}}{\vvar{}{\mathcal{Q}_n(Y_0)}} \cov{}{\tilde{\bm{\mathcal{Q}}}^e,\mathcal{Q}_n(Y_0)} \right) \vvar{}{\mathcal{Q}_n(Y_0)} \\
        &= (1-R^2_e) \vvar{}{\mathcal{Q}_n(Y_0)}.
    \end{aligned}
\end{equation}
Then, the conditional variance of $\bm{Q}(Y_0)$ given $\tilde{\bm{\mathcal{Q}}}^e$ is
\begin{equation}
    \begin{aligned}
        &\vvar{}{\bm{Q}(Y_0) \middle\vert \tilde{\bm{\mathcal{Q}}}^e} \\
        &=
        \begin{bmatrix}
            \vvar{}{\mathcal{Q}_n^{(1)}(Y_0) \middle\vert \tilde{\bm{\mathcal{Q}}}^e} & \cov{}{\left( \mathcal{Q}_n^{(1)}(Y_0),\mathcal{Q}_n^{(2)}(Y_0) \right) \middle\vert \tilde{\bm{\mathcal{Q}}}^e} & \ldots & \cov{}{\left( \mathcal{Q}_n^{(1)}(Y_0),\mathcal{Q}_n^{(K)}(Y_0) \right) \middle\vert \tilde{\bm{\mathcal{Q}}}^e} \\
            \cov{}{\left( \mathcal{Q}_n^{(2)}(Y_0),\mathcal{Q}_n^{(1)}(Y_0) \right) \middle\vert \tilde{\bm{\mathcal{Q}}}^e} & \vvar{}{\mathcal{Q}_n^{(2)}(Y_0) \middle\vert \tilde{\bm{\mathcal{Q}}}^e} & \ldots & \cov{}{\left( \mathcal{Q}_n^{(2)}(Y_0),\mathcal{Q}_n^{(K)}(Y_0) \right) \middle\vert \tilde{\bm{\mathcal{Q}}}^e} \\
            \ldots & \ldots & \ldots & \ldots \\
            \cov{}{\left( \mathcal{Q}_n^{(K)}(Y_0),\mathcal{Q}_n^{(1)}(Y_0) \right) \middle\vert \tilde{\bm{\mathcal{Q}}}^e} & \cov{}{\left( \mathcal{Q}_n^{(K)}(Y_0),\mathcal{Q}_n^{(2)}(Y_0) \right) \middle\vert \tilde{\bm{\mathcal{Q}}}^e}  & \ldots & \vvar{}{\mathcal{Q}_n^{(K)}(Y_0) \middle\vert \tilde{\bm{\mathcal{Q}}}^e}
        \end{bmatrix} \\
        &=
        \begin{bmatrix}
            \vvar{}{\mathcal{Q}_n^{(1)}(Y_0) \middle\vert \tilde{\bm{\mathcal{Q}}}^e} & 0 & \ldots & 0 \\
            0 & \vvar{}{\mathcal{Q}_n^{(2)}(Y_0) \middle\vert \tilde{\bm{\mathcal{Q}}}^e} & \ldots & 0 \\
            \ldots & \ldots & \ldots & \ldots \\
            0 & 0  & \ldots & \vvar{}{\mathcal{Q}_n^{(K)}(Y_0) \middle\vert \tilde{\bm{\mathcal{Q}}}^e}
        \end{bmatrix} \\
        &= \vvar{}{\mathcal{Q}_n(Y_0) \middle\vert \tilde{\bm{\mathcal{Q}}}^e} \bm{I} = (1-R^2_e) \vvar{}{\mathcal{Q}_n(Y_0)} \bm{I},
        \label{e:cond_var_SY0b}
    \end{aligned}
\end{equation}
where $\bm{I} \in \mathbb{R}^{K \times K}$ is the identity matrix. The second equality arises due to the i.i.d assumption of each of the $K$ batches.

Substituting~\eqref{e:cond_var_SY0b} into~\eqref{e:var_q_xTQxb}, we obtain
\begin{equation}
    \begin{aligned}
        \vvar{}{\bar{\mathcal{Q}}^e(\ubar{\bm{\alpha}}_e) \middle\vert \tilde{\bm{\mathcal{Q}}}^e} = \vvar{}{\mathcal{Q}_n(Y_0)}(1-R^2_e) \bm{\mathcal{X}}^{\text{T}}\bm{\mathcal{X}}
    \end{aligned}
\end{equation}
The expression $\bm{\mathcal{X}}^{\text{T}}\bm{\mathcal{X}}$ can be expanded as
\begin{equation}
    \begin{aligned}
        &\bm{\mathcal{X}}^{\text{T}}\bm{\mathcal{X}} =
        \left( \frac{\bm{1}_K^{\text{T}}}{K} - \left(\bar{\bm{\mathcal{Q}}} - \bar{\bm{\mu}}^e\right)^{\text{T}} \left[\left(\bm{D}\bm{D}^{\text{T}}\right) \circ \bm{F}^e \right]^{-1}\left[\bm{\mathcal{F}}^e_K \circ \bm{D} \right] \right) \left( \frac{\bm{1}_K^{\text{T}}}{K} - \left(\bar{\bm{\mathcal{Q}}} - \bar{\bm{\mu}}^e\right)^{\text{T}} \left[\left(\bm{D}\bm{D}^{\text{T}}\right) \circ \bm{F}^e \right]^{-1}\left[\bm{\mathcal{F}}^e_K \circ \bm{D} \right] \right)^{\text{T}} \\
        &= \left( \frac{\bm{1}_K^{\text{T}}}{K} - \left(\bar{\bm{\mathcal{Q}}} - \bar{\bm{\mu}}^e\right)^{\text{T}} \left[\left(\bm{D}\bm{D}^{\text{T}}\right) \circ \bm{F}^e \right]^{-1}\left[\bm{\mathcal{F}}^e_K \circ \bm{D} \right] \right) \left( \frac{\bm{1}_K}{K} - \left[\bm{\mathcal{F}}^e_K \circ \bm{D} \right]^{\text{T}} \left[\left(\bm{D}\bm{D}^{\text{T}}\right) \circ \bm{F}^e \right]^{-1} \left(\bar{\bm{\mathcal{Q}}} - \bar{\bm{\mu}}^e\right) \right) \\
        &= \frac{1}{K} - \frac{\bm{1}_K^{\text{T}}\left[\bm{\mathcal{F}}^e_K \circ \bm{D} \right]^{\text{T}}}{K}\mathcal{A} - \mathcal{A}^{\text{T}} \frac{\left[\bm{\mathcal{F}}^e_K \circ \bm{D} \right]\bm{1}_K}{K} + \mathcal{A}^{\text{T}} \left[\bm{\mathcal{F}}^e_K \circ \bm{D} \right] \left[\bm{\mathcal{F}}^e_K \circ \bm{D} \right]^{\text{T}} \mathcal{A},
        \label{e:var_is_mf_1}
    \end{aligned}
\end{equation}
where $\mathcal{A} = \left[\left(\bm{D}\bm{D}^{\text{T}}\right) \circ \bm{F}^e \right]^{-1} \left(\bar{\bm{\mathcal{Q}}} - \bar{\bm{\mu}}^e\right)$. Using the identity~\eqref{e:diag_v}, we obtain
\begin{align*}
    \left[\bm{\mathcal{F}}^e_K \circ \bm{D} \right]\bm{1}_K = \left[ \left(\text{diag}(\bm{F}^e) \otimes \bm{1}_K\right) \circ \bm{D} \right] \bm{1}_K = \text{diag}(\bm{F}^e) \circ (\bm{D}\bm{1}_K) &= \bm{0}_M, \\
    \bm{1}_K^{\text{T}}\left[\bm{\mathcal{F}}^e_K \circ \bm{D} \right]^{\text{T}} &= \bm{0}_M^{\text{T}},
\end{align*}
which simplify~\eqref{e:var_is_mf_1} as
\begin{equation}
    \begin{aligned}
        \bm{\mathcal{X}}^{\text{T}}\bm{\mathcal{X}} &= \frac{1}{K} + \mathcal{A}^{\text{T}} \left[\bm{\mathcal{F}}^e_K \circ \bm{D} \right] \left[\bm{\mathcal{F}}^e_K \circ \bm{D} \right]^{\text{T}} \mathcal{A} \\
        &= \frac{1}{K} + \left(\bar{\bm{\mathcal{Q}}} - \bar{\bm{\mu}}^e\right)^{\text{T}} \left[\left(\bm{D}\bm{D}^{\text{T}}\right) \circ \bm{F}^e \right]^{-1} \left[\bm{\mathcal{F}}^e_K \circ \bm{D} \right] \left[\bm{\mathcal{F}}^e_K \circ \bm{D} \right]^{\text{T}}
        \left[\left(\bm{D}\bm{D}^{\text{T}}\right) \circ \bm{F}^e \right]^{-1}
        \left(\bar{\bm{\mathcal{Q}}} - \bar{\bm{\mu}}^e\right) \\
        &= \frac{1}{K} + \left(\bar{\bm{\mathcal{Q}}} - \bar{\bm{\mu}}^e\right)^{\text{T}} \left[\left(\bm{D}\bm{D}^{\text{T}}\right) \circ \bm{F}^e \right]^{-1}
        \left[ \left(\bm{D}\bm{D}^{\text{T}}\right) \circ \bm{\mathcal{F}}^e_M \circ \left(\bm{\mathcal{F}}^e_M\right)^{\text{T}} \right]
        \left[\left(\bm{D}\bm{D}^{\text{T}}\right) \circ \bm{F}^e \right]^{-1}
        \left(\bar{\bm{\mathcal{Q}}} - \bar{\bm{\mu}}^e\right)
        \label{e:var_is_mf_2}
    \end{aligned}
\end{equation}
where the last equality of~\eqref{e:var_is_mf_2} applies the identity~\eqref{e:diag_diag_T}. Thus, Theorem~\ref{thm:var_estimators}b is proved as
\begin{equation}
    \begin{aligned}
        \vvar{}{\bar{\mathcal{Q}}^e(\ubar{\bm{\alpha}}_e)} = \;&\ee{\tilde{\bm{\mathcal{Q}}}^e}{\vvar{}{\bar{\mathcal{Q}}^e(\ubar{\bm{\alpha}}_e) \middle\vert \tilde{\bm{\mathcal{Q}}}^e}} = \vvar{}{\mathcal{Q}_n(Y_0)}(1-R^2_e) \\
        &\times \left( \frac{1}{K} + \ee{\tilde{\bm{\mathcal{Q}}}^e}{\left(\bar{\bm{\mathcal{Q}}} - \bar{\bm{\mu}}^e\right)^{\text{T}} \left[\left(\bm{D}\bm{D}^{\text{T}}\right) \circ \bm{F}^e \right]^{-1}
        \left[ \left(\bm{D}\bm{D}^{\text{T}}\right) \circ \bm{\mathcal{F}}^e_M \circ \left(\bm{\mathcal{F}}^e_M\right)^{\text{T}} \right]
        \left[\left(\bm{D}\bm{D}^{\text{T}}\right) \circ \bm{F}^e \right]^{-1}
        \left(\bar{\bm{\mathcal{Q}}} - \bar{\bm{\mu}}^e\right)} \right)
    \end{aligned}
\end{equation}
To deduce the result of Theorem~\ref{thm:var_estimators}a, we replace $\bar{\mathcal{Q}}^e(\ubar{\bm{\alpha}}_e)$, $R^2_e$, $\tilde{\bm{\mathcal{Q}}}^e$ and $\bar{\bm{\mu}}^e$ with $\bar{\mathcal{Q}}^{\text{CV}}(\ubar{\bm{\alpha}}_{\text{CV}})$, $R^2$, $\bm{\mathcal{Q}}$ and $\bar{\bm{\mu}}^{\text{CV}}$, respectively, and note that $\bm{F}^e = \bm{\mathcal{F}}^e_M = \bm{1}_M \otimes \bm{1}_M$ in the CV case; hence,
\begin{align}
    \vvar{}{\bar{\mathcal{Q}}^{\text{CV}}(\ubar{\bm{\alpha}}_{\text{CV}})}
    &= \vvar{}{\mathcal{Q}_n(Y_0)}\left(1-R^2\right) \left( \frac{1}{K} + \ee{\bm{\mathcal{Q}}}{\left(\bar{\bm{\mathcal{Q}}} - \bar{\bm{\mu}}^{\text{CV}}\right)^{\text{T}} \left( \bm{D}\bm{D}^{\text{T}}\right)^{-1} \left(\bar{\bm{\mathcal{Q}}} - \bar{\bm{\mu}}^{\text{CV}}\right)} \right).
\end{align}

\section{Proof of Theorem~\ref{thm:var_cv_sample_weight}}
\label{app:var_cv_sample_weight}

Proposition~\ref{pro:q_mu_dist} suggests that the expressions inside the expectation operators in Theorem~\ref{thm:var_estimators} may follow the Hotelling's $T^2$ distributions. It is indeed the case for~\eqref{e:var_q_cv}, while extra assumptions on the ACV-IS and ACV-MF schemes are needed to establish the distribution in~\eqref{e:var_q_acv}. The means of the Hotelling's $T^2$ distributions are then computed explicitly to prove Theorem~\ref{thm:var_cv_sample_weight}. We note that~\eqref{e:var_ratio_cv} is a special case of~\eqref{e:var_ratio_acv}, and so we prove Theorem~\ref{thm:var_cv_sample_weight}b first.

The proof strategy is to simplify the expression inside the expectation operator in~\eqref{e:var_q_acv}
\begin{equation*}
    \left(\bar{\bm{\mathcal{Q}}} - \bar{\bm{\mu}}^e\right)^{\text{T}} \left[\left(\bm{D}\bm{D}^{\text{T}}\right) \circ \bm{F}^e \right]^{-1}
    \left[ \left(\bm{D}\bm{D}^{\text{T}}\right) \circ \bm{\mathcal{F}}^e_M \circ \left(\bm{\mathcal{F}}^e_M\right)^{\text{T}} \right]
    \left[\left(\bm{D}\bm{D}^{\text{T}}\right) \circ \bm{F}^e \right]^{-1}
    \left(\bar{\bm{\mathcal{Q}}} - \bar{\bm{\mu}}^e\right)
\end{equation*}
using the extra assumptions~\eqref{e:assm_acv_is} and~\eqref{e:assm_acv_mf} on the ACV-IS and ACV-MF schemes. First, an identity is added into the middle term
\begin{equation}
    \begin{aligned}
        \left(\bm{D}\bm{D}^{\text{T}}\right) \circ \bm{\mathcal{F}}^e_M \circ \left(\bm{\mathcal{F}}^e_M\right)^{\text{T}} = \left(\bm{D}\bm{D}^{\text{T}}\right) \circ \bm{F}^e \circ \left(\bm{F}^e\right)^{\circ (-1)} \circ \bm{\mathcal{F}}^e_M \circ \left(\bm{\mathcal{F}}^e_M\right)^{\text{T}}
        \label{e:middle_term}
    \end{aligned}
\end{equation}
where $\left(\bm{F}^e\right)^{\circ (-1)}$
is the Hadamard inverse of $\bm{F}^e$, i.e.,
\begin{equation}
    \begin{aligned}
        \left[\left(\bm{F}^e\right)^{\circ (-1)}\right]_{ij} = \frac{1}{f^e_{ij}},
    \end{aligned}
\end{equation}
and $[\star]_{ij}$ denote an entry of the matrix $\star$. We recall from~\eqref{e:elem_acv_is} and~\eqref{e:elem_acv_mf} that $f^e_{ij}$ depends on the ratios $r_i$ and $r_j$ which must be positive for any meaningful settings. Eventually, using either ACV-IS or ACV-MF makes $r_i$ and $r_j$ greater than 1. Thus, in practice $f^e_{ij} \neq 0$ and the Hadamard inverse of $\bm{F}^e$ exists. 

For the ACV-IS scheme we then have the terms
\begin{equation}
    \begin{aligned}
        \left[\left(\bm{F}^{\text{ACV-IS}}\right)^{\circ (-1)} \circ \bm{\mathcal{F}}^{\text{ACV-IS}}_M \circ \left(\bm{\mathcal{F}}^{\text{ACV-IS}}_M\right)^{\text{T}}\right]_{ij} &= \frac{1}{f^{\text{ACV-IS}}_{ij}} f^{\text{ACV-IS}}_{ii} f^{\text{ACV-IS}}_{jj} = \frac{1}{f^{\text{ACV-IS}}_{ii} f^{\text{ACV-IS}}_{jj}} f^{\text{ACV-IS}}_{ii}f^{\text{ACV-IS}}_{jj} = 1 \\
        \left[\left(\bm{F}^{\text{ACV-IS}}\right)^{\circ (-1)} \circ \bm{\mathcal{F}}^{\text{ACV-IS}}_M \circ \left(\bm{\mathcal{F}}^{\text{ACV-IS}}_M\right)^{\text{T}}\right]_{ii} &= \frac{1}{f^{\text{ACV-IS}}_{ii}} f^{\text{ACV-IS}}_{ii} f^{\text{ACV-IS}}_{ii} = f^{\text{ACV-IS}}_{ii} = \frac{r_i-1}{r_i}
    \end{aligned}
\end{equation}
Because we assume $r_i \gg 1$, 
\begin{equation}
    \begin{aligned}
        \left[\left(\bm{F}^{\text{ACV-IS}}\right)^{\circ (-1)} \circ \bm{\mathcal{F}}^{\text{ACV-IS}}_M \circ \left(\bm{\mathcal{F}}^{\text{ACV-IS}}_M\right)^{\text{T}}\right]_{ii} &= 1, \\
        \left(\bm{F}^{\text{ACV-IS}}\right)^{\circ (-1)} \circ \bm{\mathcal{F}}^{\text{ACV-IS}}_M \circ \left(\bm{\mathcal{F}}^{\text{ACV-IS}}_M\right)^{\text{T}} &= \bm{1}_M \otimes \bm{1}_M.
        \label{e:f_fm_fm_is}
    \end{aligned}
\end{equation}
For the ACV-MF scheme,
\begin{equation}
    \begin{aligned}
        \left[\left(\bm{F}^{\text{ACV-MF}}\right)^{\circ (-1)} \circ \bm{\mathcal{F}}^{\text{ACV-MF}}_M \circ \left(\bm{\mathcal{F}}^{\text{ACV-MF}}_M\right)^{\text{T}}\right]_{ij} &= \frac{1}{f^{\text{ACV-MF}}_{ij}} f^{\text{ACV-MF}}_{ii} f^{\text{ACV-MF}}_{jj} \\
        &= \dfrac{1}{\dfrac{\min(r_i,r_j)-1}{\min(r_i,r_j)}} \frac{r_i-1}{r_i} \frac{r_j-1}{r_j} 
        = \frac{\max(r_i,r_j)-1}{\max(r_i,r_j)} \\
        \left[\left(\bm{F}^{\text{ACV-MF}}\right)^{\circ (-1)} \circ \bm{\mathcal{F}}^{\text{ACV-MF}}_M \circ \left(\bm{\mathcal{F}}^{\text{ACV-MF}}_M\right)^{\text{T}}\right]_{ii} &= \frac{1}{f^{\text{ACV-MF}}_{ii}} f^{\text{ACV-MF}}_{ii} f^{\text{ACV-MF}}_{ii} = f^{\text{ACV-MF}}_{ii} = \frac{r_i-1}{r_i}
    \end{aligned}
\end{equation}
Because we assume $r_i = r$,
\begin{equation}
    \begin{aligned}
        \left[\left(\bm{F}^{\text{ACV-MF}}\right)^{\circ (-1)} \circ \bm{\mathcal{F}}^{\text{ACV-MF}}_M \circ \left(\bm{\mathcal{F}}^{\text{ACV-MF}}_M\right)^{\text{T}}\right]_{ij(ii)} &= \frac{r-1}{r}, \\
        \left(\bm{F}^{\text{ACV-MF}}\right)^{\circ (-1)} \circ \bm{\mathcal{F}}^{\text{ACV-MF}}_M \circ \left(\bm{\mathcal{F}}^{\text{ACV-MF}}_M\right)^{\text{T}} &= \frac{r-1}{r} \left( \bm{1}_M \otimes \bm{1}_M \right).
        \label{e:f_fm_fm_mf}
    \end{aligned}
\end{equation}
Substitute~\eqref{e:f_fm_fm_is} and~\eqref{e:f_fm_fm_mf} into~\eqref{e:middle_term}
\begin{align}
    \left(\bm{D}\bm{D}^{\text{T}}\right) \circ \bm{\mathcal{F}}^{\text{ACV-IS}}_M \circ \left(\bm{\mathcal{F}}^{\text{ACV-IS}}_M\right)^{\text{T}} &= \left(\bm{D}\bm{D}^{\text{T}}\right) \circ \bm{F}^{\text{ACV-IS}} \label{e:sim_middle_term_IS} \\
    \left(\bm{D}\bm{D}^{\text{T}}\right) \circ \bm{\mathcal{F}}^{\text{ACV-MF}}_M \circ \left(\bm{\mathcal{F}}^{\text{ACV-MF}}_M\right)^{\text{T}} &= \frac{r-1}{r} \left(\bm{D}\bm{D}^{\text{T}}\right) \circ \bm{F}^{\text{ACV-MF}} \label{e:sim_middle_term_MF}
\end{align}
Substitute~\eqref{e:sim_middle_term_IS} and~\eqref{e:sim_middle_term_MF} into~\eqref{e:var_q_acv}
\begin{equation}
    \begin{aligned}
        \vvar{}{\bar{\mathcal{Q}}^e(\ubar{\bm{\alpha}}_e)} = \vvar{}{\mathcal{Q}_n(Y_0)}(1-R^2_e) \left( \frac{1}{K} + a(e)\ee{\tilde{\bm{\mathcal{Q}}}^e}{\left(\bar{\bm{\mathcal{Q}}} - \bar{\bm{\mu}}^e\right)^{\text{T}}
        \left[\left(\bm{D}\bm{D}^{\text{T}}\right) \circ \bm{F}^e \right]^{-1}
        \left(\bar{\bm{\mathcal{Q}}} - \bar{\bm{\mu}}^e\right)} \right)
        \label{e:acv_var_tilde}
    \end{aligned}
\end{equation}
Here the expectation of $\left(\bar{\bm{\mathcal{Q}}} - \bar{\bm{\mu}}^e\right)^{\text{T}}
\left[\left(\bm{D}\bm{D}^{\text{T}}\right) \circ \bm{F}^e \right]^{-1}
\left(\bar{\bm{\mathcal{Q}}} - \bar{\bm{\mu}}^e\right)$ can be computed explicitly because $\bar{\bm{\mathcal{Q}}} - \bar{\bm{\mu}}^e$ has a multivariate normal distribution from~\eqref{e:q_mu_acv_dist} and
\begin{equation}
    \begin{aligned}
    \left(\bar{\bm{\mathcal{Q}}} - \bar{\bm{\mu}}^e\right)^{\text{T}}
    \left[\left(\bm{D}\bm{D}^{\text{T}}\right) \circ \bm{F}^e \right]^{-1}
    \left(\bar{\bm{\mathcal{Q}}} - \bar{\bm{\mu}}^e\right) &= \frac{1}{K(K-1)} \dfrac{\left(\bar{\bm{\mathcal{Q}}} - \bar{\bm{\mu}}^e\right)^{\text{T}}}{\sqrt{\dfrac{1}{K}}} \left[\frac{\left(\bm{D}\bm{D}^{\text{T}}\right) \circ \bm{F}^e}{K-1} \right]^{-1}
    \dfrac{\left(\bar{\bm{\mathcal{Q}}} - \bar{\bm{\mu}}^e\right)}{\sqrt{\dfrac{1}{K}}} \\
    &= \frac{1}{K(K-1)} t^2_{M,K-1},
    \label{e:ACV_t2}
    \end{aligned}
\end{equation}
where $t^2_{M,K-1}$ follows the Hotelling's $T^2$ distribution~\cite[Corollary 5.3]{h2019applied}.

Substituting~\eqref{e:ACV_t2} into~\eqref{e:acv_var_tilde}, the variance of $\bar{\mathcal{Q}}^e(\ubar{\bm{\alpha}}_e)$ becomes
\begin{equation}
    \begin{aligned}
    \vvar{}{\bar{\mathcal{Q}}^e(\ubar{\bm{\alpha}}_e)} &= \vvar{}{\mathcal{Q}_n(Y_0)}(1-R^2_e) \left( \frac{1}{K} + \frac{a(e)}{K(K-1)}\ee{\tilde{\bm{\mathcal{Q}}}^e}{t^2_{M,K-1}} \right) \\
    &= \vvar{}{\mathcal{Q}_n(Y_0)}(1-R^2_e) \left( \frac{1}{K} + \frac{a(e)}{K(K-1)}\frac{(K-1)M}{K-M-2} \right) \\
    &= \frac{\vvar{}{\mathcal{Q}_n(Y_0)}}{K}(1-R^2_e) \left( 1 + \frac{a(e)M}{K-M-2} \right),
    \end{aligned}
\end{equation}
where the second equality uses the expectation of the Hotelling's $T^2$ distribution and the third equality simplifies the result.

Thus,
\begin{align*}
    \vvar{}{\bar{\mathcal{Q}}^e(\ubar{\bm{\alpha}}_e)} &= \frac{\vvar{}{\mathcal{Q}^e(\ubar{\bm{\alpha}}_e)}}{K} = \frac{\vvar{}{\mathcal{Q}_n(Y_0)}}{K}(1-R^2_e) \left( 1 + \frac{a(e)M}{K-M-2} \right) \\
    \frac{\vvar{}{\mathcal{Q}^e(\ubar{\bm{\alpha}}_e)}}{\vvar{}{\mathcal{Q}_n(Y_0)}} &= (1-R^2_e) \left( 1 + \frac{a(e)M}{K-M-2} \right)
\end{align*}
To prove Theorem~\ref{thm:var_cv_sample_weight}a, we replace $\bar{\mathcal{Q}}^e(\ubar{\bm{\alpha}}_e)$, $R^2_e$, $\tilde{\bm{\mathcal{Q}}}^e$ and $\bar{\bm{\mu}}^e$ with $\bar{\mathcal{Q}}^{\text{CV}}(\ubar{\bm{\alpha}}_{\text{CV}})$, $R^2$, $\bm{\mathcal{Q}}$ and $\bar{\bm{\mu}}^{\text{CV}}$, respectively, and note that $\bm{F}^e = \bm{\mathcal{F}}^e_M = \bm{1}_M \otimes \bm{1}_M$ in the CV case; hence,
\begin{align}
    \frac{\vvar{}{\mathcal{Q}^{\text{CV}}(\ubar{\bm{\alpha}}_{\text{CV}})}}{\vvar{}{\mathcal{Q}_n(Y_0)}} = (1-R^2) \left( 1 + \frac{M}{K-M-2} \right).
\end{align}
\section{Proof of Corollary~\ref{thm:lower_bounds}}
\label{app:num_samples}
We seek to find $K$ such that Theorem~\ref{thm:var_cv_sample_weight} guarantees variance reduction
\begin{align}
    &\frac{\vvar{}{\mathcal{Q}^e(\ubar{\bm{\alpha}}_e)}}{\vvar{}{\mathcal{Q}_n(Y_0)}} < y \;\text{for}\;0 < y \leq 1 \text{ and } e \in \{\text{CV},\text{ACV-IS},\text{ACV-MF}\}.
    \label{e:num_samples}
\end{align}
We first solve~\eqref{e:num_samples} for $K$ that satisfies this inequality for the ACV-IS and ACV-MF strategies. We then deduce the corresponding result in the CV case. Thus, for $0 < y \leq 1$ and $e \in \{\text{ACV-IS},\text{ACV-MF}\}$, \eqref{e:num_samples} becomes
\begin{align}
    &\left( 1 + \frac{a(e)M}{K-M-2} \right) (1-R^2_e) < y \iff \frac{(K-M-2) + a(e)M}{K-M-2}(1-R^2_e) < y
    \label{e:ineq_K_1}
\end{align}
Because of the assumption $K > M + 2$, \eqref{e:ineq_K_1} becomes
\begin{align}
    &((K-M-2) + a(e)M)(1-R^2_e) < y(K-M-2) \nonumber \\
    &\iff K(1-y-R^2_e) < (M+2-a(e)M)(1-R^2_e)-y(M+2) \nonumber \\
    &\iff K(1-y-R^2_e) < (M+2)(1-y-R^2_e)-a(e)M(1-R^2_e) \nonumber \\
    &\iff \begin{cases}
        K > M+2 - \dfrac{a(e)M(1-R^2_e)}{1-y-R^2_e} &\mbox{if } y+R^2_e > 1 \\
        K < M+2 - \dfrac{a(e)M(1-R^2_e)}{1-y-R^2_e} &\mbox{if } y+R^2_e < 1
    \end{cases}
    \label{e:ineq_K_2}
\end{align}
Now we show that the case $y + R^2_e < 1$ leads to a contradiction. Specifically, ~\eqref{e:ineq_K_2} implies that $M+2 < K < M+2 - \dfrac{a(e)M(1-R^2)}{1-y-R^2_e}$ so that 
\begin{equation}
   \begin{aligned}
        M+2 < M+2 - \dfrac{a(e)M(1-R^2_e)}{1-y-R^2_e} \iff 0 > \dfrac{a(e)M(1-R^2_e)}{1-y-R^2_e} \iff a(e)(1-R^2_e) < 0
    \end{aligned}
    \label{e:ineq_K_3}
\end{equation}
Since in the ACV-MF scheme the low-fidelity models always use more samples than the high-fidelity model~\cite{gorodetsky_generalized_2020}, then $a(e) > 0$ for $e \in \{\text{ACV-IS},\text{ACV-MF}\}$. Therefore, \eqref{e:ineq_K_3} becomes $1-R^2_e < 0$, which is a contradiction to $0 \leq R^2_e \leq 1$.

We are left with $y + R_{e}^2 > 1$. So, we only have $K > B_e = M+2 - \dfrac{a(e)M(1-R^2_e)}{1-y-R^2_e}$ if $y+R^2_e > 1$ and $K > M+2$. In other words, if $y+R^2_e > 1$ and $K > \max (M+2, B_e )$, then the target variance reduction is obtained.

Substituting the values of $a(e)$ into $B_e$, we obtain
\begin{align*}
    B_{\text{ACV-IS}} &= M+2 - \dfrac{M(1-R^2_{\text{ACV-IS}})}{1-y-R^2_{\text{ACV-IS}}}, \\
    B_{\text{ACV-MF}} &= M+2 - \dfrac{(r-1)M(1-R^2_{\text{ACV-MF}})}{r(1-y-R^2_{\text{ACV-MF}})}.
\end{align*}
In particular, if $y=1$, then
\begin{align*}
    B_{\text{ACV-IS}} &= M+2 + \dfrac{M(1-R^2_{\text{ACV-IS}})}{R^2_{\text{ACV-IS}}} = \frac{M}{R^2_{\text{ACV-IS}}} + 2, \\
    B_{\text{ACV-MF}} &= M+2 + \dfrac{(r-1)M(1-R^2_{\text{ACV-MF}})}{rR^2_{\text{ACV-MF}}} = \frac{r-1}{r}\frac{M}{R^2_{\text{ACV-MF}}} + \frac{M}{r} + 2.
\end{align*}
Similarly, for the CV case, if $y+R^2 > 1$, $K > \max ( M+2, B_{\text{CV}} )$, where
\begin{align*}
    B_{\text{CV}} &= M+2 - \dfrac{M(1-R^2)}{1-y-R^2},
\end{align*}
and if $y=1$,
\begin{align*}
    B_{\text{CV}} &= \frac{M}{R^2} + 2.
\end{align*}

\bibliographystyle{unsrt}
\bibliography{refs}

\end{document}